\documentclass[11pt,a4paper]{article}
\usepackage{amssymb}
\usepackage{amsmath}
\usepackage{amsfonts}
\usepackage{dsfont}
\usepackage{amsthm}
\usepackage{mathrsfs}
\usepackage[hidelinks]{hyperref}
\usepackage{color}
\usepackage[margin=2.41cm]{geometry}
\usepackage[all,cmtip]{xy}
\usepackage[utf8]{inputenc}
\usepackage{graphicx}
\usepackage{varwidth}
\usepackage{comment}
\usepackage{xfrac}

\usepackage{upgreek}
\usepackage{rotating}

\usepackage{tikz}
\usetikzlibrary{shapes.geometric}

\usepackage[shortlabels]{enumitem}



\definecolor{darkred}{rgb}{0.8,0.1,0.1}
\hypersetup{
     colorlinks=false,         
     linkcolor=darkred,
     citecolor=blue,
}

\theoremstyle{plain}
\newtheorem{theo}{Theorem}[section]
\newtheorem{lem}[theo]{Lemma}
\newtheorem{propo}[theo]{Proposition}
\newtheorem{cor}[theo]{Corollary}

\theoremstyle{definition}
\newtheorem{defi}[theo]{Definition}

\newenvironment{ex}
  {\pushQED{\qed}\exx}
  {\popQED\endexx}

\newenvironment{rem}
  {\pushQED{\qed}\remm}
  {\popQED\endremm}

\newenvironment{con}
  {\pushQED{\qed}\conn}
  {\popQED\endconn}

\numberwithin{equation}{section}

\def\nn{\nonumber}

\def\bbR{\mathbb{R}}
\def\bbC{\mathbb{C}}

\def\bbZ{\mathbb{Z}}
\def\bbT{\mathbb{T}}

\def\Hom{\mathrm{Hom}}

\def\Sym{\mathrm{Sym}}

\def\id{\mathrm{id}}

\def\dd{\mathrm{d}}

\def\cc{\mathrm{c}}
\def\tc{\mathrm{tc}}

\def\1{\mathbf{1}}
\def\oone{\mathds{1}}
\def\op{\mathrm{op}}

\def\pr{\mathrm{pr}}

\def\Loc{\mathbf{Loc}}

\def\Ran{\operatorname{Ran}}

\def\CLoc{\mathbf{CLoc}}
\def\Man{\mathbf{Man}}

\def\AQFT{\mathbf{AQFT}}
\def\2AQFT{\mathbf{2AQFT}}
\def\Fun{\mathbf{Fun}}

\def\Alg{\mathbf{Alg}}

\def\Vec{\mathbf{Vec}}

\def\BB{\mathbf{B}}
\def\CC{\mathbf{C}}
\def\DD{\mathbf{D}}

\def\EE{\mathbf{E}}

\def\TT{\mathbf{T}}
\def\Cat{\mathbf{Cat}}
\def\OCat{\mathbf{OrthCat}}

\def\AAA{\mathfrak{A}}
\def\BBB{\mathfrak{B}}
\def\LLL{\mathfrak{L}}
\def\BBB{\mathfrak{B}}

\def\Sol{\mathfrak{Sol}}
\def\CCR{\mathfrak{CCR}}

\def\O{\mathcal{O}}

\def\As{\mathsf{As}}

\def\skl{\mathrm{skl}}

\def\Mor{\mathrm{Mor}}
\def\Embb{\mathrm{Emb}}
\def\Diff{\mathrm{Diff}}

\def\Mi{\mathbb{M}}
\def\Cy{\sfrac{\mathbb{M}}{\bbZ}}

\newcommand\ovr[1]{\overline{#1}}

\def\sk{\vspace{1mm}}

\makeatletter
\let\@fnsymbol\@alph
\makeatother


%


\title{%
A skeletal model for $2$d conformal AQFTs
}

\author{%
Marco Benini$^{1,2,a}$, Luca Giorgetti$^{3,b}$\ and\ 
Alexander Schenkel$^{4,c}$\vspace{4mm}\\
{\small ${}^1$ Dipartimento di Matematica, Universit\`a di Genova,}\\
{\small Via Dodecaneso 35, 16146 Genova, Italy.}\vspace{2mm}\\
{\small ${}^2$ INFN, Sezione di Genova,}\\
{\small Via Dodecaneso 33, 16146 Genova, Italy.}\vspace{2mm}\\
{\small ${}^3$ Dipartimento di Matematica, Universit\`a di Roma Tor Vergata,}\\
{\small Via della Ricerca Scientifica 1, 00133 Roma, Italy.}\vspace{2mm}\\
{\small ${}^4$ School of Mathematical Sciences, University of Nottingham,}\\
{\small University Park, Nottingham NG7 2RD, United Kingdom.}\vspace{4mm}\\
{\small \begin{tabular}{ll}
Email: & ${}^a$~\texttt{benini@dima.unige.it}\\
& ${}^b$~\texttt{giorgett@mat.uniroma2.it}\\
& ${}^c$~\texttt{alexander.schenkel@nottingham.ac.uk}\vspace{2mm}
\end{tabular}
}
}

\date{June 2022}


\begin{document}

\maketitle

\vspace{-7mm}

\begin{abstract}
\noindent A simple model for the localization of the category $\mathbf{CLoc}_2$ of oriented and time-oriented globally hyperbolic conformal Lorentzian $2$-manifolds at all Cauchy morphisms is constructed. This provides an equivalent description of $2$-dimensional conformal algebraic quantum field theories (AQFTs) satisfying the time-slice axiom in terms of only two algebras, one for the $2$-dimensional Minkowski spacetime and one for the flat cylinder, together with a suitable action of two copies of the orientation preserving embeddings of oriented $1$-manifolds. The latter result is used to construct adjunctions between the categories of $2$-dimensional and chiral conformal AQFTs whose right adjoints formalize and generalize Rehren's chiral observables.
\end{abstract}

\vspace{-2mm}

\paragraph*{Keywords:} algebraic quantum field theory, conformal field theory, localization of categories
\vspace{-2mm}

\paragraph*{MSC 2020:} 81Txx, 53C50
\vspace{-1mm}

\renewcommand{\baselinestretch}{0.8}\normalsize
\tableofcontents
\renewcommand{\baselinestretch}{1.0}\normalsize

\newpage


\section{\label{sec:intro}Introduction and summary}
This work is a categorical study of $2$-dimensional conformal 
field theories from the perspective of algebraic quantum 
field theory (AQFT). To better illustrate our main results and 
their significance, let us recall the broader context.
$2$-dimensional conformal AQFTs come essentially in two different flavors,
which are often called {\em chiral} and {\em full}. 
Loosely speaking, a chiral theory is sensitive
to only one of the two light-cone coordinates and thus can be
formalized in terms of a net of algebras on a single light ray $\bbR$,
or on its compactification given by the circle $\bbT=\bbR/\bbZ$. We refer the reader
to e.g.\ \cite{KawahigashiLongo,Kawahigashi} for the precise axiomatic
framework for chiral conformal AQFTs and also to \cite{Henriques}
for a natural coordinate-free formulation. In contrast to this,
a full theory is sensitive to both light-cone coordinates
and formalized in terms of a net of algebras on the Minkowski spacetime $\Mi$,
or on its conformal compactification given by the flat cylinder $\Cy$.
See e.g.\ the review article \cite{RehrenReview} for more details.
Such full conformal AQFTs admit an interesting generalization
in the spirit of locally covariant AQFT \cite{BFV,FewsterVerch},
which was studied first by Pinamonti \cite{Pinamonti}. This generalization
treats all conformal spacetimes on the same footing and formalizes 
a theory in terms of a functor $\AAA : \CLoc_2\to \Alg$
from the category $\CLoc_2$ of oriented and time-oriented globally hyperbolic 
conformal Lorentzian $2$-manifolds to a suitable category $\Alg$ of algebras.
(The choice of the category $\Alg$ depends on the context. 
Popular choices are the category of associative and unital 
$\ast$-algebras, the category of $C^\ast$-algebras or the category of von Neumann algebras.
Our main results in this paper hold true for all these cases and many more, see Definition \ref{def:target}.)
Such functor has to satisfy certain axioms, most notably Einstein
causality and the time-slice axiom. Our work is developed in the locally 
covariant framework for $2$-dimensional conformal AQFTs.
\sk

The traditional description of a theory in terms
of a functor $\AAA : \CLoc_2\to \Alg$ satisfying Einstein causality and 
the time-slice axiom is far from being efficient: 
To each of the infinitely many (isomorphism classes of) 
objects $M\in\CLoc_2$ one has to assign an algebra $\AAA(M)\in\Alg$ 
in a functorial way such that Einstein causality and the time-slice axiom hold true.
Note that such assignment is to a large extent redundant, in particular 
due to the time-slice axiom that demands the algebra map
$\AAA(f) : \AAA(M)\to\AAA(M^\prime)$ to be an isomorphism 
for every Cauchy morphism $f:M\to M^\prime$, 
i.e.\ for every (orientation and time-orientation preserving) conformal embedding whose (causally convex) image $f(M)\subseteq M^\prime$ contains
a Cauchy surface of $M^\prime$. One of the main results of this paper
is Theorem \ref{theo:ordinarytominimal} which provides an equivalent
description of $2$-dimensional conformal AQFTs that strips off all the 
redundant data; we call this the {\em skeletal model}.
Our skeletal description of a $2$-dimensional conformal AQFT
is indeed much more efficient than the traditional one as it consists of 
only two algebras, one for the 2-dimensional Minkowski spacetime $\Mi$
and one for the flat cylinder $\Cy$, together with a suitable action 
of two copies of the orientation preserving embeddings of oriented $1$-manifolds.
It is important to emphasize that our result is purely categorical
and hence it does {\em not} rely on specific analytical features 
of the algebras under consideration. In particular, it also holds true
for associative and unital $\ast$-algebras that carry no topology at all.
\sk

The method of proof for Theorem \ref{theo:ordinarytominimal} involves
working in the very broad setting of AQFTs on {\em orthogonal categories} \cite{BSWoperad},
which are general types of spacetime categories that carry information about independent pairs of subsystems,
see Definitions \ref{def:orthcat} and \ref{def:AQFT}. Implementing the time-slice axiom 
can be achieved in this framework by a {\em localization} of orthogonal categories, see Proposition \ref{prop:localization}.
The key result leading to our skeletal model for $2$-dimensional conformal AQFTs
is our explicit and very simple description in Theorem \ref{theo:orthcatrelationship}
of the localization of the category $\CLoc_2$ at all Cauchy morphisms, which crucially 
relies on embedding theorems \cite{Finster,Monclair} from $2$-dimensional conformal
Lorentzian geometry. Hence, our skeletal model seems to be a specific feature
of $2$-dimensional conformal AQFTs that is not likely to generalize to higher dimensions.
\sk

Besides obtaining deep insights into the algebraic structure of $2$-dimensional
conformal AQFTs, our skeletal model is also very useful in applications. As an illustration
of this fact, we study the relationship between $2$-dimensional and chiral conformal AQFTs. 
Our skeletal models both for $2$-dimensional and for chiral conformal AQFTs allow us 
to construct in Theorem \ref{theo:chiralization} two adjunctions
between the category of $2$-dimensional conformal AQFTs
and the category of chiral conformal AQFTs,
whose right adjoints admit an interpretation as `chiralization functors', i.e.\ they extract
the $\pm$-chiral components of a $2$-dimensional conformal AQFT. This construction
is a categorical formalization and also a generalization to the locally covariant setting 
of Rehren's chiral observables \cite{Rehren}.
Furthermore, the following structural result about the relationship
between chiral and full AQFTs is proven: The
category of chiral conformal AQFTs is a {\em full coreflective
subcategory} of the category of $2$-dimensional conformal AQFTs. 
In particular, this entails that the above adjunctions can be used 
to detect which $2$-dimensional conformal AQFTs are chiral. 
\sk

The outline for the remainder of this paper is as follows:
Section \ref{sec:formalism} provides a self-contained review of
the general framework of AQFTs on orthogonal categories \cite{BSWoperad}
in which most of our statements and proofs are written. (Whenever possible, we avoid using operad theory.)
In Section \ref{sec:conformalgeometry} we compute an explicit and very simple description 
of the localization of the category $\CLoc_2$ at all Cauchy morphisms, which culminates in Theorem \ref{theo:orthcatrelationship}.
Section \ref{sec:minmod} provides details for 
how to pass between the (equivalent) ordinary and skeletal descriptions of $2$-dimensional conformal AQFTs.
The passage from ordinary to skeletal is very simple, see Theorem \ref{theo:ordinarytominimal}.
The reconstruction of the ordinary description from the skeletal one always exists but it is computationally
much more involved. Under additional assumptions on the target category $\Alg$, which are satisfied
for $\ast$-algebras but not for $C^\ast$-algebras, 
operadic left Kan extensions provide a concrete model 
for the reconstruction functor, see Theorem \ref{theo:minimaltoordinary}.
In Section \ref{sec:chiralization} we construct chiralization functors that extract
the $\pm$-chiral components of $2$-dimensional conformal AQFTs and investigate their 
categorical properties, see Theorem \ref{theo:chiralization}. Finally, our chiralization functors
are applied to a concrete example in Section \ref{sec:example}, which illustrates how the chiralization of the 
Abelian current is related to the usual chiral currents.
Appendix \ref{app:initial} proves a technical result that is used in Section \ref{sec:chiralization}.


\section{\label{sec:formalism}Orthogonal categories and AQFTs}
We shall briefly recall some relevant definitions and
constructions in algebraic quantum field theory (AQFT).
In order to state and prove the results of this paper,
it will be crucial to work within the general framework of 
AQFTs on orthogonal categories \cite{BSWoperad}. 
Such theories are most elegantly formulated via operads, 
however in order to make our results better accessible to a broader 
audience we shall provide a more elementary and self-contained 
description below.
Our main constructions and results are insensitive 
to specific details of the target category in which the
AQFTs take values. In particular, they hold true 
for $\ast$-algebras, locally convex $\ast$-algebras, 
bornological $\ast$-algebras, convenient $\ast$-algebras,
$C^\ast$-algebras and von Neumann algebras.
In order to avoid excluding cases that might be of interest
to some readers, we work with the following
rather general choice of target category.
\begin{defi}\label{def:target}
We fix once and for all an involutive symmetric monoidal category 
$\TT$ and a (not necessarily full) subcategory $\Alg \subseteq {}^{\ast}\Alg_{\As}(\TT)$
of the category of associative and unital $\ast$-algebras in $\TT$, 
see e.g.\ \cite{Jacobs,BSWinvolutive}.
We assume that the category $\Alg$ is complete, i.e.\ it admits all small limits.
\end{defi}
\begin{ex}
Let us briefly explain how our setting covers the standard choices 
of target categories.
The category ${}^\ast\Alg_{\bbC}=
{}^{\ast}\Alg_{\As}(\Vec_{\bbC})$ of 
associative and unital $\ast$-algebras over $\bbC$  
is obtained by choosing the involutive symmetric 
monoidal category $\TT=\Vec_{\bbC}$ of complex vector spaces.
Choosing instead the involutive symmetric monoidal category 
$\TT=\mathbf{Ban}_{\bbC}$ of Banach spaces, we obtain the complete category 
$C^\ast\Alg_{\bbC}\subseteq {}^{\ast}\Alg_{\As}(\mathbf{Ban}_{\bbC})$
of $C^\ast$-algebras as a full subcategory of the category of Banach $\ast$-algebras.
Furthermore, von Neumann algebras and normal unital $\ast$-homomorphism form a
complete (but not full) subcategory $W^\ast\mathbf{Alg}_\bbC \subseteq C^\ast\Alg_{\bbC}
\subseteq {}^{\ast}\Alg_{\As}(\mathbf{Ban}_{\bbC})$, see e.g.\ \cite{Kornell}.
\end{ex}

The next definition formalizes a general concept of ``spacetime category''
in which certain pairs of morphisms 
$f_1: M_1 \to M^\prime \leftarrow M_2: f_2$ to a common target 
are distinguished. Physically, the distinguished pairs of morphisms should be 
interpreted as ``causally independent subregions'' in $M^\prime$.
\begin{defi}[{\cite[Definition 3.4]{BSWoperad}}]\label{def:orthcat}
An {\em orthogonal category} is a pair $\ovr{\CC} = (\CC,{\perp})$
consisting of a small category $\CC$ and a subset ${\perp} \subseteq 
\Mor\,\CC \,{{}_{\mathsf{t}}{\times}{}_{\mathsf{t}}}\,\Mor\,\CC$
of the set of pairs of morphisms to a common target, such that the 
following conditions hold true:
\begin{itemize}
\item[(i)] Symmetry: If $(f_1,f_2)\in{\perp}$, then $(f_2,f_1)\in{\perp}$.
\item[(ii)] Composition stability: If $(f_1,f_2)\in{\perp}$,
then $(g\,f_1\,h_1 , g\,f_2\,h_2)\in{\perp}$ for all composable
$\CC$-morphisms $g,h_1,h_2$.
\end{itemize}
We often denote elements $(f_1,f_2)\in{\perp}$ by $f_1\perp f_2$.
An {\em orthogonal functor} $F: \ovr{\CC}\to\ovr{\DD}$
is a functor $F : \CC\to\DD$ satisfying $(Ff_1)\perp_{\DD}^{} (Ff_2)$
for all $f_1\perp_\CC^{} f_2$. We denote by $\OCat$ the $2$-category
whose objects are orthogonal categories, morphisms are orthogonal functors
and $2$-morphisms are natural transformations between orthogonal functors.
\end{defi}
\begin{ex}
The prime example $\ovr{\Loc_m} = (\Loc_m,{\perp})$ of an orthogonal category 
is the usual category $\Loc_m$ of oriented and time-oriented
globally hyperbolic Lorentzian manifolds (of a fixed dimension $m\geq 2$) 
with $(f_1:M_1\to M^\prime)\perp(f_2: M_2\to M^\prime)$
if and only if the images $f_1(M_1)$ and $f_2(M_2)$ are causally disjoint 
subsets of $M^\prime$. See e.g.\ \cite{BFV,FewsterVerch} for more details. 
In the present work we will study
various orthogonal categories associated with the 
one of connected oriented and time-oriented globally hyperbolic 
{\em conformal} Lorentzian $2$-manifolds $\ovr{\CLoc_2}$.
The latter was introduced in \cite{Pinamonti} and 
used for conformal field theory studies in \cite{CRV}.
We will give a precise definition of $\ovr{\CLoc_2}$
in Section \ref{sec:conformalgeometry}. 
\end{ex}

The following definition formalizes the concept 
of AQFT on an orthogonal category $\ovr{\CC}$.
\begin{defi}[{\cite[Definition 3.5]{BSWoperad}}]\label{def:AQFT}
Let $\ovr{\CC}$ be an orthogonal category.
An {\em AQFT on $\ovr{\CC}$} is a functor
$\AAA : \CC\to \Alg$ that satisfies the
$\perp$-commutativity property: For all $(f_1:M_1\to M^\prime)\perp(f_2:M_2\to M^\prime)$,
the diagram
\begin{flalign}
\xymatrix@C=5em{
\AAA(M_1)\otimes \AAA(M_2) \ar[r]^-{\AAA(f_1)\otimes \AAA(f_2)}\ar[d]_-{\AAA(f_1)\otimes \AAA(f_2)} & \AAA(M^\prime)\otimes \AAA(M^\prime)\ar[d]^-{\mu_{M^\prime}}\\
\AAA(M^\prime)\otimes \AAA(M^\prime) \ar[r]_-{\mu_{M^\prime}^{\op}}&\AAA(M^\prime)
}
\end{flalign}
in the underlying category $\TT$ commutes, where $\mu_{M^\prime}^{(\op)}$ denotes the (opposite)
multiplication of the algebra $\AAA(M^\prime)$.
We denote by $\AQFT(\ovr{\CC})\subseteq \Fun(\CC,\Alg)$
the full subcategory of $\perp$-commutative functors.
\end{defi}
\begin{rem}
In the case of $\ovr{\Loc_m}$, $\perp$-commutativity is
also called {\em Einstein causality}.
Note that Definition \ref{def:AQFT} does not explicitly 
mention the {\em time-slice axiom} \cite{BFV,FewsterVerch}. We shall show at the end
of this section that the latter
can be implemented through a localization of orthogonal categories.
\end{rem}

The assignment $\ovr{\CC}\mapsto \AQFT(\ovr{\CC})$ 
of the AQFT categories can be promoted to a $2$-functor
\begin{flalign}\label{eqn:AQFT2functor}
\AQFT \,:\,\OCat^{\op} ~\longrightarrow~\Cat\quad,
\end{flalign}
where $\OCat$ is the $2$-category introduced in Definition \ref{def:orthcat} and
$\Cat$ denotes the $2$-category of (not necessarily small) categories.
This $2$-functor assigns to an orthogonal functor $F : \ovr{\CC}\to \ovr{\DD}$ the
pullback functor
\begin{flalign}\label{eqn:pullbackfunctorF}
F^\ast:= (-)F \,:\, \AQFT(\ovr{\DD})~\longrightarrow~ \AQFT(\ovr{\CC})~~,\quad
\AAA~\longmapsto~\AAA\,F
\end{flalign} 
given by pre-composition with $F$. To a natural transformation
$\zeta: F\to G$ between orthogonal functors $F,G:\ovr{\CC}\to\ovr{\DD}$
it assigns the natural transformation
$(-)\zeta : (-)F\to(-)G$ obtained by whiskering.
As every $2$-functor, \eqref{eqn:AQFT2functor} preserves the equivalences
in the respective $2$-categories. The equivalences in $\Cat$ are
the fully faithful and essentially surjective functors
and the equivalences in $\OCat$ can be characterized as follows.
\begin{lem}\label{lem:orth-eq}
An orthogonal functor $F :\ovr{\CC}=(\CC,\perp_{\CC})\to \ovr{\DD}=(\DD,\perp_{\DD})$
is an equivalence in the $2$-category $\OCat$ if and only if the following
two conditions hold true:
\begin{itemize}
\item[1.)] The underlying functor $F :\CC \to\DD$ is fully faithful and essentially surjective.
\item[2.)] The orthogonality relation $\perp_\CC\,=F^\ast(\perp_\DD)$ is the pullback along $F$ 
(see \cite[Lemma 3.19]{BSWoperad})
of the orthogonality relation $\perp_\DD$, i.e.\ $f_1\perp_\CC f_2$ if and only if 
$F(f_1)\perp_\DD F(f_2)$.
\end{itemize}
\end{lem}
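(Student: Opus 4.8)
The plan is to prove this as a two-way implication, treating the two conditions as a characterization of what it means for the pullback $2$-functor to witness an equivalence. The natural strategy is to factor the problem through the forgetful $2$-functor $\OCat \to \Cat$ that sends $\ovr{\CC} \mapsto \CC$ and forgets the orthogonality relation, and to keep careful track of what extra data the orthogonality relations contribute. The key observation is that an equivalence in a $2$-category is characterized by the existence of a pseudo-inverse together with invertible unit and counit $2$-morphisms; in $\OCat$ the $2$-morphisms are just natural transformations of the underlying functors, so the only genuinely new content beyond $\Cat$ is that the pseudo-inverse functor and the natural isomorphisms must all respect $\perp$.

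For the ``only if'' direction, suppose $F$ is an equivalence in $\OCat$, so there exist an orthogonal functor $G : \ovr{\DD} \to \ovr{\CC}$ and natural isomorphisms $\eta : \id \Rightarrow GF$, $\epsilon : FG \Rightarrow \id$ whose components are $\CC$- and $\DD$-morphisms respectively. First I would forget orthogonality: the same data exhibit $F : \CC \to \DD$ as an equivalence of underlying categories, hence $F$ is fully faithful and essentially surjective, giving condition~1. For condition~2, the inclusion $\perp_\CC \subseteq F^\ast(\perp_\DD)$ is immediate from $F$ being an orthogonal functor. The reverse inclusion is where the pseudo-inverse is used: if $F(f_1) \perp_\DD F(f_2)$, then applying the orthogonal functor $G$ yields $GF(f_1) \perp_\CC GF(f_2)$, and then I would transport this back along the components of the natural isomorphism $\eta$ using the composition stability axiom~(ii) of Definition~\ref{def:orthcat}. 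Concretely, $f_i = \eta^{-1}_{M^\prime} \, (GF f_i) \, \eta_{M_i}$ exhibits each $f_i$ as a whiskering of $GF f_i$ by isomorphisms, so $f_1 \perp_\CC f_2$ follows. This reduces the work to a clean application of composition stability, which I expect to be routine.

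For the ``if'' direction, suppose conditions~1 and~2 hold. Since $F : \CC \to \DD$ is fully faithful and essentially surjective, standard category theory provides a pseudo-inverse functor $G : \DD \to \CC$ together with natural isomorphisms $\eta$ and $\epsilon$ as above. The task is to upgrade $G$ to an orthogonal functor and to check that $\eta, \epsilon$ are $2$-morphisms in $\OCat$; the latter is automatic since every natural transformation of underlying functors counts as a $2$-morphism in $\OCat$. The heart of the matter is showing $G$ is orthogonal, i.e.\ that $g_1 \perp_\DD g_2$ implies $G(g_1) \perp_\CC G(g_2)$. Here I would again use condition~2: it suffices by the pullback characterization to show $FG(g_i)$ satisfy $FG(g_1) \perp_\DD FG(g_2)$, and this follows from $g_1 \perp_\DD g_2$ by transporting along the counit $\epsilon$ via composition stability, exactly mirroring the argument in the other direction.

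The main obstacle, and the only step requiring genuine care, is the interplay between the pullback condition~2 and composition stability. The subtlety is that condition~2 is stated as an ``if and only if'' at the level of the relation $\perp_\CC = F^\ast(\perp_\DD)$, whereas orthogonality of $F$ only gives one inclusion for free; the other inclusion must be derived, and deriving it in either direction forces one to conjugate an orthogonality relation by the components of a natural isomorphism. I would therefore isolate as a small standalone observation the fact that orthogonality relations are invariant under conjugation by isomorphisms, which is an immediate consequence of axiom~(ii), and then invoke it uniformly. Everything else is the standard $2$-categorical bookkeeping of equivalences, so I expect the proof to be short once this conjugation lemma is in hand.
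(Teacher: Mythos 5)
Your proposal is correct and follows essentially the same route as the paper's proof: both directions hinge on conjugating the orthogonality relation by the components of the unit/counit isomorphisms and invoking composition stability, exactly as in the paper. The only cosmetic difference is that you isolate the conjugation-invariance observation as an explicit sub-lemma, which the paper applies inline.
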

\begin{proof}
``$\Leftarrow$'': Since the underlying functor $F: \CC \to \DD$ 
is fully faithful and essentially surjective, it is an equivalence 
in the $2$-category $\Cat$, i.e.\ there exists a functor $G: \DD \to \CC$ 
and natural isomorphisms $\varepsilon: FG \to \id_{\DD}$ and $\nu: \id_{\CC} \to GF$. 
To obtain an equivalence in $\OCat$, we have to prove that
$G : \ovr{\DD}=(\DD,\perp_{\DD})\to \ovr{\CC}= (\CC,\perp_{\CC})$
is an orthogonal functor. Given any orthogonal
pair $(g_1: N_1 \to N^\prime) \perp_{\DD} (g_2: N_2 \to N^\prime)$,
we obtain by using the natural isomorphism $\varepsilon$ that 
$FG(g_i) = \varepsilon_{N^\prime}^{-1}\, g_i\, \varepsilon_{N_i}$, 
for $i=1,2$, hence composition stability of $\perp_{\DD}$ entails that 
the pair $FG(g_1) \perp_{\DD} FG(g_2)$ is orthogonal. 
By definition of $\perp_{\CC}\, = F^\ast(\perp_{\DD})$, it then 
follows that $G(g_1) \perp_{\CC} G(g_2)$ is orthogonal, which implies
that $G$ is an orthogonal functor.
\sk

``$\Rightarrow$'': By hypothesis, there exists an
orthogonal functor $G: \ovr{\DD} \to \ovr{\CC}$ and natural isomorphisms 
$\varepsilon: FG \to \id_{\ovr{\DD}}$ and $\nu: \id_{\ovr{\CC}} \to GF$. This in particular
implies that the underlying functor $F : \CC\to \DD$ 
is fully faithful and essentially surjective.
It remains to check that, 
given $f_i: M_i \to M^\prime$ in $\CC$, for $i=1,2$, 
$F(f_1) \perp_{\DD} F(f_2)$ entails $f_1 \perp_{\CC} f_2$. 
Using the natural isomorphism $\nu$ one finds 
$f_i = \nu_{M^\prime}^{-1}\, GF(f_i)\, \nu_{M_i}$, for $i=1,2$. 
The conclusion follows by recalling that $G$ is an orthogonal functor 
and that $\perp_{\CC}$ is composition stable. 
\end{proof}

In full generality, the pullback functor \eqref{eqn:pullbackfunctorF} does
neither admit a left adjoint nor a right adjoint functor.
The following result provides sufficient conditions for the existence of adjoint functors.
\begin{propo}\label{prop:Kanextension}
Let $F : \ovr{\CC}\to \ovr{\DD}$ be an orthogonal functor.
\begin{itemize}
\item[a)] If the target category $\Alg = {}^\ast\Alg_{\As}(\TT)$ 
is the category of associative and unital $\ast$-algebras in a cocomplete
involutive closed symmetric monoidal category $\TT$, then \eqref{eqn:pullbackfunctorF} 
admits a left adjoint functor
\begin{flalign}
F_!\,:\, \AQFT(\ovr{\CC})~\longrightarrow~\AQFT(\ovr{\DD})\quad.
\end{flalign}
\item[b)] Suppose that the categorical right Kan extension\footnote{Note 
that $\Ran_F$ exists because $\Alg$ was assumed to be complete.}
$\Ran_F : \Fun(\CC,\Alg)\to \Fun(\DD,\Alg)$ preserves $\perp$-commutativity, i.e.\ 
$\Ran_F(\AAA) : \DD\to\Alg$ is $\perp_\DD$-commutative for all
$\perp_\CC$-commutative functors $\AAA : \CC\to\Alg$.
Then the restriction to the AQFT categories
\begin{flalign}\label{eqn:RanF}
F_\ast \,:=\,\Ran_F \,:\, \AQFT(\ovr{\CC})~\longrightarrow~\AQFT(\ovr{\DD})
\end{flalign}
defines a right adjoint functor for \eqref{eqn:pullbackfunctorF}.
\end{itemize}
\end{propo}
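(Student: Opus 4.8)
My plan is to handle the two parts by different means: part (a) rests on the operadic reformulation of AQFT, while part (b) is a purely categorical restriction-of-adjunction argument once the hypothesis is granted.

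For part (a), I would pass to the operadic description of AQFTs from \cite{BSWoperad}. Recall that to each orthogonal category $\ovr{\CC}$ one associates a coloured operad $\O_{\ovr\CC}$ whose category of $\TT$-valued algebras is precisely $\AQFT(\ovr\CC)$, and that an orthogonal functor $F : \ovr\CC\to\ovr\DD$ induces a morphism of operads $\O_F : \O_{\ovr\CC}\to\O_{\ovr\DD}$ under which the pullback functor \eqref{eqn:pullbackfunctorF} becomes the operadic restriction functor $\O_F^\ast$. The plan is then to invoke the standard fact that, for a morphism of coloured operads valued in a cocomplete closed symmetric monoidal category, the restriction functor on algebras admits a left adjoint given by the operadic left Kan extension $F_!$. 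The hypotheses on $\TT$ are exactly what this construction requires: cocompleteness ensures that the coproducts and (reflexive) coequalizers entering the operadic Kan extension exist, while closedness ensures that the monoidal product preserves these colimits, so that the output is again an $\O_{\ovr\DD}$-algebra, i.e.\ an object of $\AQFT(\ovr\DD)$. This is also the precise point at which the argument breaks for $C^\ast$-algebras, since $\mathbf{Ban}_\bbC$ is not cocomplete in the relevant sense.

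For part (b), I would work within the ambient functor categories and then restrict. Since $\Alg$ is complete, the right Kan extension $\Ran_F : \Fun(\CC,\Alg)\to\Fun(\DD,\Alg)$ exists and is right adjoint to the pullback $F^\ast : \Fun(\DD,\Alg)\to\Fun(\CC,\Alg)$, yielding a natural bijection
\[
\Hom_{\Fun(\CC,\Alg)}\big(F^\ast\BBB,\AAA\big)~\cong~\Hom_{\Fun(\DD,\Alg)}\big(\BBB,\Ran_F\AAA\big)
\]
for all $\AAA\in\Fun(\CC,\Alg)$ and $\BBB\in\Fun(\DD,\Alg)$. By construction \eqref{eqn:pullbackfunctorF}, the pullback already restricts to a functor $F^\ast : \AQFT(\ovr\DD)\to\AQFT(\ovr\CC)$ (this is the content of the $2$-functor \eqref{eqn:AQFT2functor}), and by the standing hypothesis $\Ran_F$ restricts to $F_\ast : \AQFT(\ovr\CC)\to\AQFT(\ovr\DD)$ as in \eqref{eqn:RanF}. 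Since $\AQFT(\ovr\CC)\subseteq\Fun(\CC,\Alg)$ and $\AQFT(\ovr\DD)\subseteq\Fun(\DD,\Alg)$ are \emph{full} subcategories, for $\AAA\in\AQFT(\ovr\CC)$ and $\BBB\in\AQFT(\ovr\DD)$ the hom-sets in the subcategories agree with those in the ambient categories, so the displayed bijection restricts to $\Hom_{\AQFT(\ovr\CC)}(F^\ast\BBB,\AAA)\cong\Hom_{\AQFT(\ovr\DD)}(\BBB,F_\ast\AAA)$, with naturality inherited from the ambient adjunction. Hence $F_\ast$ is right adjoint to $F^\ast$ on the AQFT categories.

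The routine part is (b): once the hypothesis is in force, it is only bookkeeping with full subcategory inclusions and the ambient $F^\ast\dashv\Ran_F$ adjunction. The main obstacle lies in the operadic input of part (a): one must ensure that the left adjoint genuinely produces an $\O_{\ovr\DD}$-algebra and not merely an object of $\Fun(\DD,\Alg)$. In particular, the naive functor-category left Kan extension $\Lan_F$ need not be $\perp_\DD$-commutative, which is exactly why the operadic left Kan extension — and with it the cocompleteness together with closedness of $\TT$ — is indispensable.
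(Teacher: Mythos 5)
Your proposal is correct and follows essentially the same route as the paper: part (a) via the equivalence $\AQFT(\ovr{\CC})\simeq{}^\ast\Alg_{\O_{\ovr{\CC}}}(\TT)$ and the ($\ast$-)operadic left Kan extension, whose existence is guaranteed by cocompleteness and closedness of $\TT$; part (b) by restricting the ambient adjunction $F^\ast\dashv\Ran_F$ on functor categories along the full subcategory inclusions of the AQFT categories. The only cosmetic difference is that the paper explicitly invokes the \emph{$\ast$-operadic} left Kan extension of \cite{BSWinvolutive} to account for the involutive structure, where you speak of coloured operads generically.
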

\begin{proof}
Item a): Under our hypotheses, the categories $\AQFT(\ovr{\CC})\simeq {}^\ast\Alg_{\O_{\ovr{\CC}}}(\TT)$
and $\AQFT(\ovr{\DD})\simeq {}^\ast\Alg_{\O_{\ovr{\DD}}}(\TT)$ are naturally equivalent
to the categories of $\TT$-valued $\ast$-algebras over the AQFT $\ast$-operads $\O_{\ovr{\CC}}$ and $\O_{\ovr{\DD}}$, see \cite{BSWinvolutive}.
Furthermore, the pullback functor \eqref{eqn:pullbackfunctorF} agrees with
the operadic pullback along $\O_F : \O_{\ovr{\CC}}\to\O_{\ovr{\DD}}$.
Because $\TT$ is by hypothesis cocomplete and closed, 
the $\ast$-operadic left Kan extension \cite[Theorem 6.6]{BSWinvolutive} 
provides the desired left adjoint functor.
\sk

Item b): We have to show that the functor \eqref{eqn:RanF} is right adjoint 
to $F^\ast:  \AQFT(\ovr{\DD})\to \AQFT(\ovr{\CC})$.
For all $\AAA\in \AQFT(\ovr{\CC})$ and $\BBB\in\AQFT(\ovr{\DD})$,
we have a chain of natural isomorphisms
\begin{flalign}
\nn \Hom_{\AQFT(\ovr{\DD})}\big(\BBB, F_\ast(\AAA)\big)\,&=\,
\Hom_{\Fun(\DD,\Alg)}\big(\BBB, \Ran_{F}(\AAA)\big)\\
\nn \,&\cong \,\Hom_{\Fun(\CC,\Alg)}\big(F^\ast(\BBB), \AAA\big)\\
 \,&=\,\Hom_{\AQFT(\ovr{\CC})}\big(F^\ast (\BBB), \AAA\big)\quad.
\end{flalign}
In the first and last step we have used that the AQFT categories
are by Definition \ref{def:AQFT} full subcategories of the functor categories.
The second step follows from the fact that 
$\Ran_F$ is right adjoint to the pullback functor  $F^\ast: 
\Fun(\DD,\Alg)\to \Fun(\CC,\Alg)$.
\end{proof}
\begin{rem}\label{rem:choices}
The hypotheses of item a) are satisfied for the category
${}^\ast\Alg_{\bbC}={}^\ast\Alg_{\As}(\Vec_\bbC)$
of associative and unital $\ast$-algebras over $\bbC$,
but they are {\em not} satisfied for the category $C^\ast\Alg_{\bbC} \subseteq
{}^\ast\Alg(\mathbf{Ban}_{\bbC})$ of $C^\ast$-algebras. 
It is important to emphasize that the left adjoints from item a) 
are {\em not} necessary for proving any of the structural results
of this paper. We will only use them in Section \ref{sec:minmod}
to determine a concrete model for a quasi-inverse of an equivalence
between AQFT categories.
\end{rem}

We conclude this section with a discussion of the {\em time-slice axiom},
which in our general setup from Definition \ref{def:AQFT} takes the following form.
\begin{defi}[{\cite[Definition 3.2]{BSWoperad}}]
Let $\ovr{\CC}$ be an orthogonal category and $W\subseteq \Mor\,\CC$ 
a subset of the set of morphisms. An AQFT $\AAA\in\AQFT(\ovr{\CC})$ is called
{\em $W$-constant} if it assigns to each $(f:M\to M^\prime)\in W$ an isomorphism
$\AAA(f) : \AAA(M)\stackrel{\cong}{\longrightarrow}\AAA(M^\prime)$ in the category $\Alg$.
We denote by $\AQFT(\ovr{\CC})^W\subseteq \AQFT(\ovr{\CC})$ the full subcategory
of $W$-constant AQFTs.
\end{defi}
\begin{ex}
In the case of $\ovr{\Loc_m}$, the subset $W\subseteq \Mor\,\Loc_m$
is chosen to be the set of all \emph{Cauchy morphisms}, i.e.\ $\Loc_m$-morphisms
$f:M\to M^\prime$ such that $f(M)\subseteq M^\prime$ contains a Cauchy surface
of $M^\prime$. Then $W$-constancy reduces to the ordinary time-slice axiom.
\end{ex}

$W$-constant AQFTs on $\ovr{\CC}$ admit an equivalent description in terms of AQFTs
on the {\em localized orthogonal category} $\ovr{\CC[W^{-1}]}$. 
In more detail, let us denote by $\CC[W^{-1}]$ the localization of
the category $\CC$ at the set of morphisms $W$ and by 
$L : \CC\to \CC[W^{-1}]$ the localization functor. (See 
\cite[Section 7.1]{KS} for the relevant definitions.)
We endow $\CC[W^{-1}]$ with the pushforward orthogonality relation
$\perp_{\CC[W^{-1}]}\, := L_\ast(\perp_{\CC})$ (see \cite[Lemma 3.19]{BSWoperad}),
which is the minimal orthogonality relation such that 
$L :  \ovr{\CC}\to \ovr{\CC[W^{-1}]}$ is an orthogonal functor.
\begin{propo}\label{prop:localization}
Let $L : {\ovr{\CC}}\to \ovr{\CC[W^{-1}]} $ be an orthogonal localization functor.
Then the pullback functor
\begin{flalign}\label{eqn:Lpullgeneral}
L^\ast\,:\, \AQFT(\ovr{\CC[W^{-1}]}) ~\longrightarrow~ \AQFT(\ovr{\CC})^W
\end{flalign}
defines an equivalence between the category
of AQFTs on $\ovr{\CC[W^{-1}]}$ and the category of $W$-constant AQFTs on $\ovr{\CC}$.
\end{propo}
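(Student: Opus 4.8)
The plan is to exploit the universal property of the localization $L:\CC\to\CC[W^{-1}]$ together with the fact that both AQFT categories are full subcategories of functor categories, so that the statement essentially reduces to a known fact about localizations of plain categories. First I would recall the standard result (from \cite{KS}) that pre-composition with $L$ induces a fully faithful functor $L^\ast:\Fun(\CC[W^{-1}],\Alg)\to\Fun(\CC,\Alg)$ whose essential image consists of exactly those functors $\CC\to\Alg$ that send every morphism in $W$ to an isomorphism; this is the defining universal property of the localization at $W$. Thus, at the level of functor categories, $L^\ast$ restricts to an equivalence $\Fun(\CC[W^{-1}],\Alg)\xrightarrow{\sim}\Fun(\CC,\Alg)^W$, where the superscript $W$ denotes the full subcategory of $W$-inverting functors.

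The next step is to match this up with the $\perp$-commutativity constraints. Since $\AQFT(\ovr{\CC})$ and $\AQFT(\ovr{\CC[W^{-1}]})$ are by Definition \ref{def:AQFT} \emph{full} subcategories of the respective functor categories, and $L^\ast$ on functor categories is fully faithful, it suffices to check that $L^\ast$ sends objects and their inverse correspondence correctly on the nose. Concretely, I would verify two containments. On the one hand, if $\BBB\in\AQFT(\ovr{\CC[W^{-1}]})$ is $\perp_{\CC[W^{-1}]}$-commutative, then $L^\ast(\BBB)=\BBB\,L$ is $\perp_{\CC}$-commutative (because $L$ is an orthogonal functor, so $f_1\perp_\CC f_2$ implies $L(f_1)\perp_{\CC[W^{-1}]}L(f_2)$) and it is manifestly $W$-constant (since $L$ sends $W$-morphisms to isomorphisms, hence $\BBB\,L$ does too); thus $L^\ast$ indeed lands in $\AQFT(\ovr{\CC})^W$. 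On the other hand, given a $W$-constant $\perp_\CC$-commutative $\AAA$, the universal property produces a unique $\widetilde{\AAA}:\CC[W^{-1}]\to\Alg$ with $\widetilde{\AAA}\,L=\AAA$, and I must check this $\widetilde{\AAA}$ is $\perp_{\CC[W^{-1}]}$-commutative.

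The main obstacle is precisely this last point, since $\perp_{\CC[W^{-1}]}=L_\ast(\perp_\CC)$ is the \emph{pushforward} relation, which is generated by the images $L(f_1)\perp L(f_2)$ under the closure conditions of Definition \ref{def:orthcat}, and a priori an arbitrary orthogonal pair in $\CC[W^{-1}]$ need not be of the simple form $(L(f_1),L(f_2))$. Here I would argue as follows: $\perp_{\CC[W^{-1}]}$ consists of pairs obtained from $L(f_1)\perp L(f_2)$ with $f_1\perp_\CC f_2$ by post-composition and pre-composition with arbitrary morphisms of $\CC[W^{-1}]$ (composition stability) and by symmetry. Because $\widetilde{\AAA}$ is a functor and the $\perp$-commutativity diagram is stable under whiskering by algebra morphisms $\widetilde{\AAA}(g)$ (using functoriality and the interchange of $\mu$ with algebra homomorphisms), $\perp$-commutativity of $\widetilde{\AAA}$ on the generating pairs $(L(f_1),L(f_2))$ propagates to the whole pushforward relation. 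Commutativity on the generators is immediate since $\widetilde{\AAA}(L(f_i))=\AAA(f_i)$ and $\AAA$ is $\perp_\CC$-commutative. This shows $\widetilde{\AAA}\in\AQFT(\ovr{\CC[W^{-1}]})$, so $L^\ast$ is essentially surjective onto $\AQFT(\ovr{\CC})^W$; combined with full faithfulness inherited from the functor-category level, $L^\ast$ is the desired equivalence.
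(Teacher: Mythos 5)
Your proposal is correct and follows essentially the same route as the paper's proof: full faithfulness comes from the universal property of the localization, the pullback lands in $W$-constant theories because $L$ inverts $W$, and essential surjectivity is obtained by lifting a $W$-constant theory along $L$ and checking $\perp$-commutativity on the generators $L(f_1)\perp L(f_2)$ of the pushforward relation (you spell out the whiskering argument that the paper leaves as ``straightforward to check''). The only cosmetic discrepancy is that the universal property gives $\widetilde{\AAA}\,L\cong\AAA$ up to natural isomorphism rather than on the nose, which is harmless since $\perp$-commutativity and $W$-constancy are invariant under natural isomorphism.
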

\begin{proof}
The pullback functor $L^\ast =(-)L : \AQFT(\ovr{\CC[W^{-1}]}) \to \AQFT(\ovr{\CC})$ 
takes values in the full subcategory $\AQFT(\ovr{\CC})^W\subseteq \AQFT(\ovr{\CC})$
because, by definition, the localization functor $L$ sends morphisms in $W$ to 
isomorphisms. The functor $L^\ast$ is 
fully faithful by definition of localizations (see \cite[Section 7.1]{KS}),
hence it remains to prove that \eqref{eqn:Lpullgeneral} is essentially surjective.
For this let $\BBB\in\AQFT(\ovr{\CC})^W$ be any $W$-constant AQFT on $\ovr{\CC}$.
Again by definition of localizations, there exists a functor $\AAA : \CC[W^{-1}]\to\Alg$
such that $\AAA\,L\cong\BBB$ are naturally isomorphic. 
It is straightforward to check that $\AAA$ is a
$\perp_{\CC[W^{-1}]}$-commutative functor for the pushforward orthogonality relation
because the latter is generated by $L(f_1)\perp_{\CC[W^{-1}]}L(f_2)$ for all
$f_1\perp_{\CC} f_2$. This proves that there exists $\AAA\in \AQFT(\ovr{\CC[W^{-1}]})$
and a natural isomorphism $L^\ast(\AAA)\cong \BBB$, hence \eqref{eqn:Lpullgeneral} 
is essentially surjective.
\end{proof}


\section{\label{sec:conformalgeometry}Localization of $\ovr{\CLoc_2}$ at Cauchy morphisms}
The goal of this section is to develop an explicit
and simple model for the localization
of the orthogonal category $\ovr{\CLoc_2}$
of oriented and time-oriented globally hyperbolic 
{\em conformal} Lorentzian $2$-manifolds at all Cauchy morphisms.
Via Proposition \ref{prop:localization}, this 
will provide us with a very efficient description
of $2$-dimensional conformal AQFTs satisfying the time-slice axiom,
which we will exploit later in the following sections.
\sk

Let us start by recalling the category $\CLoc_2$ and its orthogonal structure, 
see also \cite{Pinamonti,CRV} for earlier appearances of this category.
\begin{defi}
The category $\CLoc_2$ is defined as follows:
Its objects are all oriented and time-oriented Lorentzian $2$-manifolds
$M$ that are globally hyperbolic and connected.\footnote{More pedantically,
one should write $(M,g,\mathfrak{o},\mathfrak{t})$ in order to display the metric $g$, orientation 
$\mathfrak{o}$ and time-orientation $\mathfrak{t}$, but we decided to suppress these data in order
to avoid notational clutter.}
A morphism $f: M\to M^\prime$ is an orientation and time-orientation preserving
embedding with causally convex image $f(M)\subseteq M^\prime$ 
that preserves the conformal structure determined by the metrics, 
i.e.\ $f^\ast g^\prime = \Omega^2 \, g$ for some conformal factor $\Omega^2\in C^\infty(M,\bbR^{>0})$.
The orthogonal category $\ovr{\CLoc_2} := (\CLoc_2,\perp)$ is then defined as follows: 
A pair of morphisms is orthogonal
$(f_1:M_1\to M^\prime)\perp (f_2:M_2\to M^\prime)$ if and only if
the images $f_1(M_1)$ and $f_2(M_2)$ are causally disjoint subsets of $M^\prime$.
Furthermore, a morphism $f: M\to M^\prime$ is called a Cauchy morphism 
if its image $f(M) \subseteq M^\prime$ contains 
a Cauchy surface of $M^\prime$. We denote by 
$W \subseteq \Mor\, \CLoc_{2}$ the set of all Cauchy morphisms. 
\end{defi}
\begin{rem}
We decided to restrict ourselves to connected manifolds in order to simplify
the presentation of this paper. Our results and proofs do generalize 
in a fairly obvious way to non-connected manifolds by treating each connected component separately.
\end{rem}
\begin{ex}
There are two distinguished objects in $\CLoc_2$ that will play a prominent
role in our construction. First, we have the $2$-dimensional 
Minkowski spacetime, which we shall denote by $\Mi\in\CLoc_2$. Explicitly, 
the underlying manifold of $\Mi$ is given by $\bbR^2$, which we describe
by the two light-cone coordinates $x^\pm= t\pm x$.
The metric, orientation and time-orientation then read as 
$g=\tfrac{1}{2}\big(\dd x^+ \otimes \dd x^- + \dd x^-\otimes\dd x^+\big)$,
$\mathfrak{o} = \dd x^-\wedge \dd x^+$ and $\mathfrak{t}=\dd x^{+} + \dd x^{-}$.
Note that every causally convex, connected and open 
subset $U\subseteq \Mi$ defines an object $U\in\CLoc_2$
when endowed with the restricted metric, orientation and time-orientation. Furthermore, the
inclusion map $\iota_U^{\Mi} : U\to \Mi$ is manifestly a $\CLoc_2$-morphism.
\sk

The second distinguished object of interest to us is the $2$-dimensional 
flat cylinder, which we denote by $\Cy\in \CLoc_2$. Explicitly,
the cylinder may be obtained as a quotient of $\Mi$ 
by the $\bbZ$-action $\bbZ\times\Mi \to\Mi\,,~(n,x^\pm) \mapsto x^\pm \pm n$,
together with the induced metric, orientation and time-orientation.
As before, every causally convex, connected and open subset $V\subseteq \Cy$ defines
an object $V\in\CLoc_2$
when endowed with the restricted metric, orientation and time-orientation
and the inclusion map $\iota_V^{\Cy} : V\to \Cy$ is clearly a $\CLoc_2$-morphism.
\end{ex}

Before we can address the problem of localizing $\CLoc_2$ at all Cauchy morphisms,
we have to develop a better understanding of the objects and morphisms in this category.
The first step of our approach consists of using known conformal embedding 
theorems for globally hyperbolic Lorentzian $2$-manifolds in order to obtain 
an equivalent category $\CC_2\simeq \CLoc_2$ that is easier to work with.
For this we recall that, as a consequence of global hyperbolicity and connectedness,
there exist two distinct types of objects $M\in\CLoc_2$: The Cauchy surfaces of $M$ 
are either diffeomorphic to the line $\bbR$ or to the circle $\bbT=\bbR/\bbZ$.
Note that the Minkowski spacetime $\Mi\in\CLoc_2$ is of the first type
and the flat cylinder $\Cy\in\CLoc_2$ is of the second type.
The following embedding results are (essentially, see the proof below) 
proven in \cite[Proposition 4.2]{Finster} and \cite[Theorem 2.2]{Monclair}.
\begin{theo}\label{theo:embedding}
Let $M\in\CLoc_2$ be any object.
\begin{itemize}
\item[a)] If the Cauchy surfaces of $M$ are diffeomorphic to the line $\bbR$, then there exists
a $\CLoc_2$-morphism $M\to \Mi$ into the Minkowski spacetime.

\item[b)] If the Cauchy surfaces of $M$ are diffeomorphic to the circle $\bbT=\bbR/\bbZ$,
then there exists a $\CLoc_2$-morphism $M\to \Cy$ into the flat cylinder. 
This morphism is further a Cauchy morphism, i.e.\ its image contains a
Cauchy surface of $\Cy$.
\end{itemize}
\end{theo}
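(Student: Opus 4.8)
The plan is to build on the two conformal embedding results from $2$-dimensional Lorentzian geometry cited above, namely \cite[Proposition 4.2]{Finster} and \cite[Theorem 2.2]{Monclair}, and to reconcile them with the precise conventions of the objects $\Mi,\Cy\in\CLoc_2$, i.e.\ with the specific flat metrics, orientations $\mathfrak{o}$ and time-orientations $\mathfrak{t}$ fixed above. The geometric input underlying both references is the conformal flatness of every $2$-dimensional Lorentzian metric: on an oriented and time-oriented object $M\in\CLoc_2$ the two null line fields are globally trivializable and may be consistently labeled as the $+$ and $-$ (left- and right-moving) null directions. Integrating these line fields produces a pair of globally defined null coordinates $(x^+,x^-)$ in which the metric takes the form $g=\tfrac{1}{2}\,\Omega^2\,(\dd x^+\otimes\dd x^-+\dd x^-\otimes\dd x^+)$ for some positive conformal factor $\Omega^2$, matching the standard form of the metric on $\Mi$.

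First I would treat item a). When the Cauchy surfaces of $M$ are diffeomorphic to $\bbR$, global hyperbolicity guarantees that the null coordinate map $(x^+,x^-):M\to\bbR^2=\Mi$ is an injective conformal immersion onto an open subset. I would then verify the three conditions defining a $\CLoc_2$-morphism: the labeling of the two null directions arranges that the map preserves $\mathfrak{o}$ and $\mathfrak{t}$; and the image is causally convex because any causal curve in $\Mi$ joining two image points is monotone in each light-cone coordinate and hence remains within the range of $(x^+,x^-)$, which is in turn prescribed by the globally hyperbolic causal structure of $M$.

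Next I would treat item b), which differs from item a) only in the global topology of the null foliations. When the Cauchy surfaces are diffeomorphic to $\bbT=\bbR/\bbZ$, following the null leaves once around a compact Cauchy surface introduces a periodicity $(x^+,x^-)\mapsto(x^++a,x^--a)$ for some $a>0$; after an overall rescaling of the null coordinates normalizing $a$ to $1$, they descend to a conformal embedding $M\to\Cy$ into the flat cylinder as defined above. The same orientation, time-orientation and causal-convexity checks as in item a) apply verbatim on the $\bbZ$-quotient. The additional Cauchy property is then automatic: the image of a Cauchy surface of $M$ is a spacelike circle that wraps exactly once around $\Cy$ and therefore is itself a Cauchy surface of $\Cy$, so that $M\to\Cy$ is a Cauchy morphism. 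This also explains the asymmetry with item a), where the image of the Cauchy surface $\bbR$ may well be a bounded spacelike curve and hence need not contain a Cauchy surface of $\Mi$.

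I expect the main obstacle to lie not in the local conformal normal form, which is classical, but in extracting from \cite{Finster,Monclair} a genuinely global and injective coordinate map with sufficient control on the shape of its image. The delicate points are precisely the global ones: injectivity of the null coordinates (no periodicity in the line case, a single period in the circle case) and causal convexity of the resulting open image in the target. Both of these rely essentially on global hyperbolicity and are exactly the places where the cited theorems are to be invoked rather than reproved; the remaining work is then the bookkeeping needed to match orientations and metrics with the conventions fixed for $\Mi$ and $\Cy$.
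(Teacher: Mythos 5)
Your overall strategy matches the paper's: invoke the embedding theorems of \cite{Finster} and \cite{Monclair}, adjust orientations and time-orientations (the paper does this by post-composing with parity- and/or time-reversal transformations of $\Mi$ resp.\ $\Cy$, rather than by relabeling the null directions, but the effect is the same), and then check the remaining conditions for a $\CLoc_2$-morphism. Item a) is fine as you describe it, since \cite[Proposition 4.2]{Finster} already delivers a conformal embedding with causally convex image, so nothing beyond the orientation bookkeeping is left to do.

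The genuine gap is in item b), where you assert that the causal-convexity check ``applies verbatim on the $\bbZ$-quotient.'' It does not: \cite[Theorem 2.2]{Monclair} provides only a conformal embedding $M\to\Cy$, with no statement about causal convexity of the image, and on the cylinder the monotonicity argument you sketch for $\Mi$ breaks down because a causal curve can exit the image $V\subseteq\Cy$, wind around the compact spatial direction, and re-enter $V$. This is precisely the point where the paper has to do real work. Its argument runs as follows: first establish (as you do) that the image of a Cauchy surface $\Sigma\cong\bbT$ of $M$ is a space-like circle $\Sigma^\prime$, hence a Cauchy surface of $\Cy$ contained in $V$; then argue by contraposition that if $V$ were not causally convex, some inextensible causal curve $\tilde\gamma$ in $\Cy$ would meet $V$ in at least two connected components, yielding at least two inextensible causal curves in $V$ of which only one can meet $\Sigma^\prime$ (an inextensible causal curve meets a Cauchy surface exactly once); pulling the other one back through the embedding produces an inextensible causal curve in $M$ missing $\Sigma$, contradicting the Cauchy surface property. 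Note also that the logical order is the reverse of yours: the Cauchy-surface statement is not an afterthought but the essential input to the causal-convexity proof. Without some argument of this kind, your item b) is incomplete.
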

\begin{proof}
Item a): Ignoring for the moment the orientations and time-orientations, 
it was shown in \cite[Proposition 4.2]{Finster} 
that there exists a conformal embedding $M\to \Mi$ with 
causally convex image. In the case this happens to preserve
the orientations and time-orientations, it defines the desired
$\CLoc_2$-morphism. In the case it does not preserve
the orientation and/or time-orientation, then we post-compose this
embedding with a parity- and/or time-reversal transformation of $\Mi$
to obtain the desired $\CLoc_2$-morphism.
\sk

Item b): It was shown in \cite[Theorem 2.2]{Monclair}
that there exists a conformal embedding $M\to \Cy$. As before,
this embedding can be made orientation and time-orientation preserving
by post-composing (if necessary) with a suitable parity- and/or time-reversal transformation of $\Cy$. 
Picking any space-like Cauchy surface $\Sigma$
of $M$, which by hypothesis is diffeomorphic to the circle 
$\bbT$, its image under the embedding $M\to \Cy$ is a smooth space-like circle in 
$\Cy$, which defines a Cauchy surface $\Sigma^\prime$ of $\Cy$. Hence, the image of $M\to\Cy$
contains a Cauchy surface of $\Cy$. 
\sk

It remains to show that the image
$V\subseteq \Cy$ of the embedding $M\to \Cy$ is causally convex,
which we will prove by contraposition. Suppose that $V\subseteq \Cy$
is not causally convex. Then there exists a causal curve $\gamma : [0,1]\to \Cy$
from $\gamma(0)\in V$ to $\gamma(1)\in V$ that exits and re-enters $V$.
Extending $\gamma$ to an inextensible causal curve $\tilde{\gamma} : \bbR\to \Cy$,
it meets precisely once the space-like Cauchy surface $\Sigma^\prime$ 
in the image $V\subseteq \Cy$ that we have 
constructed above. Co-restricting $\tilde{\gamma}$ to $V\subseteq \Cy$
we obtain an immersion $\tilde{\gamma}\vert : \tilde{\gamma}^{-1}(V)\to V$.
The domain $\tilde{\gamma}^{-1}(V)\subseteq \bbR$ has at least two connected
components, because by hypothesis the causal curve exits and re-enters $V$. By restricting
to the connected components, $\tilde{\gamma}\vert$ defines at least two inextensible causal curves 
in $V$, of which however only one meets the space-like Cauchy surface $\Sigma^\prime$. 
Therefore, inverting the conformal embedding $M \to V\subseteq \Cy$ onto $V$, 
one finds an inextensible causal curve in $M$ that does not meet $\Sigma$. This contradicts 
the hypothesis that $\Sigma$ is a space-like Cauchy surface of $M$. 
\end{proof}

\begin{cor}\label{cor:CLoc2equivalence}
Denote by $\CC_2\subseteq \CLoc_2$ the full subcategory whose
objects are all causally convex, connected and open subsets
$U\subseteq \Mi$ of Minkowski spacetime and all causally convex, connected
and open subsets $V\subseteq \Cy$ of the flat cylinder that contain
a Cauchy surface of $\Cy$. (These subsets are endowed with
the restricted orientation, time-orientation and metric.)
Then the inclusion functor $\CC_2 \to\CLoc_2$ is an equivalence 
of categories that further induces an equivalence of categories 
$\CC_2[W^{-1}] \to\CLoc_2[W^{-1}]$ 
after localization at all Cauchy morphisms.
\end{cor}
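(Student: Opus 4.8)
The plan is to establish the two assertions in turn: first that the inclusion $F:\CC_2\to\CLoc_2$ is an equivalence of categories, and then that it descends to an equivalence of the localizations at the Cauchy morphisms.

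Full faithfulness of $F$ is automatic, since $\CC_2\subseteq\CLoc_2$ is by construction a full subcategory, so the content of the first assertion is essential surjectivity, and this is exactly where Theorem \ref{theo:embedding} is used. Given $M\in\CLoc_2$, its Cauchy surfaces are diffeomorphic to $\bbR$ or to $\bbT$. In the former case Theorem \ref{theo:embedding}(a) supplies a $\CLoc_2$-morphism $f:M\to\Mi$; since $f$ is an embedding between equidimensional manifolds its image $U:=f(M)$ is open, and it is connected and causally convex by definition of a $\CLoc_2$-morphism, so $U\in\CC_2$. In the latter case Theorem \ref{theo:embedding}(b) supplies a $\CLoc_2$-morphism $f:M\to\Cy$ whose image $V:=f(M)$ is likewise open, connected and causally convex and moreover contains a Cauchy surface of $\Cy$, so $V\in\CC_2$. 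In both cases the co-restriction of $f$ onto its image is a diffeomorphism preserving orientation, time-orientation and conformal class, with inverse of the same type, hence an isomorphism in $\CLoc_2$. Thus every $M$ is isomorphic to an object of $\CC_2$.

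For the second assertion I would first note that, $\CC_2$ being full, its Cauchy morphisms are precisely the Cauchy morphisms of $\CLoc_2$ that happen to lie in $\CC_2$: whether $f:U\to V$ is Cauchy depends only on the intrinsic datum that $f(U)$ contains a Cauchy surface of $V$. The claim is then an instance of the general fact that an equivalence respecting the localizing classes in both directions induces an equivalence of localizations. Fixing a quasi-inverse $G:\CLoc_2\to\CC_2$ with natural isomorphisms $\nu:\id_{\CC_2}\to GF$ and $\varepsilon:FG\to\id_{\CLoc_2}$, the universal property of localization (see \cite[Section 7.1]{KS}) yields induced functors $\ovr{F}$ and $\ovr{G}$ between the localized categories, provided $F$ and $G$ send Cauchy morphisms to Cauchy morphisms. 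For $F$ this is clear. For $G$ I would use that isomorphisms are Cauchy and that Cauchy morphisms are stable under composition: for a Cauchy morphism $h:M\to M^\prime$, naturality of $\varepsilon$ gives $FG(h)=\varepsilon_{M^\prime}^{-1}\,h\,\varepsilon_M$, which is a composite of Cauchy morphisms and hence Cauchy, and since $F$ reflects the Cauchy property it follows that $G(h)$ is Cauchy.

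It remains to descend $\nu$ and $\varepsilon$ to the localizations: the localization functor sends the components of $\nu$ to isomorphisms $X\to GF(X)$ that are natural with respect to the generators of $\CC_2[W^{-1}]$ — the images of $\CC_2$-morphisms and the formal inverses of localized Cauchy morphisms — assembling into a natural isomorphism $\id\cong\ovr{G}\,\ovr{F}$, and symmetrically $\ovr{F}\,\ovr{G}\cong\id$ from $\varepsilon$, so that $\ovr{F}$ is an equivalence. The genuine geometric input is entirely contained in the essential-surjectivity step and rests on Theorem \ref{theo:embedding}; the main obstacle in the corollary itself is the formal bookkeeping of the second assertion, namely confirming that the quasi-inverse preserves Cauchy morphisms and that both natural isomorphisms descend cleanly, which hinges on the closure of the Cauchy morphisms under composition and on their containing the isomorphisms.
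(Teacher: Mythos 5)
Your proof is correct. The first assertion is handled exactly as in the paper: full faithfulness is free because $\CC_2$ is a full subcategory, and essential surjectivity follows from Theorem \ref{theo:embedding} together with the observation that a $\CLoc_2$-morphism co-restricts to a $\CLoc_2$-isomorphism onto its (open, connected, causally convex) image, which is an object of $\CC_2$. For the second assertion you take a genuinely different, though equally standard, formal route. The paper verifies directly that the composite $\CC_2 \to \CLoc_2 \stackrel{L}{\longrightarrow} \CLoc_2[W^{-1}]$ satisfies the three conditions of \cite[Definition 7.1.1]{KS} characterizing a localization of $\CC_2$ at its Cauchy morphisms, and then invokes uniqueness of localizations up to equivalence. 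You instead descend both the inclusion $F$ and a chosen quasi-inverse $G$ through the universal property of localization and descend the natural isomorphisms $\nu$, $\varepsilon$ as well; the one nontrivial check is that $G$ preserves Cauchy morphisms, which you obtain from $FG(h)=\varepsilon_{M^\prime}^{-1}\,h\,\varepsilon_M$ together with the facts that isomorphisms are Cauchy, that Cauchy morphisms compose, and that the full inclusion $F$ reflects the Cauchy property. Both arguments ultimately rest on the same two facts about Cauchy morphisms (closure under composition and containment of all isomorphisms); the paper's version is shorter because uniqueness of localizations packages the bookkeeping, while yours is more explicit about where the quasi-inverse of the localized functor comes from and closely mirrors the argument the paper itself uses in Lemma \ref{lem:orth-eq} to show that a quasi-inverse of an orthogonal equivalence is again an orthogonal functor.
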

\begin{proof}
The inclusion functor is by definition fully faithful.
Essential surjectivity is a direct consequence
of Theorem \ref{theo:embedding} and the fact that
every $\CLoc_2$-morphism $f : M\to M^\prime$ 
can be factorized into a $\CLoc_2$-isomorphism
$M\stackrel{\cong}{\longrightarrow} f(M)$ onto its image,
followed by a subset inclusion $\iota_{f(M)}^{M^\prime} : f(M)\to M^\prime$.
To prove the last part of the statement, let 
$L : \CLoc_{2}\to \CLoc_{2}[W^{-1}]$ be any localization functor.
Using the fact that Cauchy morphisms are closed under composition and 
include all isomorphisms, one easily checks that the composite functor
$\CC_2 \to \CLoc_2 \stackrel{L}{\longrightarrow}\CLoc_2[W^{-1}]$
defines a localization of $\CC_2$ at all Cauchy morphisms.
(This check consists of verifying the three 
conditions in \cite[Definition 7.1.1]{KS} that characterize a localization.)
Uniqueness (up to equivalence of categories) of localizations then implies that
we have an equivalence of categories
$\CC_2[W^{-1}] \stackrel{\sim}{\longrightarrow}\CLoc_2[W^{-1}]$.
\end{proof}

Let us endow $\CC_2 \subseteq \CLoc_2$ with the pullback along the inclusion of the orthogonality
relation $\perp$ on $\CLoc_2$ and denote the resulting
orthogonal category by $\ovr{\CC_2}:= (\CC_2,{\perp})$. 
Combining Lemma \ref{lem:orth-eq} and the equivalence of categories 
from Corollary \ref{cor:CLoc2equivalence}, 
it follows that the inclusion defines an orthogonal 
equivalence $\ovr{\CC_2} \stackrel{\sim}{\longrightarrow} \ovr{\CLoc_2}$. 
Furthermore, a morphism in $\CC_2$ is Cauchy if and only if 
its image in $\CLoc_2$ is such. This entails that 
passing to the orthogonal localizations at all Cauchy morphisms 
both in $\ovr{\CC_2}$ and in $\ovr{\CLoc_2}$ 
defines an equivalence of orthogonal categories 
$\ovr{\CC_2[W^{-1}]}\stackrel{\sim}{\longrightarrow} \ovr{\CLoc_2[W^{-1}]}$. 
Therefore, we can equivalently work with the simpler model 
$\ovr{\CC_2[W^{-1}]}$ instead of $\ovr{\CLoc_2[W^{-1}]}$. 
The next goal is to find an explicit model for the orthogonal
localization $\ovr{\CC_2[W^{-1}]}$ at the subset $W\subseteq\Mor \,\CC_2$ 
of all Cauchy morphisms in $\CC_2$. Our strategy is to construct,
similarly to \cite{BDSboundary}, a reflective localization
by using {\em Cauchy developments} in the ambient spacetimes
$\Mi$ and $\Cy$. Let us recall that, given any object $M\in\CLoc_2$,
the Cauchy development of a subset $S\subseteq M$ 
is the subset $D(S)\subseteq M$ of all points $p\in M$
such that every inextensible causal curve through $p$ meets $S$.
\begin{ex}\label{ex:D}
Let $(V\subseteq\Cy) \in \CC_2$ be a causally convex, connected and open
subset of the flat cylinder that contains a Cauchy surface of $\Cy$.
Then $D(V) = \Cy$ is the full cylinder.
Let now $(U\subseteq \Mi)\in \CC_2$ be a causally convex, connected
and open subset of Minkowski spacetime. Denote
by $\pr_\pm : \Mi=\bbR^2\to \bbR$ the projections onto the light-cone 
coordinates $x^\pm$. Then $D(U) = \pr_+(U)\times \pr_-(U) \subseteq\Mi$
is a double cone in the Minkowski spacetime. The latter is causally convex, connected
and open, hence it defines an object $(D(U)\subseteq \Mi)\in\CC_2$.
\end{ex}

Let us denote by $\CC_2^D \subseteq \CC_2$ the full subcategory
whose objects are stable under Cauchy development. From the example
above (and the standard property $D^2=D$ of Cauchy development), 
we know that there are two kinds of objects in $\CC_2^D$, namely
double cone subsets $U=I_+\times I_-\subseteq \Mi$ of the Minkowski spacetime,
with $I_\pm\subseteq \bbR$ (not necessarily bounded) open intervals, and the full cylinder $\Cy$.
\begin{propo}\label{prop:reflectivelocalization}
The inclusion functor $i : \CC_2^D \to \CC_2$ admits a left adjoint functor
$D : \CC_2\to \CC_2^D$. The latter exhibits $\CC_2^D$
as a reflective localization of $\CC_2$ at all Cauchy morphisms $W\subseteq \Mor\,\CC_2$.
\end{propo}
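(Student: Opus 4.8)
The plan is to realize the left adjoint as the Cauchy development functor and then to invoke the general principle that a reflective subcategory is automatically the localization of the ambient category at the class of morphisms inverted by the reflector. On objects I would set $D(M)$ to be the Cauchy development of $M$ computed in the ambient spacetime $\Mi$ or $\Cy$; by Example \ref{ex:D} this lands in $\CC_2^D$, and the idempotency $D^2 = D$ of the Cauchy development shows that $D$ restricts to the identity on $\CC_2^D$. The unit of the adjunction will be the subset inclusion $\eta_M : M \to D(M)$, which is a $\CC_2$-morphism because causal convexity of $M$ in the ambient passes to causal convexity inside $D(M)$; crucially, any Cauchy surface of $M$ is also a Cauchy surface of $D(M)$, so $\eta_M$ is in fact a Cauchy morphism.

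The heart of the argument, and the step I expect to be the main obstacle, is to define $D$ on morphisms and to establish the extension property: every $\CC_2$-morphism $f : M \to M^\prime$ admits a unique extension to a $\CC_2$-morphism $D(f) : D(M) \to D(M^\prime)$ with $D(f)\,\eta_M = \eta_{M^\prime}\, f$. Here I would exploit the special structure of $2$-dimensional conformal geometry: in light-cone coordinates an orientation- and time-orientation-preserving conformal embedding into $\Mi$ factorizes as a product $f = (f_+, f_-)$ of strictly increasing maps of the two light-cone coordinates (and the analogous statement for $\Cy$ holds after lifting to the cover $\Mi \to \Cy$). Since $D(U) = \pr_+(U) \times \pr_-(U)$ has the same light-cone projections as $U$, the components $f_\pm$ are already determined on $\pr_\pm(M)$ by the restriction of $f$ to $M$; this simultaneously yields existence of the extension $D(f) := (f_+, f_-)$, whose image equals $D(f(M)) \subseteq D(M^\prime) = M^\prime$, and its uniqueness. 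Functoriality of $D$ and naturality of $\eta$ then follow from this uniqueness.

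With the extension property in hand, the adjunction $D \dashv i$ is immediate: precomposition with $\eta_M$ is exactly the bijection $\Hom_{\CC_2^D}(D(M), N) \cong \Hom_{\CC_2}(M, i(N))$, existence of the extension giving surjectivity and uniqueness giving injectivity, while the counit is the identity since $D\,i = \id_{\CC_2^D}$. This exhibits $\CC_2^D$ as a reflective subcategory of $\CC_2$.

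It then remains to identify this reflective localization with the localization at Cauchy morphisms. I would invoke the standard fact that a reflective subcategory $i : \CC_2^D \to \CC_2$ with reflector $D$ exhibits $\CC_2^D$ as the localization $\CC_2[W_D^{-1}]$, where $W_D = \{ f : D(f) \text{ is an isomorphism}\}$; so it suffices to prove $W_D = W$. If $f : M \to M^\prime$ is Cauchy, then $f(M)$ contains a Cauchy surface of $M^\prime$, whence $D(f(M)) = D(M^\prime)$ by monotonicity of $D$ together with the fact that the development of a Cauchy surface recovers the whole object; since $D(f)$ is an injective conformal embedding onto $D(f(M)) = D(M^\prime)$, it is an isomorphism. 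Conversely, if $D(f)$ is an isomorphism, chasing the inclusion $M \hookrightarrow D(M)$ through $D(f)$ shows that $f(M)$ contains the image of a Cauchy surface of $D(M)$, which is a Cauchy surface of $D(M^\prime)$ lying inside $M^\prime$ and hence a Cauchy surface of $M^\prime$; thus $f$ is Cauchy. This proves $W_D = W$ and completes the argument.
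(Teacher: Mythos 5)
Your proposal is correct and follows essentially the same route as the paper: $D$ is defined via Cauchy developments using the light-cone factorization $f=(f^+,f^-)$ of conformal embeddings (with the universal cover $\Mi\to\Cy$ handling the cylinder case), the unit is the subset inclusion $M\to D(M)$ and the counit the identity, and the localization statement is deduced from the general theory of reflective localizations via the observation that $D(f)$ is an isomorphism precisely when $f$ is Cauchy. The paper offers the reference to \cite[Proposition 3.3]{BDSboundary} as an alternative to this last abstract step, but otherwise the two arguments coincide.
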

\begin{proof}
We define the functor $D : \CC_2\to \CC_2^D$
by using Cauchy developments. To objects $(U\subseteq \Mi)\in \CC_2$
and $(V\subseteq \Cy)\in\CC_2$ we assign their Cauchy developments
$D(U) = \pr_+(U)\times \pr_-(U)\subseteq \Mi$ and $D(V) = \Cy$
in the corresponding ambient spacetime. Defining the functor $D$ on morphisms
requires some case distinctions. Consider first the case of a $\CLoc_2$-morphism
$f : (U\subseteq\Mi)\to (U^\prime\subseteq \Mi)$ between two causally convex, connected
and open subsets of Minkowski spacetime. Using light-cone coordinates,
such $f$ takes the form $f(x^+,x^-) = (f^+(x^+,x^-), f^-(x^+,x^-))$
and one easily checks that $f$ is an orientation and time-orientation
preserving conformal embedding if and only if $f(x^+,x^-) = (f^+(x^+),f^-(x^-))$
with $f^\pm: \pr_\pm(U)\to \pr_\pm(U^\prime)$ two orientation preserving embeddings
of intervals. Hence, $f$ canonically extends to a $\CLoc_2$-morphism
$D(f) := f^+\times f^- : 
D(U) = \pr_+(U)\times \pr_-(U) \to D(U^\prime) = \pr_+(U^\prime)\times \pr_-(U^\prime)$
between the Cauchy developments.
\sk

Next, let us consider the case of a $\CLoc_2$-morphism of the form $f : (U\subseteq\Mi)\to (V\subseteq \Cy)$.
Causal convexity implies that the image $f(U)\subseteq V$ is contained in a double cone
subset of the cylinder, hence the same argument as above applies
and we obtain a canonical extension of $f$ to a $\CLoc_2$-morphism
$D(f) := f^+\times f^- : D(U)=\pr_+(U)\times\pr_-(U)\to D(V)= \Cy$.
\sk

As there are no $\CLoc_2$-morphisms of the form $f : (V\subseteq \Cy)\to (U\subseteq\Mi)$
(recall that $V$ contains a Cauchy surface of $\Cy$),
we are left with the remaining case of a $\CLoc_2$-morphism 
$f : (V\subseteq\Cy)\to (V^\prime\subseteq\Cy)$.
Since the quotient map $q : \Mi\to\Cy$ is a universal cover
of $\Cy$ and both $V,V^\prime\subseteq\Cy$ contain a Cauchy surface
of $\Cy$, the preimages $\tilde{V}:= q^{-1}(V)\subseteq \Mi$ and $\tilde{V}^\prime:=
q^{-1}(V^\prime)\subseteq \Mi$ define universal covers of, respectively, $V$ and $V^\prime$.
The conformal embedding $f : V\to V^\prime$ lifts to a conformal immersion
$\tilde{f} : \tilde{V}\to \tilde{V}^\prime$ between the universal covers.
One easily checks that $\tilde{f}$ is of the form $\tilde{f}(x^+,x^-) = (\tilde{f}^+(x^+),
\tilde{f}^-(x^-))$ with $\tilde{f}^\pm: \pr_\pm(\tilde{V})=\bbR \to \pr_\pm(\tilde{V}^\prime)=\bbR$ 
two orientation preserving embeddings satisfying the $\bbZ$-equivariance
condition $\tilde{f}^\pm(y+1) =\tilde{f}^\pm(y) +1$, for all $y\in \bbR$.
Passing to the quotients then defines the desired $\CLoc_2$-morphism
$D(f) := \tilde{f}^+\times\tilde{f}^- : \Cy\to \Cy$.
\sk

The resulting functor $D : \CC_2 \to \CC_2^D$ is left adjoint to the inclusion
functor $i : \CC_2^D\to \CC_2$. The adjunction unit $\eta : \id_{\CC_2} \to i\,D$
is given by the components $\eta_{U}:=\iota_{U}^{D(U)} :U\to D(U)$, for all $(U\subseteq\Mi)\in\CC_2$,
and $\eta_V := \iota_V^{D(V)} : V\to D(V)$, for all $(V\subseteq\Cy)\in\CC_2$.
The adjunction counit $\epsilon : D\,i\to \id_{\CC_2^D}$ is given by the components
$\epsilon_{U} := \id_U :  D(U)=U\to U$, for all $(U\subseteq\Mi)\in\CC^D_2$, 
and $\epsilon_{\Cy}:=\id_{\Cy} : D(\Cy)=\Cy\to \Cy$.
The proof that $D : \CC_2 \to \CC_2^D$ is a localization functor for the localization of $\CC_2$ 
at all Cauchy morphisms is completely analogous to the proof in \cite[Proposition 3.3]{BDSboundary}.
Alternatively, one can also see this more abstractly: Observing that 
$D(f)$ is an isomorphism if and only if $f$ is a Cauchy morphism,
the claim follows from the general theory of reflective localizations.
\end{proof}

We endow the localization $\CC_2^D$ with the pushforward 
orthogonality relation $D_\ast(\perp)$ (see \cite[Lemma 3.19]{BSWoperad}),
which in the present case coincides with the pullback of
$\perp$ along the inclusion functor $i : \CC_2^D\to \CC_2$. 
(This follows from the fact that two subsets are causally disjoint if and only if
their Cauchy developments are.) In simpler words,
two morphisms to a common target are orthogonal in $\ovr{\CC_2^D}$ if and only if their 
images are causally disjoint subsets in the target. 
We denote by $\CC_2^{D,\skl}\subseteq \CC_2^D$ the full subcategory
with only two objects, the Minkowski spacetime $\Mi$ and the flat cylinder $\Cy$,
and endow it with the pullback along the inclusion of the orthogonality relation of $\ovr{\CC_2^D}$.
\begin{propo}
The full subcategory inclusion $\CC_2^{D,\skl}\to \CC_2^D$ defines
an equivalence $\ovr{\CC_2^{D,\skl}}\to \ovr{\CC_2^D}$ of orthogonal categories.
\end{propo}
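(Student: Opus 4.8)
The plan is to invoke Lemma \ref{lem:orth-eq}, which reduces the claim to verifying its two conditions for the inclusion functor $F : \CC_2^{D,\skl}\to \CC_2^D$. Condition 2.) holds tautologically: by construction $\CC_2^{D,\skl}$ was endowed with the orthogonality relation pulled back along the inclusion, so $\perp_{\CC_2^{D,\skl}}\, = F^\ast(\perp_{\CC_2^D})$ holds by definition. Likewise, the inclusion of a full subcategory is automatically fully faithful, so half of condition 1.) is immediate. The only substantive point is therefore essential surjectivity of the underlying functor $F : \CC_2^{D,\skl}\to \CC_2^D$.

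To establish essential surjectivity, I would recall from the discussion preceding Proposition \ref{prop:reflectivelocalization} that the objects of $\CC_2^D$ come in exactly two kinds: the full cylinder $\Cy$, and double cones $U = I_+\times I_-\subseteq \Mi$ with $I_\pm\subseteq \bbR$ (not necessarily bounded) open intervals. The cylinder $\Cy$ is already an object of $\CC_2^{D,\skl}$, so it remains only to show that every double cone is isomorphic in $\CC_2^D$ to the Minkowski spacetime $\Mi = \bbR\times\bbR$.

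For this I would use the product-form characterization of $\CLoc_2$-morphisms between Minkowski subsets obtained in the proof of Proposition \ref{prop:reflectivelocalization}: in light-cone coordinates, such a morphism is necessarily of the form $f^+\times f^-$ with $f^\pm$ orientation preserving embeddings of intervals, and conversely every such product defines a $\CLoc_2$-morphism. Since any open interval $I\subseteq \bbR$ admits an orientation preserving diffeomorphism $I\cong \bbR$ (for instance an affine rescaling composed with $\arctan$ or $\tanh$, according to whether $I$ is bounded, a half-line, or all of $\bbR$), choosing such diffeomorphisms $f^\pm : I_\pm \to \bbR$ yields a morphism $f^+\times f^- : U = I_+\times I_- \to \bbR\times\bbR = \Mi$. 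Its image is all of $\Mi$, hence causally convex, and its inverse $(f^+)^{-1}\times (f^-)^{-1}$ is again of product form and therefore also a $\CLoc_2$-morphism. Thus $f^+\times f^-$ is a $\CLoc_2$-isomorphism, and because $\CC_2^D$ is a full subcategory, it is an isomorphism $U\cong \Mi$ in $\CC_2^D$.

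This exhausts the objects of $\CC_2^D$, so $F$ is essentially surjective and all hypotheses of Lemma \ref{lem:orth-eq} are met, yielding the asserted equivalence $\ovr{\CC_2^{D,\skl}}\to \ovr{\CC_2^D}$ of orthogonal categories. I do not anticipate any serious obstacle: the only step requiring care is confirming that the product diffeomorphisms are genuine orientation and time-orientation preserving conformal maps with causally convex image, but this is precisely what the product-form characterization from Proposition \ref{prop:reflectivelocalization} guarantees.
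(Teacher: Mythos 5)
Your proposal is correct and follows essentially the same route as the paper: reduce to Lemma \ref{lem:orth-eq}, note that fullness and the pullback orthogonality make conditions 1.) (first half) and 2.) automatic, and establish essential surjectivity by observing that every double cone $I_+\times I_-$ is isomorphic to $\Mi$ via a product $f^+\times f^-$ of orientation preserving diffeomorphisms $I_\pm\cong\bbR$, using the product-form characterization from the proof of Proposition \ref{prop:reflectivelocalization}. The extra details you supply (explicit diffeomorphisms, checking the inverse is again of product form) are fine but not needed beyond what the paper records.
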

\begin{proof}
The inclusion functor is by definition fully faithful.
To prove essential surjectivity, recall from Example \ref{ex:D}
that the objects in $\CC_2^D$ are either the flat cylinder $\Cy$
or double cone subsets $U = I_+ \times I_-\subseteq\Mi$ in the Minkowski spacetime.
Hence, essential surjectivity would follow if we could
prove that for each $U = I_+ \times I_-\subseteq\Mi$  there exists
a $\CLoc_2$-isomorphism $f : U\stackrel{\cong}{\longrightarrow} \Mi$ to the full Minkowski spacetime $\Mi$.
Using the characterization of these morphisms from the proof of
Proposition \ref{prop:reflectivelocalization}, we see that this is indeed the case:
Simply take any two orientation preserving diffeomorphisms $f^\pm : I_\pm 
\stackrel{\cong}{\longrightarrow} \bbR$ onto the real line.  
Recalling also that by definition the orthogonality relation of 
$\ovr{\CC_2^{D,\skl}}$ is the pullback along the inclusion 
of the one of $\ovr{\CC_2^D}$, the claim follows by Lemma \ref{lem:orth-eq}. 
\end{proof}

Let us summarize the main result of this section in a useful
diagrammatic form.
\begin{theo}\label{theo:orthcatrelationship}
The various orthogonal categories constructed in this section
are related by the following diagram of orthogonal functors
\begin{flalign}\label{eqn:OCatzigzag}
\xymatrix@C=1.2em@R=1.2em{
\ovr{\CC_2^{D,\skl}} \ar[rr]^-{\sim}~&&~ \ar[ddrr]_-{\id_{\ovr{\CC_2^D}}}\ovr{\CC_2^D} \ar[rr]^-{i} ~&&~\ar@{=>}[ld]_-{\epsilon}^{\cong} \ar@{=>}[rd]^-{\cong} \ovr{\CC_2}\ar[dd]|-{D^{}} \ar[rr]^-{\sim}~&&~\ovr{\CLoc_2}\ar[ddll]^-{L}\\
~&&~ ~&&~ ~&&~\\
~&&~ ~&&~\ovr{\CC_2^D} ~&&~
}
\end{flalign}
that commutes up to the displayed natural isomorphisms.
In this diagram equivalences of orthogonal categories are labeled by $\sim$.
The symbol $\epsilon$ denotes the counit of the reflective localization 
$D\dashv i$ from Proposition \ref{prop:reflectivelocalization} and
$L : \ovr{\CLoc_2} \stackrel{\sim}{\longrightarrow} \ovr{\CC_2}\stackrel{D}{\longrightarrow}
\ovr{\CC_2^D}$ is the orthogonal localization functor that is obtained
by choosing any quasi-inverse of the equivalence $\ovr{\CC_2} \stackrel{\sim}{\longrightarrow} \ovr{\CLoc_2}$.
\end{theo}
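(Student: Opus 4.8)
The plan is to read the theorem as an assembly of the results established earlier in this section, so that the proof reduces to identifying each arrow and each $2$-cell with data already in hand and then checking orthogonality and the two claimed commutativities. First I would record that every functor in the diagram is orthogonal. The equivalence $\ovr{\CC_2^{D,\skl}} \to \ovr{\CC_2^D}$ is the one produced by the immediately preceding proposition. The inclusion $i : \ovr{\CC_2^D} \to \ovr{\CC_2}$ and the reflector $D : \ovr{\CC_2} \to \ovr{\CC_2^D}$ are orthogonal by construction of the pushforward orthogonality relation on $\CC_2^D$, which, as noted just before this theorem, coincides with the pullback of $\perp$ along $i$. The top equivalence $\ovr{\CC_2} \to \ovr{\CLoc_2}$ is orthogonal by Lemma \ref{lem:orth-eq} together with Corollary \ref{cor:CLoc2equivalence}: the inclusion $\CC_2 \to \CLoc_2$ is fully faithful and essentially surjective, and $\perp_{\CC_2}$ is by definition the pullback of $\perp_{\CLoc_2}$, so both conditions of Lemma \ref{lem:orth-eq} hold.

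Next I would treat the left-hand triangle. Here the two composites to be compared are $D\,i$ and $\id_{\ovr{\CC_2^D}}$, and the filling $2$-cell is precisely the counit $\epsilon : D\,i \to \id_{\CC_2^D}$ of the reflective localization $D \dashv i$ from Proposition \ref{prop:reflectivelocalization}. It is a natural isomorphism because its components were computed there to be identities ($\epsilon_U = \id_U$ and $\epsilon_{\Cy} = \id_{\Cy}$); equivalently, this is the standard fact that the counit of a reflective subcategory inclusion is invertible. Since $\epsilon$ is a natural transformation between orthogonal functors, it is automatically a $2$-morphism in $\OCat$.

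For the right-hand triangle I would define $L$ as the stated composite $\ovr{\CLoc_2} \stackrel{\sim}{\to} \ovr{\CC_2} \stackrel{D}{\to} \ovr{\CC_2^D}$, where the first arrow is a chosen quasi-inverse $Q$ of the orthogonal equivalence $E : \ovr{\CC_2} \stackrel{\sim}{\to} \ovr{\CLoc_2}$; this quasi-inverse is itself orthogonal by the ``$\Leftarrow$'' direction of Lemma \ref{lem:orth-eq}, so $L = D\,Q$ is an orthogonal functor. The triangle then compares $L\,E = D\,Q\,E$ against $D$, and whiskering the natural isomorphism $Q\,E \cong \id_{\CC_2}$ (part of the equivalence data for $E$) by $D$ yields the displayed isomorphism $D\,Q\,E \cong D$, which is the filling $2$-cell on the right. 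That $L$ is an orthogonal localization functor of $\ovr{\CLoc_2}$ at all Cauchy morphisms then follows because $L$ is the composite of an orthogonal equivalence with the localization functor $D$, and, by Proposition \ref{prop:reflectivelocalization}, $D$ inverts exactly the Cauchy morphisms (which $Q$ preserves, since a morphism of $\CC_2$ is Cauchy if and only if its image in $\CLoc_2$ is).

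Since the entire statement is bookkeeping of previously established equivalences and (co)units, I expect no substantial obstacle. The only point demanding care is keeping the orthogonality relations straight, that is, confirming that each functor respects $\perp$ and that both $\epsilon$ and the right-hand isomorphism are genuine $2$-morphisms in $\OCat$ rather than mere natural transformations of the underlying functors.
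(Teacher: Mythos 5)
Your proposal is correct and follows exactly the route the paper intends: the theorem is stated as a summary with no separate proof, and it is assembled precisely from the preceding results of the section (the orthogonal equivalence $\ovr{\CC_2^{D,\skl}}\to\ovr{\CC_2^D}$, the reflective localization $D\dashv i$ with identity-component counit, Lemma \ref{lem:orth-eq} plus Corollary \ref{cor:CLoc2equivalence} for the equivalence $\ovr{\CC_2}\to\ovr{\CLoc_2}$, and the remark that morphisms of $\CC_2$ are Cauchy if and only if their images in $\CLoc_2$ are). Your added care about orthogonality of each functor and about the $2$-cells being $2$-morphisms in $\OCat$ is exactly the bookkeeping the paper leaves implicit.
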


We conclude this section by describing the orthogonal category
$\ovr{\CC_2^{D,\skl}}$ more explicitly. By definition,
this category has only two objects, the Minkowski spacetime $\Mi$
and the flat cylinder $\Cy$.
Using similar arguments as in the proof of Proposition \ref{prop:reflectivelocalization},
we can also describe the corresponding Hom-sets. 
Denote by $\Embb^+(\bbR)$ the set of orientation 
preserving embeddings of $\bbR$ into itself. 
For the endomorphisms of Minkowski spacetime, we find a bijection
\begin{subequations}\label{eqn:explicitmorphisms}
\begin{flalign}
\Hom_{\CLoc_2}^{}\big(\Mi,\Mi\big)\,\cong\, \Embb^+(\bbR)^2\quad.
\end{flalign}
Explicitly, the $\CLoc_2$-morphism associated with 
a pair $(f^+,f^-)\in\Embb^+(\bbR)^2$ of orientation preserving embeddings
of $\bbR$ into itself reads as $f: \Mi\to\Mi\,,~(x^+,x^-)\mapsto (f^+(x^+),f^-(x^-))$.
Furthermore, denote by $\Diff^+(\bbT)$ the set of 
orientation preserving diffeomorphisms of $\bbT=\bbR/\bbZ$. 
For the endomorphisms of the flat cylinder, we find the bijection
\begin{flalign}
\Hom_{\CLoc_2}^{}\big(\Cy,\Cy\big)\,\cong\, \Diff^+(\bbT)^2
\end{flalign}
given by associating to a pair $(g^+,g^-)\in \Diff^+(\bbT)^2$
of orientation preserving diffeomorphisms of $\bbT=\bbR/\bbZ$
the $\CLoc_2$-automorphism $g : \Cy\to\Cy\,,~[(x^+,x^-)]\mapsto [(g^+(x^+),g^-(x^-))]$.
Lastly, denote by $\Embb^{+,\leq 1}(\bbR)$ the set of 
orientation preserving embeddings of $\bbR$ into itself 
whose image is an open interval of length $\leq 1$. 
For the mixed Hom-sets, we find
\begin{flalign}
\Hom_{\CLoc_2}^{}\big(\Cy,\Mi\big)\,=\,\emptyset
\end{flalign}
and a bijection
\begin{flalign}
\Hom_{\CLoc_2}^{}\big(\Mi,\Cy\big)\,\cong\, \Embb^{+,\leq 1}(\bbR)^2\big/\bbZ\quad.
\end{flalign}
\end{subequations}
Here the $\bbZ$-action 
on $\Embb^{+,\leq 1}(\bbR)^2$ is given by translation $(f^+,f^-)\mapsto (f^+ +n , f^- -n)$,
for all $n\in \bbZ$. The $\CLoc_{2}$-morphism associated with
$[f^+,f^-]\in\Embb^{+,\leq 1}(\bbR)^2\big/\bbZ $ reads
as $f: \Mi\to\Cy\,,~(x^+,x^-)\mapsto [(f^+(x^+),f^-(x^-))]$.
\sk

To characterize the orthogonality relation on $\ovr{\CC_2^{D,\skl}}$,
let us first note that the causal future/past of a (possibly unbounded) double cone subset
$I_+\times I_- := (a^+,b^+)\times (a^-,b^-)\subseteq \Mi$ is given by
\begin{flalign}
\nn J^{+}_{\Mi}(I_+\times I_-) \,&=\, \big\{ (x^+,x^-)\in\Mi \,:\, 
x^+ > a^+ \text{ and }x^- > a^- \big\}\,\subseteq\,\Mi\quad,\\
J^{-}_{\Mi}(I_+\times I_-) \,&=\, \big\{ (x^+,x^-)\in\Mi \,:\, 
x^+ < b^+ \text{ and }x^- < b^- \big\}\,\subseteq\,\Mi\quad.
\end{flalign}
As usual, we denote their union by
$J_{\Mi}(I_+\times I_-):=J^+_{\Mi}(I_+\times I_-)\cup J^-_{\Mi}(I_+\times I_-)\subseteq \Mi$.
Then the orthogonality relation on $\ovr{\CC_2^{D,\skl}}$ is given explicitly
as follows:
\begin{itemize}
\item[(i)] $((f_1^+,f_1^-):\Mi\to\Mi)\perp((f_2^+,f_2^-):\Mi\to\Mi)$ are orthogonal if and only if
$J_{\Mi}\big(f_1^+(\bbR)\times f_1^-(\bbR)\big)\cap \big(f_2^+(\bbR)\times f_2^-(\bbR)\big)=\emptyset$.

\item[(ii)] $([f_1^+,f_1^-]:\Mi\to\Cy)\perp([f_2^+,f_2^-]:\Mi\to\Cy)$ are orthogonal
if and only if, for all $n \in \bbZ$, 
$J_{\Mi}\big(f_1^+(\bbR)\times f_1^-(\bbR)\big)\cap 
\big((f_2^+(\bbR)+n)\times (f_2^-(\bbR)-n)\big)=\emptyset$.

\item[(iii)] $(g^+,g^-) : \Cy\to\Cy$ is not orthogonal to any morphism.
\end{itemize}


\section{\label{sec:minmod}Skeletal model and reconstruction} 
As a consequence of Theorem \ref{theo:orthcatrelationship} and Proposition \ref{prop:localization}, 
the two-object orthogonal category $\ovr{\CC_2^{D,\skl}}$ captures
the theory of $2$-dimensional conformal AQFTs satisfying the time-slice axiom,
in the sense that we have an equivalence of categories $\AQFT(\ovr{\CLoc_2})^W \simeq
\AQFT(\ovr{\CC_2^{D,\skl}})$. (See Theorem \ref{theo:ordinarytominimal} below for
the precise statement.) 

The latter perspective is very efficient:
By Definition \ref{def:AQFT}, a theory $\AAA\in \AQFT(\ovr{\CC_2^{D,\skl}})$ simply consists
of two algebras, one for the Minkowski spacetime $\AAA(\Mi)\in\Alg$
and one for the flat cylinder $\AAA(\Cy)\in\Alg$,
together with a $\perp$-commutative action of the morphisms in $\ovr{\CC_2^{D,\skl}}$.
We refer to this description of $2$-dimensional conformal 
AQFTs satisfying time-slice as the {\em skeletal model}.
\sk

The aim of this section is to spell out in more detail how to pass between
the ordinary description and the skeletal one. Composing the horizontal 
orthogonal functors in the diagram \eqref{eqn:OCatzigzag} of 
Theorem \ref{theo:orthcatrelationship}, we obtain the orthogonal
full subcategory inclusion
\begin{flalign}\label{eqn:jinclusion}
j\,:\,\ovr{\CC_2^{D,\skl}} ~\longrightarrow~\ovr{\CLoc_2}\quad.
\end{flalign}
\begin{theo}\label{theo:ordinarytominimal}
The restriction of the pullback functor
\begin{flalign}\label{eqn:jpull}
j^\ast\,:\, \AQFT(\ovr{\CLoc_2})^W~\longrightarrow~\AQFT(\ovr{\CC_2^{D,\skl}})
\end{flalign}
to the full subcategory of $2$-dimensional conformal AQFTs satisfying time-slice is an equivalence of categories.
\end{theo}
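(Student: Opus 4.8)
The plan is to exhibit the restricted functor \eqref{eqn:jpull} as a composite involving the equivalence supplied by Proposition \ref{prop:localization}, and then to conclude by the two-out-of-three property for equivalences of categories. The conceptual work has already been done in Theorem \ref{theo:orthcatrelationship} and Propositions \ref{prop:localization} and \ref{prop:reflectivelocalization}; what remains is essentially a diagram chase through \eqref{eqn:OCatzigzag}.

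First I would invoke Proposition \ref{prop:localization} for the orthogonal localization functor $L : \ovr{\CLoc_2}\stackrel{\sim}{\longrightarrow}\ovr{\CC_2}\stackrel{D}{\longrightarrow}\ovr{\CC_2^D}$ from Theorem \ref{theo:orthcatrelationship}. This yields that the pullback
\[
L^\ast \,:\, \AQFT(\ovr{\CC_2^D}) ~\longrightarrow~ \AQFT(\ovr{\CLoc_2})^W
\]
is an equivalence onto the full subcategory of time-slice (i.e.\ $W$-constant) theories.

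Next I would precompose $j^\ast$ with $L^\ast$. Since $F\mapsto F^\ast$ is the action on $1$-morphisms of the $2$-functor \eqref{eqn:AQFT2functor}, it is contravariantly functorial and carries natural isomorphisms of orthogonal functors to natural isomorphisms, so $j^\ast\circ L^\ast\cong (L\,j)^\ast$. The decisive step is to identify $L\circ j$, up to natural isomorphism, with the top-left orthogonal equivalence $e:\ovr{\CC_2^{D,\skl}}\stackrel{\sim}{\longrightarrow}\ovr{\CC_2^D}$ appearing in \eqref{eqn:OCatzigzag}. Writing $j=s\circ i\circ e$, with $s:\ovr{\CC_2}\stackrel{\sim}{\longrightarrow}\ovr{\CLoc_2}$ the inclusion equivalence and $i:\ovr{\CC_2^D}\to\ovr{\CC_2}$, and recalling that $L=D\circ q$ for a chosen quasi-inverse $q$ of $s$, one chases through \eqref{eqn:OCatzigzag}:
\[
L\circ j \,=\, D\circ q\circ s\circ i\circ e \,\cong\, D\circ i\circ e \,\cong\, e \quad,
\]
where the first isomorphism uses $q\,s\cong\id_{\ovr{\CC_2}}$ and the second uses the counit isomorphism $\epsilon:D\,i\stackrel{\cong}{\longrightarrow}\id_{\ovr{\CC_2^D}}$ of the reflective localization from Proposition \ref{prop:reflectivelocalization}. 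This is the part requiring the most care, since it is precisely where the reflectivity of $\CC_2^D$ and the explicit shape of the diagram enter; everything else is formal.

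Finally, since $e$ is an orthogonal equivalence and the $2$-functor $\AQFT$ preserves equivalences, the composite $j^\ast\circ L^\ast\cong e^\ast$ is an equivalence of categories. As $L^\ast$ is itself an equivalence, composing with a quasi-inverse $P$ of $L^\ast$ yields $j^\ast\big|_{\AQFT(\ovr{\CLoc_2})^W}\cong (j^\ast\circ L^\ast)\circ P\cong e^\ast\circ P$, a composite of equivalences. Hence the restriction \eqref{eqn:jpull} of $j^\ast$ to the full subcategory of time-slice theories is an equivalence of categories, as claimed. I expect the only genuine subtlety to be the diagram chase identifying $L\circ j\cong e$; the remaining steps are purely formal consequences of the $2$-functoriality of $\AQFT$ recorded in \eqref{eqn:AQFT2functor}.
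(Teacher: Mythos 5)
Your proposal is correct and follows essentially the same route as the paper: both arguments are a diagram chase through \eqref{eqn:OCatzigzag} using the $2$-functoriality of $\AQFT$, Proposition \ref{prop:localization} for the localization functor(s), and the counit isomorphism $\epsilon: D\,i\stackrel{\cong}{\longrightarrow}\id_{\ovr{\CC_2^D}}$. The only (cosmetic) difference is that you cancel $q\,s$ and $D\,i$ at the level of orthogonal functors before applying $(-)^\ast$ and then invoke two-out-of-three, whereas the paper applies the $\AQFT$ $2$-functor to the whole diagram and checks that each horizontal arrow in the resulting diagram is an equivalence.
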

\begin{proof}
Observe that the orthogonal functor $j$ defined in \eqref{eqn:jinclusion}
is the composition of the horizontal orthogonal functors in
\eqref{eqn:OCatzigzag}.
Applying the AQFT $2$-functor \eqref{eqn:AQFT2functor} to the diagram \eqref{eqn:OCatzigzag},
we obtain a diagram of functors
\begin{flalign}\label{eqn:diagramAQFT}
\xymatrix@C=1.2em@R=1.2em{
\AQFT(\ovr{\CC_2^{D,\skl}}) ~&&~ 
\ar[ll]_-{\sim}  \AQFT(\ovr{\CC_2^D}) ~&&~ 
\ar[ll]_-{i^\ast} \AQFT(\ovr{\CC_2}) \ar@{=>}[dl]_-{\cong}\ar@{=>}[dr]^-{\cong}~&&~  
\ar[ll]_-{\sim}\AQFT(\ovr{\CLoc_2})\\
~&&~ ~&&~ ~&&~\\
~&&~ ~&&~\ar[uull]^-{\id} \ar[uu]|-{D^{\ast}}\AQFT(\ovr{\CC_2^D}) \ar[uurr]_-{L^\ast}~&&~
}
\end{flalign}
that commutes up to the displayed natural isomorphisms. (As before, we label equivalences by $\sim$.)
Using now Proposition \ref{prop:localization} for the orthogonal localization functors $L$ and $D$,
we obtain the diagram
\begin{flalign}\label{eqn:diagramAQFTeqv}
\xymatrix@C=1.2em@R=1.2em{
\AQFT(\ovr{\CC_2^{D,\skl}}) ~&&~ 
\ar[ll]_-{\sim}  \AQFT(\ovr{\CC_2^D}) ~&&~ 
\ar[ll]_-{i^\ast}^-{\sim} \AQFT(\ovr{\CC_2})^W \ar@{=>}[dl]_-{\cong}\ar@{=>}[dr]^-{\cong}~&&~  
\ar[ll]_-{\sim}\AQFT(\ovr{\CLoc_2})^W\\
~&&~ ~&&~ ~&&~\\
~&&~ ~&&~\ar[uull]^-{\id} \ar[uu]^-{\sim}_-{D^{\ast}}\AQFT(\ovr{\CC_2^D}) \ar[uurr]^-{\sim}_-{L^\ast}~&&~
}
\end{flalign}
in which each functor is an equivalence. The composition of the horizontal functors coincides 
with the restricted pullback
functor $j^\ast$ in \eqref{eqn:jpull}, hence we have shown that the latter is an equivalence.
\end{proof}
\begin{rem}
This result provides a very simple prescription for how to extract from
the ordinary description of $2$-dimensional conformal AQFTs satisfying time-slice
the associated skeletal one. Given any $\AAA\in \AQFT(\ovr{\CLoc_2})^W$,
the corresponding skeletal model $j^\ast(\AAA)\in\AQFT(\ovr{\CC_2^{D,\skl}}) $ 
is given by restricting via $j$ the underlying $\perp$-commutative and $W$-constant 
functor $\AAA : \CLoc_2\to\Alg$ to the full subcategory
$\CC_2^{D,\skl}\subseteq \CLoc_2$. By construction,
the skeletal model then consists of only two algebras,
one for the Minkowski spacetime $j^\ast(\AAA)(\Mi)=\AAA(\Mi)$ and 
one for the flat cylinder $j^\ast(\AAA)(\Cy)=\AAA(\Cy)$,
together with the induced $\perp$-commutative action
of the morphisms in the full subcategory $\ovr{\CC_2^{D,\skl}}\subseteq \ovr{\CLoc_2}$.
\end{rem}

Spelling out the reconstruction of
the ordinary description of a $2$-dimensional 
conformal AQFT satisfying time-slice from a skeletal model is
more involved because it requires finding a 
quasi-inverse for the equivalence in \eqref{eqn:jpull}. 
It is important to emphasize that every equivalence of categories
does admit a quasi-inverse, hence the question here
is {\em not} about the existence of a reconstruction functor 
but rather about finding a concrete model. We shall now solve 
this problem under the additional hypothesis
that the target category $\Alg = {}^\ast\Alg_{\As}(\TT)$
is the category of associative and unital $\ast$-algebras
in a cocomplete involutive closed symmetric monoidal category $\TT$.
(Recall from Remark \ref{rem:choices} that this is the case for
the category ${}^\ast\Alg_{\bbC}={}^\ast\Alg_{\As}(\Vec_\bbC)$
of associative and unital $\ast$-algebras over $\bbC$, but it is {\em not} the
case for the category of $C^\ast$-algebras.)
Using the left adjoint functors from item a) 
of Proposition \ref{prop:Kanextension} associated with 
the horizontal orthogonal equivalences in \eqref{eqn:OCatzigzag},
which we collectively denote by $!$,
we define the composite {\it reconstruction} functor
\begin{flalign}\label{eqn:recfunctor}
\mathrm{rec}\,:\,\xymatrix@C=1.5em{
\AQFT(\ovr{\CC_2^{D,\skl}})\ar[r]^-{!}~&~\AQFT(\ovr{\CC_2^D}) \ar[r]^-{D^\ast} ~&~ 
\AQFT(\ovr{\CC_2}) \ar[r]^-{!}~&~\AQFT(\ovr{\CLoc_2})
}\quad,
\end{flalign}
where $D^\ast$ denotes the pullback along the orthogonal localization functor 
$D:\ovr{\CC_2}\to\ovr{\CC_2^D}$. 
\begin{theo}\label{theo:minimaltoordinary}
Suppose that $\Alg = {}^\ast\Alg_{\As}(\TT)$
is the category of associative and unital $\ast$-algebras
in a cocomplete involutive closed symmetric monoidal category $\TT$.
Then the functor \eqref{eqn:recfunctor} takes values in the full subcategory
$\AQFT(\ovr{\CLoc_2})^W\subseteq \AQFT(\ovr{\CLoc_2})$ of 
$2$-dimensional conformal AQFTs satisfying time-slice
and it defines a quasi-inverse of \eqref{eqn:jpull}.
\end{theo}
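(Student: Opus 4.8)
The plan is to verify two things: that \eqref{eqn:recfunctor} corestricts to the time-slice subcategory $\AQFT(\ovr{\CLoc_2})^W$, and that, as a functor into that subcategory, it is a quasi-inverse of the equivalence \eqref{eqn:jpull}. Since Theorem \ref{theo:ordinarytominimal} already establishes that $j^\ast$ is an equivalence, the second point reduces to producing a single natural isomorphism: any functor $R$ with $j^\ast\circ R\cong\id$ is automatically a quasi-inverse of $j^\ast$, because composing with the genuine quasi-inverse yields $R\cong(j^\ast)^{-1}$. The argument then rests on one formal observation that I would record first: for each of the two horizontal orthogonal equivalences in \eqref{eqn:OCatzigzag}, the associated left adjoint $!$ of Proposition \ref{prop:Kanextension}~a) is a quasi-inverse of the corresponding pullback equivalence. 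Indeed, since the $2$-functor \eqref{eqn:AQFT2functor} sends orthogonal equivalences to equivalences of categories, pullback along such an equivalence is an equivalence of AQFT categories; and a left adjoint of an equivalence is necessarily a quasi-inverse, as left adjoints are unique up to isomorphism and the right-adjoint quasi-inverse is itself a left adjoint. Writing $k^\ast,\ell^\ast$ for the pullbacks along the rightmost equivalence $\ovr{\CC_2}\to\ovr{\CLoc_2}$ and the leftmost equivalence $\ovr{\CC_2^{D,\skl}}\to\ovr{\CC_2^D}$, and $k_!,\ell_!$ for the two copies of $!$ in \eqref{eqn:recfunctor}, this gives $k^\ast k_!\cong\id$, $k_!k^\ast\cong\id$ and the analogues for $\ell$.

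Next I would isolate the behaviour of the middle functor $D^\ast$, which is governed by the reflective localization $D\dashv i$ of Proposition \ref{prop:reflectivelocalization}. Pre-composing an AQFT with the counit $\epsilon$ and unit $\eta$ of this adjunction produces natural transformations $i^\ast D^\ast\Rightarrow\id$ and $\id\Rightarrow D^\ast i^\ast$. Since $\epsilon$ is an isomorphism (the localization being reflective), the first is an isomorphism on all of $\AQFT(\ovr{\CC_2^D})$. The components of $\eta$ are the Cauchy morphisms $\eta_U=\iota_U^{D(U)}$, so for an AQFT $\BBB$ the transformation $\BBB\,\eta$ is an isomorphism precisely when $\BBB$ is $W$-constant. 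Thus $i^\ast D^\ast\cong\id$ holds unconditionally, while $D^\ast i^\ast\cong\id$ holds only after restricting to $\AQFT(\ovr{\CC_2})^W$. (This is Proposition \ref{prop:localization} applied to $D$, exhibiting $i^\ast$ as the quasi-inverse of the corestriction of $D^\ast$.)

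For the corestriction claim, $D^\ast$ lands in $\AQFT(\ovr{\CC_2})^W$ because $D$ sends Cauchy morphisms to isomorphisms, and the final $k_!$ preserves $W$-constancy: its quasi-inverse $k^\ast$ both preserves and reflects $W$-constancy, as the equivalence $\ovr{\CC_2}\to\ovr{\CLoc_2}$ matches up the Cauchy morphisms on the two sides, so $k_!\BBB$ is $W$-constant whenever $k^\ast k_!\BBB\cong\BBB$ is. Hence \eqref{eqn:recfunctor} factors through $\AQFT(\ovr{\CLoc_2})^W$. For the quasi-inverse property one then telescopes
\[
j^\ast\circ\mathrm{rec}
=\ell^\ast\, i^\ast\, k^\ast\, k_!\, D^\ast\, \ell_!
\;\cong\;\ell^\ast\, i^\ast\, D^\ast\, \ell_!
\;\cong\;\ell^\ast\, \ell_!
\;\cong\;\id,
\]
applying in turn $k^\ast k_!\cong\id$, $i^\ast D^\ast\cong\id$ and $\ell^\ast\ell_!\cong\id$. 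By the reduction in the first paragraph this already proves the theorem; for completeness the opposite composite $\mathrm{rec}\circ j^\ast\cong\id$ follows identically, the only change being that one now invokes $D^\ast i^\ast\cong\id$ on the image of $k^\ast$, which lies in $\AQFT(\ovr{\CC_2})^W$ since the domain of $j^\ast$ is the time-slice subcategory.

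I expect the one genuinely delicate point to be precisely this last restriction: $D^\ast$ appears in \eqref{eqn:recfunctor} as a functor into the \emph{full} category $\AQFT(\ovr{\CC_2})$ rather than its $W$-constant corestriction, so the identity $D^\ast i^\ast\cong\id$ may only be used after checking $W$-constancy of the AQFT to which it is applied. Tracking where this holds — trivially for the $i^\ast D^\ast$ direction, but only on the image of $k^\ast$ for the $D^\ast i^\ast$ direction — is the sole piece of non-formal bookkeeping. Everything else is manipulation of adjunctions and whiskered (co)units; the standing hypothesis that $\Alg={}^\ast\Alg_{\As}(\TT)$ with $\TT$ cocomplete and closed enters only to guarantee, via Proposition \ref{prop:Kanextension}~a), the existence of the left adjoints $k_!,\ell_!$ in the first place.
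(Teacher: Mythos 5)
Your proof is correct, and its overall strategy coincides with the paper's: telescope $j^\ast\circ\mathrm{rec}$ through the diagram of Theorem \ref{theo:ordinarytominimal}, use that the operadic left adjoints along the two orthogonal equivalences are quasi-inverses of the corresponding pullbacks, note that $i^\ast D^\ast\cong\id$ holds unconditionally, and reduce everything to the $W$-constancy bookkeeping. The one place where you genuinely diverge is the key non-formal step, namely that the final left adjoint $k_!$ preserves $W$-constancy. The paper writes a $W$-constant $\AAA\cong D^\ast(\BBB)\cong k^\ast L^\ast(\BBB)$ using the right-hand triangle of \eqref{eqn:diagramAQFT} and concludes $k_!(\AAA)\cong k_!\,k^\ast L^\ast(\BBB)\cong L^\ast(\BBB)$, which is $W$-constant by Proposition \ref{prop:localization} applied to $L$. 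You instead argue that $k^\ast$ \emph{reflects} $W$-constancy --- because the inclusion $\ovr{\CC_2}\to\ovr{\CLoc_2}$ is an equivalence matching the Cauchy morphisms on the two sides, and Cauchy morphisms are stable under composition with isomorphisms --- so that $W$-constancy of $k^\ast k_!\BBB\cong\BBB$ forces that of $k_!\BBB$. Both arguments are valid and of comparable length; yours avoids the detour through the localization functor $L$ and is slightly more self-contained, while the paper's reuses Proposition \ref{prop:localization} verbatim and thereby sidesteps the (easy but unstated) verification that an equivalence identifying the respective subsets $W$ makes the pullback reflect $W$-constancy. Your explicit reduction to the one-sided identity $j^\ast\circ\mathrm{rec}\cong\id$ is a clean way of packaging the bookkeeping that the paper compresses into its ``if and only if'' assertion.
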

\begin{proof}
Recall that the pullback functor $j^\ast$ in \eqref{eqn:jpull} 
is given by the composite of the first row in the diagram
\eqref{eqn:diagramAQFTeqv} and that by construction 
the functor \eqref{eqn:recfunctor} is the composite of the left 
adjoints of the functors displayed in the first row of 
\eqref{eqn:diagramAQFT}. In particular, the first functor in 
\eqref{eqn:recfunctor} is an equivalence, 
$D^\ast: \AQFT(\ovr{\CC_2^D}) \to \AQFT(\ovr{\CC_2})^W \subseteq \AQFT(\ovr{\CC_2})$ 
provides an equivalence onto the full subcategory of $W$-constant AQFTs 
by Proposition \ref{prop:localization} 
and the last functor in \eqref{eqn:recfunctor},
which we denote in this proof by
$k_!: \AQFT(\ovr{\CC_2}) \to \AQFT(\ovr{\CLoc_2})$, 
is part of an adjoint equivalence, whose right adjoint functor 
is displayed on the top right of \eqref{eqn:diagramAQFT}. 
Since the latter functor preserves $W$-constancy, 
it follows that this adjoint equivalence restricts 
to the full subcategories 
$\AQFT(\ovr{\CC_2})^W\subseteq \AQFT(\ovr{\CC_2})$ and $\AQFT(\ovr{\CLoc_2})^W\subseteq \AQFT(\ovr{\CLoc_2})$ 
of $W$-constant AQFTs if $k_!$ preserves $W$-constancy too, 
i.e.\ $k_!$ sends $W$-constant AQFTs on $\ovr{\CC_2}$ 
to $W$-constant AQFTs on $\ovr{\CLoc_2}$. 
\sk

To prove the latter statement, let $\AAA\in \AQFT(\ovr{\CC_2})^W$ be any $W$-constant
AQFT on $\ovr{\CC_2}$. By Proposition \ref{prop:localization},
such $\AAA\cong D^\ast(\BBB)$ is isomorphic to the pullback along the localization
functor $D$ of some $\BBB\in \AQFT(\ovr{\CC_2^D})$. We then compute
\begin{flalign}
k_!(\AAA)\,\cong\, k_!\,D^\ast(\BBB) \,\cong\, k_!\,k^\ast\,L^\ast(\BBB)\,\cong\, L^\ast(\BBB)\quad,
\end{flalign}
where in the second step we have used the right triangle in \eqref{eqn:diagramAQFT} 
(the horizontal arrow is $k^\ast$ according to the notation used in this proof) 
and the third step follows from the fact that $k_!$ is left adjoint to the equivalence $k^\ast$.
Applying again Proposition \ref{prop:localization}, but now for the orthogonal localization functor $L$,
proves that $k_!(\AAA)\cong L^\ast(\BBB)\in \AQFT(\ovr{\CLoc_2})^W$ is $W$-constant.
\end{proof}

\begin{rem}
While the pullback functor $D^\ast$ in \eqref{eqn:recfunctor} is easy to compute,
the two left adjoints $!$ are more involved.
Using standard techniques from operad theory, see e.g.\ \cite[Proposition 2.12]{BSWoperad}
and also \cite[Section 6]{BSWinvolutive} for the $\ast$-operadic case,
it is possible to provide point-wise colimit formulas 
for both instances of $!$. In particular, this means that,
given any  skeletal model $\AAA\in \AQFT(\ovr{\CC_2^{D,\skl}})$,
reconstructing its ordinary description $\mathrm{rec}(\AAA)\in \AQFT(\ovr{\CLoc_2})^W$
requires computing, for each object $M\in\CLoc_2$, a double colimit
$\mathrm{rec}(\AAA)(M) = \,!\,D^\ast \,!(\AAA)(M)\in\TT$ in the target category $\TT$.
Since these explicit colimit formulas are not very instructive, we shall not spell them out in detail.
\end{rem}


\section{\label{sec:chiralization}Chiralization adjunctions}
In this section we study the relationship between 
$2$-dimensional conformal AQFTs that satisfy the time-slice axiom
and chiral conformal AQFTs. Describing both types of theories
via their  skeletal models, we will construct
adjunctions that allow us to assign to each chiral conformal AQFT
a $2$-dimensional conformal AQFT satisfying time-slice, and vice versa
assign to each $2$-dimensional conformal AQFT satisfying time-slice
its two chiral components.
\sk

Let us first introduce the relevant orthogonal category
that, via Definition \ref{def:AQFT}, defines the category of chiral conformal AQFTs.
\begin{defi}
The category $\Man_1$ is defined as follows: Its objects
are all oriented and connected $1$-manifolds $N$. A morphism $h: N\to N^\prime$
is an orientation preserving embedding. The orthogonal category
$\ovr{\Man_1}:=(\Man_1,\perp)$ is then defined as follows:
A pair of morphisms is orthogonal $(h_1 : N_1\to N^\prime)\perp (h_2:N_2\to N^\prime)$
if and only if the images are disjoint subsets of $N^\prime$.
\end{defi}

The following result follows immediately from the classification of connected $1$-manifolds.
\begin{propo}
Denote by $\Man_1^{\skl}\subseteq \Man_1$ the full subcategory whose objects are the real
line $\bbR$ and the circle $\bbT=\bbR/\bbZ$. (We endow both with the positive orientation
$\mathfrak{o}=\dd x$.) Then the inclusion functor $\Man_1^{\skl}\to\Man_1$ is an equivalence of categories.
\end{propo}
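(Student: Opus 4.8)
The plan is to verify the two defining conditions of an equivalence of categories for the inclusion functor $\Man_1^{\skl}\to\Man_1$, namely full faithfulness and essential surjectivity. Full faithfulness is immediate: since $\Man_1^{\skl}$ is by hypothesis a \emph{full} subcategory of $\Man_1$, the inclusion functor is automatically fully faithful, as the Hom-sets $\Hom_{\Man_1^{\skl}}(N,N^\prime)$ and $\Hom_{\Man_1}(N,N^\prime)$ coincide for all objects $N,N^\prime$ of $\Man_1^{\skl}$. So the only substantive content is essential surjectivity.

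For essential surjectivity I would invoke the classification of connected smooth $1$-manifolds, which is a standard result: every connected smooth $1$-manifold is diffeomorphic either to the real line $\bbR$ or to the circle $\bbT$. First I would take an arbitrary object $N\in\Man_1$, i.e.\ an oriented and connected $1$-manifold, and apply this classification to obtain a diffeomorphism $\phi : N\to \bbR$ or $\phi : N\to\bbT$ depending on whether $N$ is non-compact or compact. The remaining point is to promote this diffeomorphism to an \emph{isomorphism in $\Man_1$}, which requires it to be orientation preserving. If $\phi$ already preserves the orientation (where $\bbR$ and $\bbT$ carry the positive orientation $\mathfrak{o}=\dd x$), we are done; if not, we post-compose $\phi$ with the orientation-reversing diffeomorphism $x\mapsto -x$ of $\bbR$ (respectively the induced map on $\bbT=\bbR/\bbZ$), which yields an orientation preserving diffeomorphism and hence a $\Man_1$-isomorphism $N\cong \bbR$ or $N\cong\bbT$. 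This exhibits every object of $\Man_1$ as isomorphic to an object of $\Man_1^{\skl}$, establishing essential surjectivity.

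Combining full faithfulness with essential surjectivity, the inclusion functor is an equivalence of categories, as claimed. The main obstacle, if any, is purely bookkeeping: one must check that the orientation-reversal step genuinely produces a morphism in $\Man_1$ (an orientation preserving embedding, in fact a diffeomorphism) and that it can always be arranged, but this is routine given that every connected $1$-manifold admits a diffeomorphism to $\bbR$ or $\bbT$ and that reversing orientation is an available operation on these model manifolds. Since the statement is elementary and relies only on the well-known classification of connected $1$-manifolds, the proof is short and the author's version is likely to simply cite this classification and note fullness of the subcategory inclusion.
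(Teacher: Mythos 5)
Your proof is correct and matches the paper's approach: the paper gives no separate proof, stating only that the result ``follows immediately from the classification of connected $1$-manifolds,'' which is precisely the argument you spell out (fullness of the subcategory inclusion plus the $\bbR$-or-$\bbT$ classification, with the orientation fixed by post-composing with $x\mapsto -x$ if needed).
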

Let us endow $\Man_1^{\skl}$ with the pullback along the inclusion of the orthogonality relation $\perp$
on $\Man_1$ and denote the resulting orthogonal category by $\ovr{\Man_1^{\skl}}:=(\Man_1^{\skl},\perp)$.
From Lemma \ref{lem:orth-eq} and the proposition above, we obtain an orthogonal equivalence 
$\ovr{\Man_1^{\skl}}\stackrel{\sim}{\longrightarrow}\ovr{\Man_1}$,
and hence applying the AQFT $2$-functor \eqref{eqn:AQFT2functor} 
yields an equivalence
\begin{flalign}
 \AQFT(\ovr{\Man_1}) ~\stackrel{\sim}{\longrightarrow}~\AQFT(\ovr{\Man_1^{\skl}})
\end{flalign}
between the corresponding AQFT categories. 
In what follows we will work with the equivalent category $\AQFT(\ovr{\Man_1^{\skl}})$ of skeletal models for 
chiral conformal AQFTs. 
Similarly to \eqref{eqn:explicitmorphisms}, we have the following characterization
of the morphisms in $\ovr{\Man_1^{\skl}}$
\begin{alignat}{3}
\nn \Hom_{\Man_1}\big(\bbR,\bbR\big)\,&=\,\Embb^+(\bbR)~~,\quad
&\Hom_{\Man_1}\big(\bbT,\bbT\big)\,&=\, \Diff^+(\bbT)~~,\quad\\
\Hom_{\Man_1}\big(\bbT,\bbR\big)\,&=\,\emptyset~~,\quad
&\Hom_{\Man_1}\big(\bbR,\bbT\big)\,&\cong \,\Embb^{+,\leq 1}(\bbR)\big/\bbZ\quad.
\end{alignat}
The orthogonality relation on $\ovr{\Man_1^{\skl}}$ then reads explicitly as follows:
\begin{itemize}
\item[(i)] $(h_1 : \bbR\to \bbR)\perp (h_2:\bbR\to\bbR)$ if and only if
$h_1(\bbR)\cap h_2(\bbR)=\emptyset$.

\item[(ii)] $([h_1] : \bbR\to \bbT)\perp ([h_2]: \bbR\to\bbT)$ if and only if, 
for all $n \in \bbZ$, $h_1(\bbR) \cap (h_2(\bbR) + n) = \emptyset$.

\item[(iii)] $g:\bbT\to\bbT$ is not orthogonal to any morphism.
\end{itemize}

Comparing this to our explicit description of the orthogonal category
$\ovr{\CC_2^{D,\skl}}$ (see the end of Section \ref{sec:conformalgeometry}),
we observe that there exist two evident orthogonal functors
\begin{subequations}
\begin{flalign}
\pi_{\pm} \,:\, \ovr{\CC_2^{D,\skl}}~\longrightarrow~\ovr{\Man_1^{\skl}}
\end{flalign}
that act on objects as
\begin{flalign}
\pi_\pm(\Mi)\,=\,\bbR~~,\quad \pi_{\pm}(\Cy)\,=\, \bbT\quad,
\end{flalign}
and on morphisms by projecting onto the $\pm$-component, i.e.\
\begin{flalign}
\pi_\pm\big((f^+,f^-):\Mi\to\Mi\big) \,&=\,(f^\pm :\bbR\to\bbR) \quad,\\
\pi_\pm\big([f^+,f^-]:\Mi\to\Cy\big) \,&=\,([f^\pm] :\bbR\to\bbT) \quad,\\
\pi_\pm\big((g^+,g^-):\Cy\to\Cy\big) \,&=\,(g^\pm :\bbT\to\bbT) \quad.
\end{flalign}
\end{subequations}
These orthogonal functors induce pullback functors
\begin{flalign}\label{eqn:pipmpullback}
{\pi_\pm}^\ast\,:\, \AQFT(\ovr{\Man_1^{\skl}})~\longrightarrow~\AQFT(\ovr{\CC_2^{D,\skl}})
\end{flalign}
that allow us to map from chiral conformal AQFTs to $2$-dimensional conformal AQFTs satisfying time-slice.
This construction turns out to be physically sensible.
Given any chiral conformal AQFT $\BBB\in\AQFT(\ovr{\Man_1^{\skl}})$, i.e.\ 
a $\perp$-commutative functor $\BBB : \Man_1^{\skl}\to \Alg$,
the corresponding $2$-dimensional conformal AQFT ${\pi_\pm}^\ast(\BBB)\in \AQFT(\ovr{\CC_2^{D,\skl}})$
is given by the following $\perp$-commutative functor ${\pi_\pm}^\ast(\BBB)
:\CC_2^{D,\skl}\to\Alg$: On objects, we have that
\begin{subequations}\label{eqn:pullbackpi}
\begin{flalign}
{\pi_\pm}^\ast(\BBB)(\Mi)\,=\, \BBB(\pi_\pm(\Mi)) \,=\,\BBB(\bbR)~~,\quad
{\pi_\pm}^\ast(\BBB)(\Cy)\,=\, \BBB(\pi_\pm(\Cy))\,=\,\BBB(\bbT)\quad,
\end{flalign}
and on morphisms we have that
\begin{flalign}
{\pi_\pm}^\ast(\BBB)(f^+,f^-)\,=\,\BBB(f^\pm)\,:\,\BBB(\bbR)~&\longrightarrow~\BBB(\bbR)\quad,\\
{\pi_\pm}^\ast(\BBB)([f^+,f^-])\,=\,\BBB([f^\pm])\,:\,\BBB(\bbR)~&\longrightarrow~\BBB(\bbT)\quad,\\
{\pi_\pm}^\ast(\BBB)(g^+,g^-)\,=\,\BBB(g^\pm)\,:\,\BBB(\bbT)~&\longrightarrow~\BBB(\bbT)\quad.
\end{flalign}
\end{subequations}
Hence, we clearly see that ${\pi_\pm}^\ast(\BBB)$ is only sensitive to 
one of the light-cone coordinates $x^\pm$, which is the 
characteristic feature of a chiral theory.
\sk

Suppose for the moment that our target category $\Alg$ satisfies
the hypotheses of item a) in Proposition \ref{prop:Kanextension}.
Then there exist left adjoint functors
\begin{flalign}
{\pi_\pm}_!\,:\, \AQFT(\ovr{\CC_2^{D,\skl}}) ~\longrightarrow~ \AQFT(\ovr{\Man_1^{\skl}})
\end{flalign}
that allow us to map from $2$-dimensional conformal AQFTs satisfying time-slice to
chiral conformal AQFTs. It is tempting to think of
${\pi_\pm}_!$ as a ``chiralization functor'' that extracts the 
$\pm$-chiral component of a $2$-dimensional conformal AQFT.
However, this functor is \emph{not} suitable for this task because,
in many important cases, it yields trivial theories.
Let us substantiate this claim.
\begin{ex}
Suppose that the hypotheses of item a) in Proposition \ref{prop:Kanextension} are satisfied.
(For instance, we can take $\Alg={}^\ast\Alg_\bbC$, the category of associative and unital
$\ast$-algebras over $\bbC$.)
Then the operadic left Kan extension ${\pi_\pm}_!(\AAA)\in \AQFT(\ovr{\Man_1^{\skl}})$
of a theory $\AAA\in \AQFT(\ovr{\CC_2^{D,\skl}})$ can be worked out by using the explicit
model from \cite[Proposition 2.12]{BSWoperad}. 
Evaluating ${\pi_\pm}_!(\AAA)$ 
on the object $\bbR\in \Man_1^{\skl}$, one then finds that
\begin{flalign}\label{eqn:Lanquotient}
{\pi_\pm}_!(\AAA)(\bbR)\,\cong\, \AAA(\Mi)\big/ \mathcal{I}_{\mp}\,\in\, \Alg
\end{flalign}
is the quotient of the Minkowski spacetime algebra $\AAA(\Mi)\in\Alg$
by a two-sided ideal $\mathcal{I}_{\mp}\subseteq \AAA(\Mi)$. The ideal
$\mathcal{I}_-$ is generated by the elements $\AAA(\id,k)(a)-a$, 
for all $k\in\Embb^+(\bbR)$ and $a\in \AAA(\Mi)$, and the ideal 
$\mathcal{I}_+$ is generated by the elements $\AAA(k,\id)(a)-a$, 
for all $k\in\Embb^+(\bbR)$ and $a\in \AAA(\Mi)$.
In words, this means that ${\pi_\pm}_!(\AAA)(\bbR)$ is 
the algebra of coinvariants of $\AAA(\Mi)$ 
associated with the action of the embedding monoid $\Embb^+(\bbR)$ of the 
opposite chirality.

We are now in the position to explain why ${\pi_\pm}_!$
does not provide a sensible chiralization functor. 
Recall that many important examples in AQFT, e.g.\ the free 
theories constructed via CCR (or CAR) quantization
of non-degenerate Poisson (or inner product) vector spaces, are described by {\em simple algebras}.
So let us suppose that the theory $\AAA$ assigns a simple algebra
$\AAA(\Mi)\in\Alg$ to the Minkowski spacetime $\Mi$.
Then the quotient algebra \eqref{eqn:Lanquotient} that is assigned to 
the line $\bbR$ is either the trivial algebra $0$ or $\AAA(\Mi)$, 
depending on whether the two-sided ideal  $\mathcal{I}_{\mp}\subseteq \AAA(\Mi)$
is all of $\AAA(\Mi)$ or $0$. The latter case $\mathcal{I}_\mp =0$
arises if and only if the theory $\AAA$ is insensitive to the 
light-cone coordinate $x^\mp$ of Minkowski spacetime $\Mi$, 
i.e.\ $\AAA(\id,k)=\id$ for all $k\in \Embb^+(\bbR)$ in the case
of $-$ and $\AAA(k,\id)=\id$ for all $k\in \Embb^+(\bbR)$ in the case
of $+$, which is only true in the very
restrictive case where $\AAA$ is chiral. It follows that 
${\pi_\pm}_!(\AAA)(\bbR)\cong 0$ is the trivial algebra for many important examples
of {\it non}-chiral $2$-dimensional conformal AQFTs, including
in particular the free scalar field (see e.g.\ \cite{CRV})
or the Abelian current from Section \ref{sec:example}, 
which explains our claim that ${\pi_\pm}_!$ do not admit 
the interpretation of ``chiralization functors''. 
\end{ex}

We will now show that the hypotheses 
of item b) in Proposition \ref{prop:Kanextension} are satisfied
in the present case (see Theorem \ref{theo:chiralization} below), hence we obtain right adjoint functors
\begin{flalign}\label{eqn:operadicRKE}
{\pi_\pm}_\ast\,:\, \AQFT(\ovr{\CC_2^{D,\skl}}) ~\longrightarrow~ \AQFT(\ovr{\Man_1^{\skl}})\quad.
\end{flalign}
We will argue in Remark \ref{rem:Rehren}, and further illustrate by 
a concrete example in Section \ref{sec:example}, that the latter 
define physically sensible chiralization functors.
Before we can apply item b) of Proposition \ref{prop:Kanextension},
we have to develop an explicit model for $\Ran_{\pi_{\pm}}$.
\begin{con}\label{con:Ranpi}
Let us consider the categorical right Kan extension
\begin{flalign}
\Ran_{\pi_+}\,:\,\Fun\big(\CC_2^{D,\skl},\Alg\big)~\longrightarrow~
\Fun\big(\Man_1^{\skl},\Alg\big)
\end{flalign}
for $\pi_+$. 
Since the category $\Alg$ is by hypothesis complete, we can compute right Kan extensions
by the usual point-wise formula, see e.g.\ \cite[Section X.3]{Mac}: 
For every $\AAA : \CC_2^{D,\skl}\to\Alg$
and $N\in \Man_1^{\skl}$, i.e.\ either $N=\bbR$ or $N=\bbT$, we have that
\begin{flalign}\label{eqn:Ranpm}
\Ran_{\pi_+}(\AAA)(N)\,=\,\lim\Big(\xymatrix@C=1.5em{
N\!\downarrow\! \pi_+ \ar[r]~&~\CC_2^{D,\skl} \ar[r]^-{\AAA}~&~\Alg
}\Big)
\end{flalign}
is given by a limit of the displayed $\Alg$-valued diagram
on the under category $N\!\downarrow\!  \pi_+$, 
where the unlabeled functor is the forgetful functor.
See e.g.\ \cite[Section II.6]{Mac} for the relevant definitions.
\sk

Let us consider first the simpler case $N=\bbT$.
The under category then reads as
\begin{flalign}
\bbT\!\downarrow\!\pi_+\,\simeq\,\begin{cases}
\mathrm{Obj:}& g\in \Diff^+(\bbT)\\
\mathrm{Mor:}& \Diff^+(\bbT)^2\ni (g^+,g^-) : g\to g^+\,g
\end{cases}
\end{flalign}
and the forgetful functor assigns $g\mapsto \Cy$ and 
$((g^+,g^-):g\to g^+\,g)\mapsto ((g^+,g^-):\Cy\to\Cy)$.
Introducing the category $\BB\Diff^+(\bbT)$ consisting
of a single object $\ast$ with morphisms $\Diff^+(\bbT)$,
one easily checks that the functor
\begin{flalign}
\nn \BB\Diff^+(\bbT)~&\longrightarrow~\bbT\!\downarrow\!\pi_+\quad,\\
\nn \ast ~&\longmapsto~\id\quad,\\
g\in\Diff^+(\bbT)~&\longmapsto~
(\id,g):\id\to\id \quad\label{eqn:Diffinitial}
\end{flalign}
is initial. (The relevant argument is completely analogous
to the ``simple case'' in Appendix \ref{app:initial}.)
This implies that \eqref{eqn:Ranpm} for $N=\bbT$ is isomorphic to the limit
\begin{flalign}\label{eqn:RanT}
\Ran_{\pi_+}(\AAA)(\bbT)\,\cong\,\lim\Big(\xymatrix@C=2em{
\BB\Diff^+(\bbT) \ar[r]^-{\AAA_{\Cy}^-}~&~\Alg
}\Big)\quad,
\end{flalign}
where $\AAA_{\Cy}^- : \BB\Diff^+(\bbT)\to 
\bbT\!\downarrow\!\pi_+\to \CC_2^{D,\skl} \stackrel{\AAA}{\longrightarrow}\Alg$  denotes
the composite functor. Explicitly, we find that
$\AAA_{\Cy}^-(\ast) = \AAA(\Cy)$ and $\AAA_{\Cy}^-(g)= \AAA(\id,g):\AAA(\Cy)\to\AAA(\Cy)$.
Rephrasing this result in a more concrete language, we obtain that
\begin{flalign}\label{eqn:RanpiCy}
\Ran_{\pi_+}(\AAA)(\bbT)\,\cong\, \AAA(\Cy)^{\mathrm{inv}_-}\,\subseteq\, \AAA(\Cy)
\end{flalign} 
is the algebra of invariants of $\AAA(\Cy)$ 
associated with the action of the diffeomorphism group $\Diff^+(\bbT)$ of the 
opposite chirality.
\sk

Let us consider now the case $N=\bbR$, in which the under category is richer
\begin{flalign}\label{eqn:commaR}
\bbR\!\downarrow\!\pi_+\,\simeq\,\begin{cases}
\mathrm{Obj:}& h\in \Embb^+(\bbR)\text{ or }[h]\in \Embb^{+,\leq 1}(\bbR)\big/\bbZ\\
\mathrm{Mor:}& \Embb^{+}(\bbR)^2\ni (f^+,f^-) : h\to f^+\,h\\
& \Embb^{+,\leq 1}(\bbR)^2\big/\bbZ\ni [f^+,f^-] : h\to [f^+\,h]\\
& \Diff^+(\bbT)^2\ni (g^+,g^-): [h]\to [g^+\, h]
\end{cases}\quad.
\end{flalign}
Introducing the category $\BB\Embb^+(\bbR)$ consisting
of a single object $\ast$ with morphisms $\Embb^+(\bbR)$,
one checks that the functor
\begin{flalign}
\nn \BB\Embb^+(\bbR)~&\longrightarrow~\bbR\!\downarrow\!\pi_+\quad,\\
\nn \ast ~&\longmapsto~\id\quad,\\
k\in\Embb^+(\bbR)~&\longmapsto~
(\id,k):\id\to\id \quad \label{eqn:Embbinitial}
\end{flalign}
is initial. (This check is more involved than in the previous case $N=\bbT$.
The relevant details can be found in Appendix \ref{app:initial}.)
This implies that \eqref{eqn:Ranpm} for $N=\bbR$ is isomorphic to the limit
\begin{flalign}\label{eqn:RanR}
\Ran_{\pi_+}(\AAA)(\bbR)\,\cong\,\lim\Big(\xymatrix@C=2em{
\BB\Embb^+(\bbR) \ar[r]^-{\AAA_{\Mi}^-}~&~\Alg
}\Big)\quad,
\end{flalign}
where $\AAA_{\Mi}^- : \BB\Embb^+(\bbR)\to 
\bbR\!\downarrow\!\pi_+\to \CC_2^{D,\skl} \stackrel{\AAA}{\longrightarrow}\Alg $  denotes
the composite functor. Explicitly, we find that
$\AAA_{\Mi}^-(\ast) = \AAA(\Mi)$ and $\AAA_{\Mi}^-(k)= \AAA(\id,k):\AAA(\Mi)\to\AAA(\Mi)$.
Hence, we obtain that 
\begin{flalign}\label{eqn:RanpiMi}
\Ran_{\pi_+}(\AAA)(\bbR)\,\cong\,\AAA(\Mi)^{\mathrm{inv}_-}\,\subseteq\,\AAA(\Mi)
\end{flalign} 
is the algebra of invariants of $\AAA(\Mi)$ associated with the action of the embedding monoid 
$\Embb^+(\bbR)$ of the opposite chirality.
\sk

It remains to describe the action of 
the functor $\Ran_{\pi_+}(\AAA) : \Man_1^{\skl}\to\Alg$ on morphisms.
For the case of $(h: \bbR\to\bbR)\in \Embb^+(\bbR)$, one finds
\begin{subequations}\label{eqn:RanpiMorphisms}
\begin{flalign}
\Ran_{\pi_+}(\AAA)(h)\,=\, \AAA(h,\id) \,:\, \AAA(\Mi)^{\mathrm{inv}_-}~\longrightarrow~\AAA(\Mi)^{\mathrm{inv}_-}\quad.
\end{flalign}
For the case of $([h]: \bbR\to\bbT)\in \Embb^{+,\leq 1}(\bbR)\big/\bbZ$, 
we obtain
\begin{flalign}
\Ran_{\pi_+}(\AAA)([h])\,=\, \AAA([h,k]) \,:\, 
\AAA(\Mi)^{\mathrm{inv}_-}~\longrightarrow~\AAA(\Cy)^{\mathrm{inv}_-}\quad,
\end{flalign}
where $k\in \Embb^{+,\leq 1}(\bbR)$ is chosen arbitrarily. 
Using the zig-zags constructed in Appendix \ref{app:initial},
one immediately checks that the morphism $\Ran_{\pi_+}(\AAA)([h])$ 
does not depend on the choice of $k$ because it acts on 
the subalgebra $\AAA(\Mi)^{\mathrm{inv}_-}\subseteq \AAA(\Mi)$ of invariants
of the embedding monoid $\Embb^+(\bbR)$ of the $-$ chirality.
Finally, for the case of $(g:\bbT\to\bbT)\in\Diff^+(\bbT)$, one finds
\begin{flalign}
\Ran_{\pi_+}(\AAA)(g)\,=\, \AAA(g,\id) \,:\, \AAA(\Cy)^{\mathrm{inv}_-}~\longrightarrow~\AAA(\Cy)^{\mathrm{inv}_-}\quad.
\end{flalign}
\end{subequations}
This completes our description of the categorical right Kan extension $\Ran_{\pi_+}$
for $\pi_+$. The case $\Ran_{\pi_-}$ for $\pi_-$ is completely analogous 
by swapping the two chiralities.
\end{con}

\begin{theo}\label{theo:chiralization}
The right adjoint functors \eqref{eqn:operadicRKE} exist
and can be computed by restricting  $\Ran_{\pi_\pm}$
to the AQFT categories. The resulting adjunctions
\begin{flalign}\label{eqn:chiralizationadjunction}
\xymatrix{
{\pi_\pm}^{\ast}\,:\,\AQFT\big(\ovr{\Man_1^{\skl}}\big) 
~\ar@<0.5ex>[r] & \ar@<0.5ex>[l]~ 
\AQFT\big(\ovr{\CC_2^{D,\skl}}) \,:\, \Ran_{\pi_{\pm}}=:{\pi_\pm}_\ast
}
\end{flalign}
exhibit $\AQFT\big(\ovr{\Man_1^{\skl}}\big)$ as a full coreflective
subcategory of $\AQFT\big(\ovr{\CC_2^{D,\skl}})$.
\end{theo}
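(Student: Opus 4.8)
The plan is to prove the statement in two stages. First I would verify the hypothesis of item~b) of Proposition~\ref{prop:Kanextension}, namely that $\Ran_{\pi_\pm}$ preserves $\perp$-commutativity; this immediately produces the right adjoints ${\pi_\pm}_\ast=\Ran_{\pi_\pm}$ of ${\pi_\pm}^\ast$ and settles the first two assertions. Second, I would show that the unit of the adjunction ${\pi_\pm}^\ast\dashv{\pi_\pm}_\ast$ is a natural isomorphism, which is equivalent to ${\pi_\pm}^\ast$ being fully faithful and hence, together with the existence of the right adjoint, to the claimed full coreflective embedding.

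For the first stage I would treat $\pi_+$ (the case $\pi_-$ being symmetric under swapping chiralities) and check $\perp$-commutativity of $\Ran_{\pi_+}(\AAA)$ directly from its explicit description in Construction~\ref{con:Ranpi}. Since a morphism $g:\bbT\to\bbT$ is orthogonal to no morphism in $\ovr{\Man_1^{\skl}}$, only orthogonal pairs of the two types $(\bbR\to\bbR,\bbR\to\bbR)$ and $(\bbR\to\bbT,\bbR\to\bbT)$ must be examined. The key observation is that $\Ran_{\pi_+}(\AAA)$ takes values in the invariant subalgebras $\AAA(\Mi)^{\mathrm{inv}_-}\subseteq\AAA(\Mi)$ and $\AAA(\Cy)^{\mathrm{inv}_-}\subseteq\AAA(\Cy)$, on which $\AAA(h,\id)$ and $\AAA(h,k)$ coincide for every $k\in\Embb^+(\bbR)$, because $(h,k)=(h,\id)\,(\id,k)$ and $\AAA(\id,k)$ acts as the identity on $\AAA(\Mi)^{\mathrm{inv}_-}$. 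I would use this freedom in the opposite chirality to make the relevant morphisms orthogonal in $\ovr{\CC_2^{D,\skl}}$: given $h_1(\bbR)\cap h_2(\bbR)=\emptyset$ in the first case, I pick $k_1,k_2\in\Embb^+(\bbR)$ with $k_1(\bbR)=(0,\infty)$ and $k_2(\bbR)=(-\infty,0)$, so that the double cones $h_1(\bbR)\times k_1(\bbR)$ and $h_2(\bbR)\times k_2(\bbR)$ are causally disjoint and hence $(h_1,k_1)\perp(h_2,k_2)$. The $\perp$-commutativity of $\AAA$ then forces $\AAA(h_1,\id)$ and $\AAA(h_2,\id)$ to act by commuting maps on $\AAA(\Mi)^{\mathrm{inv}_-}$, which is precisely the required property of $\Ran_{\pi_+}(\AAA)$.

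The main obstacle is the analogous verification in the second case, where the opposite-chirality components are constrained to lie in $\Embb^{+,\leq 1}(\bbR)$ and orthogonality in $\ovr{\CC_2^{D,\skl}}$ must be achieved against \emph{every} $\bbZ$-translate simultaneously. Here I would first note that the hypothesis $h_1(\bbR)\cap(h_2(\bbR)+n)=\emptyset$ for all $n\in\bbZ$ forces each integer $n$ to place the translated interval entirely to the left or entirely to the right of $h_1(\bbR)$, and that the largest ``left'' integer and the smallest ``right'' integer differ by exactly one. A short computation of the causal shadows $J_{\Mi}$ of the double cones then shows that it is enough to choose $k_1,k_2\in\Embb^{+,\leq 1}(\bbR)$ with images of total length below $1$ and offset so that their positions run opposite to the $+$-components; such a choice makes $h_1(\bbR)\times k_1(\bbR)$ causally disjoint from $(h_2(\bbR)+n)\times(k_2(\bbR)-n)$ for all $n$ at once, whence $[h_1,k_1]\perp[h_2,k_2]$. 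Combining this with the $k$-independence of $\Ran_{\pi_+}(\AAA)([h])$ on invariants (the zig-zags of Appendix~\ref{app:initial}) and with $\perp$-commutativity of $\AAA$ concludes the first stage, so that Proposition~\ref{prop:Kanextension}~b) yields the adjunctions \eqref{eqn:chiralizationadjunction}.

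For the second stage I would compute the composite $\Ran_{\pi_\pm}\,{\pi_\pm}^\ast$ directly. For any $\BBB\in\AQFT(\ovr{\Man_1^{\skl}})$ the opposite-chirality action on ${\pi_+}^\ast(\BBB)$ is trivial, since ${\pi_+}^\ast(\BBB)(\id,k)=\BBB(\pi_+(\id,k))=\BBB(\id)=\id$ and likewise on the cylinder, so the invariant subalgebras equal the full algebras and the point-wise limits defining $\Ran_{\pi_+}$ collapse. This gives $\Ran_{\pi_+}({\pi_+}^\ast\BBB)(\bbR)=\BBB(\bbR)$ and $\Ran_{\pi_+}({\pi_+}^\ast\BBB)(\bbT)=\BBB(\bbT)$ with the unit acting as the identity, so $\eta:\id\to\Ran_{\pi_\pm}\,{\pi_\pm}^\ast$ is a natural isomorphism. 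Hence ${\pi_\pm}^\ast$ is fully faithful and, being a left adjoint with right adjoint ${\pi_\pm}_\ast$, exhibits $\AQFT(\ovr{\Man_1^{\skl}})$ as a full coreflective subcategory of $\AQFT(\ovr{\CC_2^{D,\skl}})$, as claimed.
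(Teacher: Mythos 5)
Your proposal is correct and follows essentially the same route as the paper: verify that $\Ran_{\pi_\pm}$ preserves $\perp$-commutativity so that item~b) of Proposition~\ref{prop:Kanextension} applies, then compute that the adjunction unit is the identity; your first stage simply fills in the geometric details that the paper's proof dismisses with ``one easily checks''. One small slip worth noting: the fixed choice $k_1(\bbR)=(0,\infty)$, $k_2(\bbR)=(-\infty,0)$ yields causal disjointness of the double cones only when $h_1(\bbR)$ lies to the left of $h_2(\bbR)$ --- for the opposite ordering you must swap the roles of $k_1$ and $k_2$, which is harmless since the orthogonality relation and the $\perp$-commutativity condition are symmetric in the pair.
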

\begin{proof}
Using the model for $\Ran_{\pi_\pm}$
from Construction \ref{con:Ranpi}, one easily checks
that $\Ran_{\pi_\pm}(\AAA) : \Man_1^{\skl}\to\Alg$
is a $\perp$-commutative functor for all $\perp$-commutative
functors $\AAA : \CC_2^{D,\skl}\to\Alg$. Hence, item b) of
Proposition \ref{prop:Kanextension} proves the first statement.
\sk

To prove also the second statement, let us spell out the unit
$\eta : \id \to {\pi_\pm}_{\ast}\,{\pi_\pm}^{\ast}$ of the adjunction 
${\pi_\pm}^{\ast} \dashv {\pi_\pm}_\ast$. Given any
$\BBB\in\AQFT\big(\ovr{\Man_1^{\skl}}\big)$, we find
using \eqref{eqn:pullbackpi}, \eqref{eqn:RanpiCy}, \eqref{eqn:RanpiMi} and 
\eqref{eqn:RanpiMorphisms} that
\begin{flalign}
{\pi_\pm}_{\ast}\,{\pi_\pm}^{\ast}(\BBB)\,=\,\BBB\quad.
\end{flalign}
The components $\eta_\BBB : \BBB\to {\pi_\pm}_{\ast}\,{\pi_\pm}^{\ast}(\BBB)$
of the unit are the identities $\eta_\BBB = \id_{\BBB}$. Hence,
$\eta$ is a natural isomorphism, which proves the second statement.
\end{proof}

\begin{rem}
The counit $\epsilon : {\pi_\pm}^{\ast}\,{\pi_\pm}_{\ast}\to \id$
of the adjunction ${\pi_\pm}^{\ast} \dashv {\pi_\pm}_\ast$ admits
an explicit description too. Let us spell out the details
for $\pi_+$ and note that the case of $\pi_-$ is completely analogous 
by swapping the two chiralities.
For all $\AAA\in \AQFT\big(\ovr{\CC_2^{D,\skl}})$,
we find using again \eqref{eqn:pullbackpi}, \eqref{eqn:RanpiCy}, \eqref{eqn:RanpiMi} and 
\eqref{eqn:RanpiMorphisms} that
\begin{subequations}
\begin{flalign}
{\pi_+}^{\ast}\,{\pi_+}_{\ast}(\AAA)(\Mi)\,=\,\AAA(\Mi)^{\mathrm{inv}_{-}}~~,\quad
{\pi_+}^{\ast}\,{\pi_+}_{\ast}(\AAA)(\Cy)\,=\,\AAA(\Cy)^{\mathrm{inv}_{-}}\quad,
\end{flalign}
and that
\begin{flalign}
\nn {\pi_+}^{\ast}\,{\pi_+}_{\ast}(\AAA)(f^+,f^-)\,=\,\AAA(f^+,\id) \,=\, 
\AAA(f^+,f^-)\,:\,\AAA(\Mi)^{\mathrm{inv}_{-}}~&\longrightarrow~\AAA(\Mi)^{\mathrm{inv}_{-}}\quad,\\
\nn {\pi_+}^{\ast}\,{\pi_+}_{\ast}(\AAA)([f^+,f^-])\,=\,\AAA([f^+,k]) \,=\, 
\AAA([f^+,f^-])\,:\,\AAA(\Mi)^{\mathrm{inv}_{-}}~&\longrightarrow~\AAA(\Cy)^{\mathrm{inv}_{-}}\quad,\\
{\pi_+}^{\ast}\,{\pi_+}_{\ast}(\AAA)(g^+,g^-)\,=\,\AAA(g^+,\id) \,=\, 
\AAA(g^+,g^-)\,:\,\AAA(\Cy)^{\mathrm{inv}_{-}}~&\longrightarrow~\AAA(\Cy)^{\mathrm{inv}_{-}}\quad,
\end{flalign}
\end{subequations}
where in the second steps we use explicitly that these morphisms act on invariants.
The component $\epsilon_{\AAA} : {\pi_+}^{\ast}\,{\pi_+}_{\ast}(\AAA)\to \AAA$
of the counit is then given by including the subalgebras of invariants.
Note that, in contrast to the unit, the components of the counit 
are in general {\em not} isomorphisms. A necessary and sufficient
condition for $\epsilon_{\AAA}$ to be an isomorphism is that
$\AAA(\Mi)^{\mathrm{inv}_{-}} = \AAA(\Mi)$ and $\AAA(\Cy)^{\mathrm{inv}_{-}} = \AAA(\Cy)$.
This is the case if and only if $\AAA$ is insensitive to 
the light-cone coordinate $x^-$, i.e.\ $\AAA$ is $+$-chiral. 
In other words, the counits of the adjunctions \eqref{eqn:chiralizationadjunction} allow us to detect
whether or not a $2$-dimensional conformal AQFT 
$\AAA\in \AQFT\big(\ovr{\CC_2^{D,\skl}})$ is chiral: Indeed, $\AAA$ is $\pm$-chiral
if and only if the corresponding component of the counit $\epsilon_{\AAA} : {\pi_\pm}^{\ast}\,{\pi_\pm}_{\ast}(\AAA)\to \AAA$ is an isomorphism.
\end{rem}

\begin{rem}\label{rem:Rehren}
The right adjoint functors ${\pi_{\pm}}_{\ast}$ of the adjunctions in Theorem \ref{theo:chiralization}
should be interpreted as chiralization functors that extract the $\pm$-chiral
components of a $2$-dimensional conformal AQFT $\AAA\in \AQFT\big(\ovr{\CC_2^{D,\skl}})$
satisfying the time-slice axiom. Our construction provides an elegant categorical
formalization, and also a generalization to the context of locally covariant conformal AQFTs,
of an earlier proposal by Rehren \cite{Rehren} who has defined the chiral components of a 
$2$-dimensional local conformal net on the Minkowski spacetime $\Mi$.
(In our terminology, this is an AQFT on the category of all double cone subsets
$I_+\times I_-\subseteq \Mi$ with orthogonality relation given by causal disjointness.)
The (maximal) chiral observable algebras are defined in \cite[Definition 2.1]{Rehren} by 
first extending the $2$-dimensional theory to a covering manifold of $\Mi$,
which is isomorphic to the cylinder \cite{BGL93}, and then taking invariants
of the (vacuum preserving) M\"obius subgroup $\mathsf{M\ddot ob} \subset \Diff^+(\bbT)$ 
of the diffeomorphism group of the opposite chirality. Under additional assumptions on the
$2$-dimensional theory, the chiral observable algebras also admit a more geometrical description 
by taking intersections of $2$-dimensional double cone algebras \cite[Corollary 2.7]{Rehren}.
\end{rem}


\section{\label{sec:example}Example: The Abelian current}
We illustrate our chiralization construction
from Theorem \ref{theo:chiralization} by applying 
it to the $2$-dimensional conformal AQFT that describes the
Abelian current. In particular, we will show that
the resulting chiral components are related to the usual chiral 
currents. In this section we choose $\Alg= {}^\ast\Alg_{\bbC}$ 
to be the category of associative and unital $\ast$-algebras over $\bbC$.

\paragraph{The model:} Let us start by briefly recalling the object
$\AAA\in\AQFT(\ovr{\CLoc_2})^W$ that describes the Abelian current.
On each $M\in\CLoc_2$, the solution space of this model
is given by
\begin{flalign}
\Sol(M)\, :=\, \big\{j\in\Omega^1(M)\,:\, \dd_M j = \dd_M {\ast_M} j = 0\big\}\quad,
\end{flalign}
where $\ast_M$ denotes the Hodge operator and $\dd_M$ 
the de Rham differential on $M$. Because the Hodge
operator on $1$-forms is invariant under conformal transformations, 
i.e.\ $\ast_M f^\ast = f^\ast\, \ast_{M^\prime}$ for all $\CLoc_2$-morphisms
$f:M\to M^\prime$,
we obtain a functor $\Sol : \CLoc_2^\op\to\Vec_\bbR$ that acts on
$\CLoc_2$-morphisms $f:M\to M^\prime$ via pullback
$\Sol(f) := f^\ast$ of differential forms.
The linear observables on $M\in\CLoc_2$ 
for this model are described by the quotient vector space
\begin{flalign}\label{eqn:curLLL}
\LLL(M)\,:=\, \frac{\Omega^1_\cc(M)}{\dd_M C^\infty_\cc(M)\oplus {\ast_M}\dd_M C^\infty_\cc(M)}\quad,
\end{flalign}
where $\cc$ denotes compact support. This can be promoted to
a functor $\LLL : \CLoc_2\to\Vec_\bbR$ that acts on
$\CLoc_2$-morphisms $f:M\to M^\prime$ via pushforward
$\LLL(f) := f_\ast$ of compactly supported differential forms.
The non-degenerate pairing between linear observables and solutions
is given by integration
\begin{flalign}
\LLL(M)\otimes\Sol(M)~\longrightarrow~\bbR~~,\quad [\alpha]\otimes j ~\longmapsto~\int_M 
\alpha\wedge j\quad.
\end{flalign}
We endow $\LLL(M)$ with the $2$-dimensional analog of the Poisson structure 
from \cite{DappiaggiLang}, which is defined on linear observables by 
\begin{flalign}\label{eqn:curtau}
\tau_M\,:\,\LLL(M)\otimes\LLL(M) ~\longrightarrow~\bbR~~,\quad
[\alpha]\otimes[\beta]~\longmapsto~\int_M (\dd_M\alpha) ~G_M(\dd_M\beta)\quad,
\end{flalign}
where $G_M := G_M^+ - G_M^- : \Omega^2_\cc(M)\to C^\infty(M)$ is the causal propagator 
for the differential operator 
$P_M := \dd_M{\ast_M}\dd_M :  C^\infty(M)\to \Omega^2(M)$.
(Note that $P_M = -{\ast_M}\,\square_M$, where
$\square_M:= {\ast_{M}}\dd_M {\ast_M}\dd_M + \dd_M {\ast_M}\dd_M{\ast_{M}}$ 
denotes the d'Alembert operator on differential forms.)
Using that the operators $P_M$ are natural in $M\in\CLoc_2$, i.e.\ 
$P_M\, f^\ast = f^\ast\, P_{M^\prime}$ for all $\CLoc_2$-morphisms
$f:M\to M^\prime$, one easily checks that the Poisson structures
are natural in the sense that $\tau_{M^\prime}\circ (\LLL(f)\otimes\LLL(f)) = \tau_{M}$.
\sk

As a preparation for the next paragraphs, let us work out a simplification 
of the Poisson vector space $(\LLL(M),\tau_M)$ given in \eqref{eqn:curLLL} and \eqref{eqn:curtau}.
Using that the Hodge operator squares to the identity on $1$-forms, we can decompose
$\Omega^1_{\cc}(M) = \Omega_\cc^{1,-}(M)\oplus \Omega_\cc^{1,+}(M)$
into anti-self-dual ($\ast_M\alpha = -\alpha$) and self-dual ($\ast_M\alpha = \alpha$)
$1$-forms. The corresponding projectors read as 
$\tfrac{\id\pm \ast_M}{2} : \Omega^1_{\cc}(M)\to\Omega^{1,\pm}_\cc(M)$.
Applying these projectors to \eqref{eqn:curLLL} one finds that
\begin{flalign}\label{eqn:simpleLLL}
\LLL(M)\,=\, \frac{\Omega^{1,-}_{\cc}(M)}{\dd_M^- C^\infty_\cc(M)}\oplus
\frac{\Omega^{1,+}_{\cc}(M)}{\dd_M^+ C^\infty_\cc(M)}\quad,
\end{flalign}
where $\dd_M^\pm := \tfrac{\id \pm \ast_M}{2}\,\dd_M$ are the (anti-)self-dual
projections of the de Rham differential. It is easy to prove that the Poisson
structure \eqref{eqn:curtau} is diagonal with respect to this decomposition, i.e.\
\begin{flalign}\label{eqn:simpletau}
\tau_M\big([\alpha^-]\oplus [\alpha^+] , [\beta^-]\oplus[\beta^+]\big)
\,=\,\int_M\Big( (\dd_M\alpha^-)~G_M(\dd_M\beta^-) + (\dd_M\alpha^+)~G_M(\dd_M\beta^+)\Big)\quad.
\end{flalign}
Let us briefly explain how this can be shown. Recall that $P_M = -{\ast_M}\,\square_M$, hence
$G_M = E_M\,\ast_M$, where $E_M$ denotes
the causal propagator for 
the d'Alembertian $\square_M$ on differential forms.
Recall further that the latter satisfies the identities
$\dd_M\,E_M = E_M\,\dd_M$ and $\ast_M\,E_M =E_M\,\ast_M$.
Using this and also Stokes' theorem, we compute
\begin{flalign}
\nn \tau_M([\alpha],[\beta])\,&=\, 
\int_M\alpha \wedge \dd_M{\ast_M}\dd_M E_M (\beta)\\
\nn \,&=\,\int_M \alpha^+ \wedge \tfrac{\id-\ast_M}{2}\dd_M{\ast_M}\dd_M E_M(\beta) + \int_M\alpha^- \wedge \tfrac{\id+\ast_M}{2}\dd_M{\ast_M}\dd_M E_M(\beta)\\
\nn \,&=\,\int_M \alpha^+ \wedge \dd_M{\ast_M}\dd_M E_M(\beta^+) + \int_M\alpha^- \wedge \dd_M{\ast_M}\dd_M E_M(\beta^-)\\
\, &=\,\int_M\Big((\dd_M\alpha^-)~G_M(\dd_M\beta^-) + (\dd_M\alpha^+)~G_M(\dd_M\beta^+)\Big)\quad,
\end{flalign}
where in the third step we have used the identity
\begin{flalign}
\tfrac{\id\pm \ast_M}{2} \,\dd_M{\ast_M}\dd_M = \dd_M{\ast_M}\dd_M\,\tfrac{\id\mp \ast_M}{2}
\pm \frac{1}{2}\,\square_M\quad.
\end{flalign}

Quantization of this model is achieved via the canonical commutation relations functor.
For later convenience, we describe the latter through deformation quantization, 
see e.g.\ \cite[Section 5.3]{BPS}, i.e.\ we define
\begin{flalign}
\AAA(M)\,:=\, \CCR(\LLL(M),\tau_M)\,:=\,\big(\Sym_{\bbC}^{}\,\LLL(M),\star_{M}^{},
\oone_M^{},\ovr{\,\cdot\,}^{M}\big)\,\in\,
{}^\ast\Alg_{\bbC}
\end{flalign}
to be (the underlying vector space of) the complexified symmetric algebra of $\LLL(M)\in\Vec_\bbR$,
together with the (Moyal-Weyl type) star-product $\star_M^{}$ determined by $\tau_M$,
the ordinary unit $\oone_M^{}$ and the $\ast$-involution $\ovr{\,\cdot\,}^{M}$ determined
by complex conjugation.
Due to naturality of all its ingredients, the
assignment $M\mapsto \AAA(M)$ defines a functor $\AAA : \CLoc_2\to{}^\ast\Alg_{\bbC}$
that, via standard arguments, can be shown to 
satisfy Einstein causality and the time-slice axiom, i.e.\
$\AAA\in \AQFT(\ovr{\CLoc_2})^W$. This completes our description
of the Abelian current from the ordinary perspective.
Its skeletal model, as defined in Sections \ref{sec:conformalgeometry} and \ref{sec:minmod}, 
is given by restricting the functor $\AAA$ to the full subcategory
$\CC_2^{D,\skl}\subseteq \CLoc_2$ whose only two objects are the Minkowski spacetime
$\Mi$ and the flat cylinder $\Cy$.

\paragraph{Chiralization on the Minkowski spacetime:} Working in light-cone
coordinates $x^\pm$ on the Minkowski spacetime $M=\Mi$,
one finds that the (anti-)self-dual $1$-forms are given by
$\Omega^{1,\pm}_{\cc}(\Mi) = C^\infty_\cc(\Mi)\,\dd x^\pm$.
Using fiber integrations along the projection maps
$\pi_{\pm} : \Mi\to \bbR$, we obtain a linear isomorphism
\begin{flalign}\label{eqn:fiberMinkowski}
{\pr_{+}}_\ast \oplus {\pr_{-}}_\ast\,:\,\LLL(\Mi) = 
\frac{\Omega^{1,-}_{\cc}(\Mi)}{\dd_{\Mi}^- C^\infty_\cc(\Mi)}\oplus
\frac{\Omega^{1,+}_{\cc}(\Mi)}{\dd_{\Mi}^+ C^\infty_\cc(\Mi)}~\stackrel{\cong}{\longrightarrow}
~ C^\infty_\cc(\bbR)\oplus C^\infty_\cc(\bbR)\,=:\,\LLL^\prime(\Mi)\quad.
\end{flalign}
Explicitly, the fiber integration of $[\varphi\,\dd x^-]\in 
\Omega^{1,-}_{\cc}(\Mi)/ \dd_{\Mi}^- C^\infty_\cc(\Mi)$ is given by
${\pr_{+}}_\ast([\varphi \,\dd x^-]) (x^+)
= \int_\bbR \varphi(x^+,x^-)\,\dd x^-$, i.e.\ it is a compactly supported function 
of the light-cone coordinate $x^+$. Similarly, the fiber integration of $[\varphi\,\dd x^+]\in 
\Omega^{1,+}_{\cc}(\Mi)/ \dd_{\Mi}^+ C^\infty_\cc(\Mi)$ is a compactly supported function of
$x^-$. This means that one should associate the first summands in \eqref{eqn:fiberMinkowski} 
with $+$ and the second summands with $-$. This is clarified further by studying
the action of the endomorphisms $\Hom_{\CLoc_2}(\Mi,\Mi)\cong \Embb^+(\bbR)^2$ 
(see also \eqref{eqn:explicitmorphisms})
on the vector space $\LLL^\prime(\Mi)$ that is induced via the isomorphism 
\eqref{eqn:fiberMinkowski} from the action on $\LLL(\Mi)$. 
For $(f^+,f^-)\in \Embb^+(\bbR)^2$,
one finds from the fiber-wise diffeomorphism invariance of fiber integrations that
\begin{flalign}\label{eqn:inducedMinkowski}
\LLL^\prime(f^+,f^-)\,:\, \LLL^\prime(\Mi)~\longrightarrow~\LLL^\prime(\Mi)~~,\quad
\varphi_+\oplus \varphi_- ~\longmapsto ~f^+_\ast(\varphi_+) \oplus f^-_\ast(\varphi_-)\quad,
\end{flalign}
where $f^\pm_\ast$ denote the pushforwards of compactly supported functions. Hence, the first
summand in $\LLL^\prime(\Mi)$ transforms under $f^+$ and the second summand transforms under $f^-$.
\sk

Let us also describe the Poisson structure \eqref{eqn:simpletau} from the isomorphic
perspective of $\LLL^\prime(\Mi)$. First, let us
note that the causal propagator for
$P_{\Mi} = 2\,\dd x^-\wedge \dd x^+ \,\partial_-\partial_+ $
acts on a compactly supported $2$-form 
$\omega = \rho \,\dd x^-\wedge\dd x^+\in\Omega^2_\cc(\Mi)$ as
\begin{flalign}\label{eqn:Minkowskicausalprop}
G_\Mi(\omega)(x^+,x^-)\,=\,\frac{1}{4}\int_{\bbR^2} \big(\mathrm{sgn}(x^+-y^+) + \mathrm{sgn}(x^- - y^-)\big)\,\rho(y^+,y^-)\,\dd y^-\wedge \dd y^+ \quad,
\end{flalign}
where $\mathrm{sgn}$ denotes the sign function defined by
$\mathrm{sgn}(x)=1$ for $x>0$, $\mathrm{sgn}(x)=0$ for $x=0$, and 
$\mathrm{sgn}(x)=-1$ for $x<0$. 
Inserting this into \eqref{eqn:simpletau}, one directly checks that
\begin{flalign}
\tau_{\Mi}\big([\alpha^-]\oplus [\alpha^+] , [\beta^-]\oplus[\beta^+]\big)\,=\,
-\frac{1}{2}\int_\bbR\Big( {\pr_{+}}_\ast([\alpha^-]) \,\dd_\bbR {\pr_{+}}_\ast([\beta^-])
+  {\pr_{-}}_\ast([\alpha^+]) \,\dd_\bbR {\pr_{-}}_\ast([\beta^+]) \Big)\quad.
\end{flalign}
Hence, the induced Poisson structure on $\LLL^\prime(\Mi)$ reads as
\begin{flalign}\label{eqn:tauprimeMinkowski}
\tau_{\Mi}^\prime\big(\varphi_+\oplus\varphi_- , \psi_+\oplus\psi_-\big)\,=\,-\frac{1}{2}
\int_\bbR \Big(\varphi_+ \, \dd_\bbR \psi_+ + \varphi_-\,\dd_\bbR\psi_-\Big)\quad.
\end{flalign}

With these preparations, we can now prove the main result of this paragraph.
\begin{propo}\label{propo:chiralMink}
The chiral components ${\pi_\pm}_\ast(\AAA)(\bbR)\in {}^\ast\Alg_{\bbC}$
of the Abelian current are 
\begin{flalign}
{\pi_\pm}_\ast(\AAA)(\bbR)\,=\,\CCR\big(\LLL^\pm(\Mi),\tau_{\Mi}\big)\,\subseteq\, \AAA(\Mi)\quad,
\end{flalign}
where 
\begin{flalign}\label{eqn:pmLLL}
\LLL^+(\Mi)\,:=\,\frac{\Omega^{1,-}_{\cc}(\Mi)}{\dd_\Mi^- C^\infty_\cc(\Mi)}\oplus 0\,
\subseteq \, \LLL(\Mi)~~,\quad
\LLL^-(\Mi)\,:=\,0\oplus \frac{ \Omega^{1,+}_{\cc}(\Mi)}{\dd_\Mi^+ C^\infty_\cc(\Mi)}\,
\subseteq \, \LLL(\Mi)\quad.
\end{flalign}
\end{propo}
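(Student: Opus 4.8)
The plan is to reduce the statement to a computation of invariants and then settle that computation by a support argument; by the chirality-swapping symmetry it suffices to treat $\pi_+$, the case $\pi_-$ being completely analogous. By Construction \ref{con:Ranpi}, and specifically the isomorphism \eqref{eqn:RanpiMi}, the right adjoint evaluates on the line as the subalgebra of invariants ${\pi_+}_\ast(\AAA)(\bbR) \cong \AAA(\Mi)^{\mathrm{inv}_-} \subseteq \AAA(\Mi)$, where the relevant action is that of the embedding monoid $\Embb^+(\bbR)$ of the opposite chirality through the automorphisms $\AAA(\id,k)$ for $k\in\Embb^+(\bbR)$. Hence everything comes down to identifying these invariants for the Abelian current.

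First I would pass to the simplified model. Using the linear isomorphism \eqref{eqn:fiberMinkowski} together with the diagonal form \eqref{eqn:tauprimeMinkowski} of the induced Poisson structure, which I write as $\tau_{\Mi}^\prime = \tau_+\oplus\tau_-$ with $\tau_\pm(\varphi,\psi)=-\tfrac12\int_\bbR \varphi\,\dd_\bbR\psi$, the Minkowski algebra factorizes as a tensor product of CCR algebras
\begin{flalign}
\AAA(\Mi)\,\cong\, \mathcal{A}_+\otimes\mathcal{A}_-\,:=\,\CCR\big(C^\infty_\cc(\bbR),\tau_+\big)\otimes\CCR\big(C^\infty_\cc(\bbR),\tau_-\big)\quad,
\end{flalign}
since the symmetric algebra of a direct sum is the tensor product of symmetric algebras and the Moyal--Weyl star-product associated with a block-diagonal $\tau_{\Mi}^\prime$ does not mix the two factors. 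By the explicit action \eqref{eqn:inducedMinkowski}, the automorphism $\AAA(\id,k)$ acts as the identity on $\mathcal{A}_+$ and by the automorphism induced from the pushforward $k_\ast$ on $\mathcal{A}_-$. Because $\mathcal{A}_+$ is a $\bbC$-vector space, taking invariants commutes with $\mathcal{A}_+\otimes(-)$, and I obtain $\AAA(\Mi)^{\mathrm{inv}_-}\cong \mathcal{A}_+\otimes(\mathcal{A}_-)^{\Embb^+(\bbR)}$.

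The crux is then to show that $(\mathcal{A}_-)^{\Embb^+(\bbR)}=\bbC\,\oone$ consists only of scalars. Here I would use that the action is linear on the generators and hence preserves the symmetric degree, so invariants are graded and may be checked in each degree $n$. In degree $n\geq 1$ I would use the standard injection $\Sym^n\big(C^\infty_\cc(\bbR)\big)\hookrightarrow C^\infty_\cc(\bbR^n)^{S_n}$, under which $\AAA(\id,k)$ acts by pushforward along the diagonal embedding $k^{\times n}$. For a translation $k:x\mapsto x+a$ this merely translates the (compact) support of a symmetric function by $(a,\dots,a)$; invariance under all such translations therefore forces the support to be empty, so every homogeneous invariant of positive degree vanishes. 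This leaves only the degree-zero part $\bbC\,\oone$. This elementary support argument is the step I expect to carry the whole proof, even though it is routine.

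Putting the pieces together, $\AAA(\Mi)^{\mathrm{inv}_-}\cong\mathcal{A}_+$, and under \eqref{eqn:fiberMinkowski} the factor $\mathcal{A}_+$ is precisely the CCR algebra of the subspace $\LLL^+(\Mi)=\tfrac{\Omega^{1,-}_\cc(\Mi)}{\dd_\Mi^- C^\infty_\cc(\Mi)}\oplus 0$ equipped with the restriction of $\tau_\Mi$ (which corresponds to $\tau_+$ via \eqref{eqn:simpletau}), giving ${\pi_+}_\ast(\AAA)(\bbR)=\CCR\big(\LLL^+(\Mi),\tau_\Mi\big)$ as a subalgebra of $\AAA(\Mi)$. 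Swapping the roles of the two chiralities then yields the corresponding statement for $\pi_-$, completing the proof.
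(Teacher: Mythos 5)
Your proposal is correct and follows essentially the same route as the paper's proof: reduce to $\pi_+$, identify ${\pi_+}_\ast(\AAA)(\bbR)$ with the invariants $\AAA(\Mi)^{\mathrm{inv}_-}$ via Construction \ref{con:Ranpi}, pass to $\LLL^\prime(\Mi)$ through \eqref{eqn:fiberMinkowski} to split off the untouched chirality as a tensor factor, and kill the positive-degree invariants of the other factor by a compact-support argument using the translation subgroup of $\Embb^+(\bbR)$. The only cosmetic difference is that you phrase the splitting as a tensor decomposition of CCR algebras, while the paper works with the underlying symmetric algebras (invariants being created by the forgetful functor); both are valid.
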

\begin{proof}
It is sufficient to prove the result for ${\pi_+}_\ast(\AAA)(\bbR)$ 
because ${\pi_-}_\ast(\AAA)(\bbR)$ follows by the same argument
upon swapping the two chiralities.
Recall from Construction \ref{con:Ranpi} that 
${\pi_+}_\ast(\AAA)(\bbR) = \AAA(\Mi)^{\mathrm{inv}_-}\subseteq \AAA(\Mi)$
is computed by taking invariants of $\AAA(\id,k): \AAA(\Mi)\to \AAA(\Mi)$, for all
$k\in\Embb^+(\bbR)$. Recall further that invariants are categorical limits, which are
created through the forgetful functor at the level of the underlying vector spaces.
Passing via the isomorphism \eqref{eqn:fiberMinkowski} 
to the simplified description $\LLL^\prime(\Mi)$, 
we find for the underlying vector spaces that
\begin{flalign}
\AAA(\Mi)^{\mathrm{inv}_-}\,\cong \,\big(\Sym_\bbC\,\LLL^\prime(\Mi)\big)^{\mathrm{inv}_-}\,\cong\, \big(\Sym_\bbC\,C^\infty_\cc(\bbR)\big)\otimes_\bbC 
\big(\Sym_\bbC\,C^\infty_\cc(\bbR)\big)^{\mathrm{inv}_-}\quad,
\end{flalign}
where in the last step we have used that due to \eqref{eqn:inducedMinkowski}
the morphisms $\AAA(\id,k)$ only act non-trivially on the second summand of $\LLL^\prime(\Mi)$ 
and that the tensor product $\otimes_\bbC$ is exact, hence it commutes with forming invariants.
One easily checks that $\big(\Sym_\bbC\,C^\infty_\cc(\bbR)\big)^{\mathrm{inv}_-}\cong\bbC$,
for which it is sufficient to consider 
the subgroup of translations $(\id,b): (x^+,x^-)\mapsto (x^+,x^- + b)$, for all $b\in\bbR$.
In more detail, any element $a\in \Sym_\bbC\,C^\infty_\cc(\bbR)$ can be represented
as a finite sum $a=\sum_{n=0}^N a_n$, where $a_n\in C_\cc^\infty(\bbR^n,\bbC)$ is a
compactly supported complex-valued function on the product manifold $\bbR^n$. (Note that
$a_0$ is a function on the point $\bbR^0=\mathrm{pt}$, which 
is the same datum as a constant $a_0\in\bbC$.) Such
$a$ is invariant under the diagonal action of translations if and only if all $a_n$ are invariant.
Due to the compact supports, this is the case if and only if $a_n=0$ for all $n\geq 1$, 
which proves the claim that $\big(\Sym_\bbC\,C^\infty_\cc(\bbR)\big)^{\mathrm{inv}_-}\cong\bbC$.
\sk

Summing up, we find that, from the isomorphic perspective $\LLL^\prime(\Mi)=
C^\infty_\cc(\bbR)\oplus C^\infty_{\cc}(\bbR)$, the algebra 
$\AAA(\Mi)^{\mathrm{inv}_-}$ is generated by the subspace 
$C^\infty_\cc(\bbR)\oplus 0 \subseteq \LLL^\prime(\Mi)$. Under the isomorphism
\eqref{eqn:fiberMinkowski}, this gives 
the subspace $\LLL^+(\Mi)\subseteq \LLL(\Mi)$ defined in \eqref{eqn:pmLLL}, which completes the proof. 
\end{proof}

\begin{rem}
The chiral components ${\pi_\pm}_\ast(\AAA)(\bbR)\in {}^\ast\Alg_{\bbC}$ 
from Proposition \ref{propo:chiralMink} coincide with the usual chiral currents
on $\bbR$. Indeed, using again the isomorphism \eqref{eqn:fiberMinkowski},
we find
\begin{flalign}
{\pi_\pm}_\ast(\AAA)(\bbR)\,\cong\, \CCR\big(C^\infty_\cc(\bbR), \tau_\bbR\big)\quad,
\end{flalign}
where by \eqref{eqn:tauprimeMinkowski} the Poisson structure reads as
\begin{flalign}
\tau_\bbR(\varphi,\psi) \,=\,-\frac{1}{2}\int_\bbR \varphi\,\dd_\bbR \psi\quad,
\end{flalign}
for all $\varphi,\psi\in C^\infty_\cc(\bbR)$.
\end{rem}

\paragraph{Chiralization on the flat cylinder:} The case of the flat cylinder
$M=\Cy$ is broadly similar to the Minkowski spacetime. For completeness, we shall
spell out the relevant details. We work again in light-cone coordinates
$x^\pm$, which in the cylinder case are subject to 
the identification $(x^+ + 1,x^- -1)\sim (x^+,x^-)$ arising from 
the quotient by the $\bbZ$-action. The (anti-)self-dual
$1$-forms are given by $\Omega^{1,\pm}_{\cc}(\Cy) = C^\infty_\cc(\Cy)\,\dd x^\pm$.
The Minkowski projection maps $\pi_{\pm} : \Mi\to \bbR$
are $\bbZ$-equivariant, hence they define fiber bundles
$\pi_{\pm} : \Cy\to \bbT=\bbR/\bbZ$ over the circle whose fibers 
are the $\pm$-light rays in $\Cy$. (In particular, the fibers are diffeomorphic
to $\bbR$.) Using again fiber integrations, we obtain a linear isomorphism
\begin{flalign}\label{eqn:fibercylinder}
{\pr_{+}}_\ast \oplus {\pr_{-}}_\ast\,:\,\LLL(\Cy) = 
\frac{\Omega^{1,-}_{\cc}(\Cy)}{\dd_{\Cy}^- C^\infty_\cc(\Cy)}\oplus
\frac{\Omega^{1,+}_{\cc}(\Cy)}{\dd_{\Cy}^+ C^\infty_\cc(\Cy)}~\stackrel{\cong}{\longrightarrow}
~ C^\infty(\bbT)\oplus C^\infty(\bbT)\,=:\,\LLL^\prime(\Cy)\quad.
\end{flalign}
Injectivity follows from the Poincar\'e lemma for compact 
vertical supports, see \cite[Proposition 6.16]{BottTu}. 
Let us spell out the proof that the ${\pr_{\pm}}_\ast$ are indeed surjective maps.
It is sufficient to consider the case ${\pr_{+}}_\ast$ since the case ${\pr_{-}}_\ast$ follows by 
a similar argument. Let $\varphi\in C^\infty(\bbT)$ be any smooth function, which we regard
as a $\bbZ$-invariant function $\varphi\in C^\infty(\bbR)^\bbZ$ on the real line.
Take any compactly supported function $\rho\in C^\infty_\cc(\bbR)$ such that $\int_\bbR \rho(z) \,\dd z =1$
and define the anti-self-dual $1$-form
$\alpha := \rho(x^- + x^+)\,\varphi(x^+)\, \dd x^- \in \Omega^{1,-}(\Mi)$ on Minkowski spacetime.
Note that $\alpha$ is invariant under the $\bbZ$-action $(x^+,x^-)\mapsto (x^+ +n, x^- -n)$
and that its support is time-like compact. Hence, it descends to a compactly supported
anti-self-dual $1$-form $\alpha \in \Omega^{1,-}_{\cc}(\Cy)$ on the cylinder.
Applying fiber integration we find
\begin{flalign}
{\pr_{+}}_\ast(\alpha)(x^+)\,=\,\int_\bbR \rho(x^- + x^+)\,\varphi(x^+)\, \dd x^-
\,=\,\varphi(x^+)\int_\bbR\rho(z)\,\dd z \,=\, \varphi(x^+)\quad,
\end{flalign}
where in the second step we have changed the integration variable according to $z:= x^- + x^+$.
This proves surjectivity of ${\pr_{+}}_\ast$.
The induced action of the endomorphisms $\Hom_{\CLoc_2}(\Cy,\Cy)\cong \Diff^+(\bbT)^2$ on the 
isomorphic vector space $\LLL^\prime(\Cy)$ reads as
\begin{flalign}\label{eqn:inducedcylinder}
\LLL^\prime(g^+,g^-)\,:\, \LLL^\prime(\Cy)~\longrightarrow~\LLL^\prime(\Cy)~~,\quad
\varphi_+\oplus \varphi_- ~\longmapsto ~g^+_\ast(\varphi_+) \oplus g^-_\ast(\varphi_-)\quad,
\end{flalign}
for all $(g^+,g^-)\in \Diff^+(\bbT)^2$. 
\sk

We will now show that, from the isomorphic perspective $\LLL^\prime(\Cy)$, 
the Poisson structure \eqref{eqn:simpletau} is given by
\begin{flalign}\label{eqn:tauprimecylinder}
\tau_{\Cy}^\prime\big(\varphi_+\oplus\varphi_- , \psi_+\oplus\psi_-\big)\,=\,-\frac{1}{2}
\int_\bbT \Big(\varphi_+ \, \dd_\bbT \psi_+ + \varphi_-\,\dd_\bbT\psi_-\Big)\quad.
\end{flalign}
To prove this claim, we use a convenient description
of the causal propagator $G_{\Cy}$ 
on the flat cylinder that is known as the `method of images', see e.g.\ 
\cite[Appendix A]{CRV} for more details. Any compactly supported
$2$-form $\omega\in \Omega^2_\cc(\Cy)$ on the cylinder 
can be regarded as a time-like compactly supported
and $\bbZ$-invariant $2$-form $\omega \in \Omega^2_{\tc}(\Mi)^\bbZ$ on the Minkowski spacetime.
Due to the support properties of its integral kernel,
the application $G_\Mi(\omega)$ of the Minkowski causal propagator \eqref{eqn:Minkowskicausalprop}
on such $\omega$ is well-defined and one directly checks that the result is $\bbZ$-invariant,
hence it defines a function on the cylinder which coincides with $G_{\Cy}(\omega)\in C^\infty(\Cy)$.
The proof that \eqref{eqn:simpletau} induces \eqref{eqn:tauprimecylinder} is then analogous to
the case of the Minkowski spacetime.
\sk

In order to state and prove the main result of this paragraph,
we shall need one more ingredient. Observe that there exists
an injective linear map
\begin{flalign}
H^1_\cc(\Cy)\oplus H^1_\cc(\Cy)~\longrightarrow~\LLL(\Cy)~~,\quad
[\alpha]\oplus [\beta] ~\longmapsto~ [\alpha + \ast_{\Cy}\beta]
\end{flalign}
that embeds two copies of the compactly supported first
de Rham cohomology into the linear observables $\LLL(\Cy)$.
Decomposing the image of this map with respect to the direct sum
decomposition in \eqref{eqn:simpleLLL}, we obtain a linear subspace
that we denote by
\begin{flalign}\label{eqn:asddR}
H^{1,-}_{\cc}(\Cy) \oplus H^{1,+}_{\cc}(\Cy)\,\subseteq\, \LLL(\Cy)\quad.
\end{flalign}
Note that both $H^{1,\pm}_{\cc}(\Cy)\cong \bbR$ are $1$-dimensional
and an explicit choice of basis $[\zeta_\pm]\in H^{1,\pm}_{\cc}(\Cy)$ 
is given by the representative $\zeta_\pm (x^+,x^-) = \rho(x^+ + x^-) \, \dd x^\pm$,
where $\rho\in C^\infty_\cc(\bbR)$ is any compactly supported function satisfying
$\int_\bbR\rho(z)\,\dd z =1$.
\begin{propo}\label{propo:chiralcyl}
The chiral components ${\pi_\pm}_\ast(\AAA)(\bbT)\in {}^\ast\Alg_{\bbC}$
of the Abelian current are 
\begin{flalign}
{\pi_\pm}_\ast(\AAA)(\bbT)\,=\,\CCR\big(\LLL^\pm(\Cy),\tau_{\Cy}\big)\,\subseteq\, \AAA(\Cy)\quad,
\end{flalign}
where 
\begin{subequations}\label{eqn:pmLLLcylinder}
\begin{flalign}
\LLL^+(\Cy)\,&:=\,\frac{\Omega^{1,-}_{\cc}(\Cy)}{\dd_{\Cy}^- C^\infty_\cc(\Cy)}\oplus H^{1,+}_{\cc}(\Cy)\,
\subseteq \, \LLL(\Cy)~~,\quad\\
\LLL^-(\Cy)\,&:=\,H^{1,-}_{\cc}(\Cy)\oplus \frac{\Omega^{1,+}_{\cc}(\Cy)}{\dd_{\Cy}^+ C^\infty_\cc(\Cy)}\,
\subseteq \, \LLL(\Cy)\quad.
\end{flalign}
\end{subequations}
\end{propo}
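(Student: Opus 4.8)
The plan is to mirror the proof of Proposition~\ref{propo:chiralMink}, the one essential new feature being that the circle $\bbT$, unlike the line $\bbR$, carries a nonzero diffeomorphism-invariant function. By the symmetry between the two chiralities it suffices to treat ${\pi_+}_\ast(\AAA)(\bbT)$. Recall from Construction~\ref{con:Ranpi} that this is the subalgebra $\AAA(\Cy)^{\mathrm{inv}_-}\subseteq\AAA(\Cy)$ of invariants for the automorphisms $\AAA(\id,g)$, $g\in\Diff^+(\bbT)$. Since invariants are limits, created by the forgetful functor on underlying vector spaces, I would first pass through the isomorphism \eqref{eqn:fibercylinder} to the simplified model $\LLL^\prime(\Cy)=C^\infty(\bbT)\oplus C^\infty(\bbT)$. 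By the induced action \eqref{eqn:inducedcylinder}, each $\AAA(\id,g)$ acts as the identity on the first summand and by the pushforward $g_\ast$ on the second. Using $\Sym_\bbC(V\oplus W)\cong\Sym_\bbC V\otimes_\bbC\Sym_\bbC W$ together with exactness of $\otimes_\bbC$ (exactly as in the Minkowski case), this reduces the whole computation to the invariants $(\Sym_\bbC C^\infty(\bbT))^{\mathrm{inv}_-}$ of the second tensor factor.

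The crux is to show that this invariant subalgebra is precisely $\Sym_\bbC(\bbC\cdot 1)$, the polynomial algebra generated by the constant function $1\in C^\infty(\bbT)$. This is the point at which the argument departs from the Minkowski case, where $(\Sym_\bbC C^\infty_\cc(\bbR))^{\mathrm{inv}_-}$ collapsed to $\bbC=\Sym^0$ because a compactly supported function on $\bbR$ cannot be translation invariant; on $\bbT$ the constant $1$ is a genuine nonzero invariant, and it is this that produces the extra cohomology summand. I would argue degree by degree: an element of $\Sym^n_\bbC C^\infty(\bbT)$ is a symmetric smooth function $\psi$ on $\bbT^n$, and $\mathrm{inv}_-$ means $\psi(g^{-1}x_1,\dots,g^{-1}x_n)=\psi(x_1,\dots,x_n)$ for all $g\in\Diff^+(\bbT)$. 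The key geometric input is that $\Diff^+(\bbT)$ acts transitively on each connected component of the space of $n$ pairwise distinct points (a component being a fixed cyclic order of the arguments), so $\psi$ is locally constant off the diagonals; a continuity argument across the collision strata, sliding one argument past another, then forces the locally constant values on adjacent components to coincide, so that $\psi$ is globally constant. Hence $(\Sym^n_\bbC C^\infty(\bbT))^{\mathrm{inv}_-}=\bbC$ for every $n$. I expect this invariant-theoretic step to be the main obstacle: one must genuinely use the full group $\Diff^+(\bbT)$, since the rotation subgroup alone already admits far more invariants (e.g.\ smooth functions of the angular differences), and the matching across the diagonals needs a little care.

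It then remains to transport this back through \eqref{eqn:fibercylinder}. The first summand $C^\infty(\bbT)$ corresponds to $\Omega^{1,-}_\cc(\Cy)/\dd_{\Cy}^- C^\infty_\cc(\Cy)$, while the invariant constant $1$ in the second summand is exactly ${\pr_-}_\ast(\zeta_+)$ for the representative $\zeta_+(x^+,x^-)=\rho(x^++x^-)\,\dd x^+$, since $\int_\bbR\rho=1$; thus it corresponds to the generator $[\zeta_+]$ of $H^{1,+}_\cc(\Cy)$. Therefore $\AAA(\Cy)^{\mathrm{inv}_-}$ is the subalgebra of $\AAA(\Cy)$ generated by the subspace $\LLL^+(\Cy)=\Omega^{1,-}_\cc(\Cy)/\dd_{\Cy}^- C^\infty_\cc(\Cy)\oplus H^{1,+}_\cc(\Cy)\subseteq\LLL(\Cy)$ of \eqref{eqn:pmLLLcylinder}. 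As the star product $\star_\Cy$ restricts to $\Sym_\bbC\LLL^+(\Cy)$ (the Poisson structure $\tau_\Cy$ pairs elements of $\LLL^+(\Cy)$ into scalars), this subalgebra is the CCR algebra $\CCR(\LLL^+(\Cy),\tau_\Cy)$, which is the claimed identity; the case of $\pi_-$ follows by swapping the two chiralities.
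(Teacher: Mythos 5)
Your proposal is correct, and its overall architecture coincides with the paper's: reduce via the fiber\-/integration isomorphism \eqref{eqn:fibercylinder} and the decomposition $\Sym_\bbC(V\oplus W)\cong \Sym_\bbC V\otimes_\bbC\Sym_\bbC W$ (with exactness of $\otimes_\bbC$) to computing $\big(\Sym_\bbC C^\infty(\bbT)\big)^{\mathrm{inv}_-}$, show that this equals $\Sym_\bbC\,\bbR$ with $\bbR\subseteq C^\infty(\bbT)$ the constants, and transport the surviving generator $1={\pr_-}_\ast(\zeta_+)$ back to $[\zeta_+]\in H^{1,+}_{\cc}(\Cy)$. The one step where you take a genuinely different route is the invariant\-/theoretic core, i.e.\ the proof that a $\Diff^+(\bbT)$-invariant $a_n\in C^\infty(\bbT^n,\bbC)$ is constant. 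The paper argues locally: it restricts $a_n$ to a product of small arcs around an arbitrary point, transports to $\bbR^n$, and uses invariance under the diagonal dilations $x_i\mapsto\lambda x_i$ to deduce local constancy. You argue globally, via transitivity of the diagonal $\Diff^+(\bbT)$-action on each connected component of the ordered configuration space of $n$ distinct points (components being cyclic orders), which makes $a_n$ constant on the dense complement of the large diagonal; continuity across the collision strata --- or, even more simply, the symmetry of $a_n\in\Sym^n$, since all components lie in a single $S_n$-orbit --- then identifies the values on the different components. Both arguments are sound. Yours makes transparent exactly where the full diffeomorphism group (rather than, say, rotations) is needed and avoids the paper's slightly delicate step of realizing a conjugated diffeomorphism of several arcs as the restriction of a single element of $\Diff^+(\bbT)$; the paper's dilation trick is shorter and requires no discussion of configuration\-/space topology. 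Your concluding observations --- the identification of the constant with the generator of $H^{1,+}_{\cc}(\Cy)$ and the fact that the Moyal--Weyl product restricts to $\Sym_\bbC\,\LLL^+(\Cy)$ so that the invariant subalgebra is precisely $\CCR(\LLL^+(\Cy),\tau_\Cy)$ --- match the paper.
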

\begin{proof}
It is sufficient to prove the result for ${\pi_+}_\ast(\AAA)(\bbT)$ 
because ${\pi_-}_\ast(\AAA)(\bbT)$ follows by the same argument
upon swapping the two chiralities. Arguing in complete analogy to
the proof of Proposition \ref{propo:chiralMink}, we 
find for the underlying vector spaces that
\begin{flalign}
\AAA(\Cy)^{\mathrm{inv}_-}\,\cong \,\big(\Sym_\bbC\,\LLL^\prime(\Cy)\big)^{\mathrm{inv}_-}\,\cong\, \big(\Sym_\bbC\,C^\infty(\bbT)\big)\otimes_\bbC 
\big(\Sym_\bbC\,C^\infty(\bbT)\big)^{\mathrm{inv}_-}\quad.
\end{flalign}
Denoting by $\bbR\subseteq C^\infty(\bbT)$ the subspace of the constant functions,
we have an inclusion $\Sym_\bbC\, \bbR\subseteq \big(\Sym_\bbC\,C^\infty(\bbT)\big)^{\mathrm{inv}_-}$
because each constant function is diffeomorphism invariant. Let us show that 
$\Sym_\bbC\, \bbR = \big(\Sym_\bbC\,C^\infty(\bbT)\big)^{\mathrm{inv}_-}$ are equal. 
Any invariant element $a\in \big(\Sym_\bbC\,C^\infty(\bbT)\big)^{\mathrm{inv}_-}$ 
can be represented as a finite sum $a=\sum_{n=0}^N a_n$, where $a_n\in C^\infty(\bbT^n,\bbC)$
is a complex-valued function on the $n$-torus $\bbT^n$ that is invariant 
under the diagonal $\Diff^+(\bbT)$-action. (Note that
$a_0$ is a function on the point $\bbT^0=\mathrm{pt}$, which is the same datum as a constant $a_0\in\bbC$.)
We now claim that each $a_n$ is constant, which can be proven locally by restricting
to a sufficiently small open neighborhood of an arbitrary 
point $(p_1,\dots,p_n)\in \bbT^n$. For this we consider
the open subset $U=\prod_{i=1}^n(p_i-\tfrac{1}{4},p_i+\tfrac{1}{4})\subseteq \bbT^n$
and pick any diagonal diffeomorphism $\bbR^n\stackrel{\cong}{\longrightarrow} U$. 
By restriction and pullback,
we obtain a function $\tilde{a}_n\in C^\infty(\bbR^n,\bbC)$ that is invariant
under the diagonal action of $\Diff^+(\bbR)$. In particular, $\tilde{a}_n(\lambda\, x_1,\dots,\lambda x_n)=
\tilde{a}_n(x_1,\dots,x_n)$ for all dilations $\lambda\in \bbR_{\geq 0}$, which implies that
$\tilde{a}_n$ is constant. This implies that $a_n$ is locally constant around any point, 
hence $a_n$ is constant.
\sk

Summing up, we find that, from the isomorphic perspective $\LLL^\prime(\Cy)=
C^\infty(\bbT)\oplus C^\infty(\bbT)$, the algebra 
$\AAA(\Cy)^{\mathrm{inv}_-}$ is generated by the subspace 
$C^\infty(\bbT)\oplus \bbR \subseteq \LLL^\prime(\Cy)$. Under the isomorphism
\eqref{eqn:fibercylinder}, this gives 
the subspace $\LLL^+(\Cy)\subseteq \LLL(\Cy)$ defined in \eqref{eqn:pmLLLcylinder}, 
which completes the proof. 
\end{proof}

\begin{rem}
The chiral components ${\pi_\pm}_\ast(\AAA)(\bbT)\in {}^\ast\Alg_{\bbC}$ 
from Proposition \ref{propo:chiralcyl} coincide with a tensor product
of the usual chiral currents on $\bbT$ and a commutative algebra. 
Indeed, using again the isomorphism \eqref{eqn:fibercylinder}
and the observation that the Poisson structure
\eqref{eqn:tauprimecylinder} acts trivially on constant
functions, we find
\begin{flalign}
{\pi_\pm}_\ast(\AAA)(\bbT)\,\cong\, \CCR\big(C^\infty(\bbT),\tau_{\bbT}\big)\otimes_\bbC \Sym_{\bbC}\,\bbR\quad,
\end{flalign}
where
\begin{flalign}
\tau_\bbT(\varphi,\psi) \,=\,-\frac{1}{2}\int_\bbT \varphi\,\dd_\bbT \psi\quad,
\end{flalign}
for all $\varphi,\psi\in C^\infty(\bbT)$. Recalling that the
vector space $\bbR\cong H^{1,\pm}_\cc(\Cy)$ arises as the 
(anti-)self-dual compactly supported cohomology \eqref{eqn:asddR},
the algebra  $ \Sym_{\bbC}\,\bbR$ admits an interpretation as topological 
observables associated with the opposite chirality. Such topological observables
are in particular diffeomorphism invariant, hence they survive our chiralization 
construction that is implemented by taking diffeomorphism invariants.
\end{rem}


\section*{Acknowledgments}
We would like to thank Sebastiano Carpi and Roberto Longo 
for pointing out references concerning chiral observables in the local conformal net setting.
L.G.\ is supported by the European Union's Horizon 2020 research and innovation 
programme H2020-MSCA-IF-2017 under Grant Agreement 795151 
\emph{Beyond Rationality in Algebraic CFT: mathematical structures and models} 
and by the MIUR Excellence Department Project awarded to the Department of 
Mathematics, University of Rome Tor Vergata, CUP E83C18000100006.
A.S.\ gratefully acknowledges the support of 
the Royal Society (UK) through a Royal Society University 
Research Fellowship (UF150099 and URF\textbackslash R\textbackslash 211015), 
a Research Grant (RG160517) and 
two Enhancement Awards (RGF\textbackslash EA\textbackslash 180270 
and RGF\textbackslash EA\textbackslash 201051). 

\section*{Data availability statement}
All data generated or analysed during this study are contained in this document.


\appendix

\section{\label{app:initial}The functor \eqref{eqn:Embbinitial} is initial}
Let us recall from \cite[Section IX.3]{Mac} that a functor $F:\CC\to\DD$ is called {\em initial}
if the over category $F\!\downarrow\! d$ is non-empty and connected, for every object $d\in \DD$.
The relevance of initial functors is that they can often simplify 
the computation of limits: Let $X : \DD\to \EE$ be a $\DD$-shaped diagram
in a complete category $\EE$ and $F : \CC\to\DD$ an initial functor.
Then the canonical comparison morphism
\begin{flalign}
\lim\Big(\xymatrix@C=1.5em{
\DD \ar[r]^-{X}~&~\EE
}\Big)~\longrightarrow~
\lim\Big(\xymatrix@C=1.5em{
\CC \ar[r]^-{F}~&~\DD \ar[r]^-{X}~&~\EE
}\Big)
\end{flalign}
between the limits is an isomorphism.
\sk

The aim of this appendix is to prove that the
functor \eqref{eqn:Embbinitial}, which we will denote here by
$\iota : \BB\Embb^+(\bbR)\to \bbR\!\downarrow\!\pi_+$, is initial. 
Recalling the form of the category $\bbR\!\downarrow\!\pi_+$ in \eqref{eqn:commaR},
we have to show that $\iota\!\downarrow\! h$ is non-empty and connected, for all $h\in \Embb^+(\bbR)$,
and also that $\iota\!\downarrow\! [h]$ is non-empty and connected, 
for all $[h]\in\Embb^{+,\leq 1}(\bbR)/\bbZ$.
Let us start with the first case, which is simpler.
The relevant over category reads as
\begin{flalign}
\iota\!\downarrow\! h\,\simeq\, \begin{cases}
\mathrm{Obj:}& (h,f^-) : \Mi\to\Mi\\
\mathrm{Mor:}& \xymatrix@C=1em@R=1em{
\ar[dr]_-{(h,f^-\,k)}\Mi\ar[rr]^-{(\id,k)} &&\Mi\ar[dl]^-{(h,f^-)}\\
&\Mi &
}
\end{cases}
\end{flalign}
which is clearly non-empty. Furthermore, any two objects $(h,f^-)$ and $(h,f^{\prime -})$
are connected via the zig-zag
\begin{flalign}
\xymatrix@C=3em@R=2em{
\ar[dr]_-{(h,f^-)}\Mi \ar[r]^-{(\id,f^-)}& \ar[d]|-{(h,\id)}\Mi & \ar[l]_-{(\id,f^{\prime -})}\Mi\ar[dl]^-{(h,f^{\prime -})}\\
& \Mi &
}
\end{flalign}
(We would like to note that a similar argument can be used 
to prove that the functor \eqref{eqn:Diffinitial} is initial.)
\sk

In the second case $[h]\in\Embb^{+,\leq 1}(\bbR)/\bbZ$, the over category reads as
\begin{flalign}
\iota\!\downarrow\! [h]\,\simeq\, \begin{cases}
\mathrm{Obj:}& [h,f^-] : \Mi\to\Cy\\
\mathrm{Mor:}& \xymatrix@C=1em@R=1em{
\ar[dr]_-{[h,f^-\,k]}\Mi\ar[rr]^-{(\id,k)} &&\Mi\ar[dl]^-{[h,f^-]}\\
&\Cy &
}
\end{cases}
\end{flalign}
which is clearly non-empty. Let us first prove that any two objects
$[h,f^-]: \Mi\to\Cy$ and $[h,f^{\prime -}]:\Mi\to\Cy$ whose images
in $\Cy$ intersect non-trivially, i.e.\ $[h,f^-](\Mi)\cap [h,f^{\prime-}](\Mi) = V\subseteq \Cy$
with $V\neq \emptyset$, are connected. In this case one can find
a morphism $(\id,k) :\Mi\to\Mi$ such that $[h,f^-\,k]:\Mi\to\Cy$
maps surjectively onto $V\subseteq \Cy$, and a morphism $(\id,k^\prime):\Mi\to\Mi$
such that  $[h,f^{\prime -}\,k^\prime]:\Mi\to\Cy$ maps surjectively onto $V\subseteq \Cy$.
The two objects $[h,f^-]$ and $[h,f^{\prime -}]$ are then connected via the zig-zag
\begin{flalign}\label{eqn:elementaryzigzag}
\xymatrix@C=3em@R=2em{
\ar[drr]_-{[h,f^-]}\Mi & \ar[l]_-{(\id,k)} \ar[dr]|-{[h,f^{-}\,k]}\Mi \ar@{-->}[rr]^-{(\id,\tilde{k})}&&\Mi \ar[dl]|-{[h,f^{\prime -}\,k^\prime]}\ar[r]^-{(\id,k^\prime)}& 
\Mi\ar[dll]^-{[h,f^{\prime -}]}\\
& &\Cy & &
}
\end{flalign}
The dashed morphism is defined by $(\id,\tilde{k}) : \xymatrix@C=4em{
\Mi \ar[r]^-{[h,f^{-}\,k]} & V\ar[r]^-{[h,f^{\prime-}\,k^\prime]^{-1}} & \Mi}$,
where we use that the co-restriction of $[h,f^{\prime-}\,k^\prime]:\Mi\to\Cy$ 
onto its image $V$ is an isomorphism.
\sk

Given any two (not necessarily intersecting) objects $[h,f^{-}]$ and $[h,f^{\prime -}]$,
one can find a finite family $\big\{ [h,f^{-}_i]: \Mi\to \Cy\big\}_{i=0}^N$ of objects, 
with $[h,f^{-}_0] = [h,f^{-}]$ and $[h,f^{-}_N] = [h,f^{\prime -}]$,
such that every two neighboring objects intersect non-trivially, 
i.e.\ $[h,f^{-}_i](\Mi)\cap [h,f^{-}_{i+1}](\Mi)
\neq \emptyset$ for all $i=0,\dots, N-1$. Applying the construction in \eqref{eqn:elementaryzigzag}
to each intersection yields a chain of zig-zags that connects $[h,f^{-}]$ and $[h,f^{\prime -}]$.
This completes the proof that the functor \eqref{eqn:Embbinitial} is initial.


\end{document}